\newtheorem{remark}{Remark}
\newtheorem{theorem}{Theorem}
\newtheorem{lemma}{Lemma}
\newtheorem{corollary}{Corollary}
\newtheorem{proposition}{Proposition}
\title{STARS Enabled Integrated Sensing and Communications}
\author{

        Zhaolin~Wang,~\IEEEmembership{Graduate Student Member,~IEEE,}
        Xidong~Mu,~\IEEEmembership{Member,~IEEE,} \\
        and Yuanwei~Liu,~\IEEEmembership{Senior Member,~IEEE}\\
\thanks{Part of this paper has been presented at the 2022 IEEE 96th Vehicular Technology Conference: VTC2022-Fall, London and Beijing, Sep. 26-29, 2022 \cite{wang2022star}}
\thanks{The authors are with the School of Electronic Engineering
and Computer Science, Queen Mary University of London, London E1 4NS,
U.K. (e-mail: zhaolin.wang@qmul.ac.uk, xidong.mu@qmul.ac.uk, yuanwei.liu@qmul.ac.uk).}
\vspace*{-0.5cm}
}
\begin{document}

\maketitle
\vspace{-1.5cm}
\begin{abstract}
    A simultaneously transmitting and reflecting surface (STARS) enabled integrated sensing and communications (ISAC) framework is proposed, where the entire space is partitioned by STARS into a sensing space and a communication space. A novel sensing-at-STARS structure is proposed, where dedicated sensors are mounted at STARS to address the significant path loss and clutter interference of sensing. The Cram{\'e}r-Rao bound (CRB) of the two-dimensional (2D) direction-of-arrivals (DOAs) estimation of the sensing target is derived, which is then minimized subject to the minimum communication requirement. A novel approach is proposed to transform the complicated CRB minimization problem into a trackable modified Fisher information matrix (FIM) optimization problem. Both independent and coupled phase-shift models of STARS are investigated: 1) For the independent phase-shift model, to address the coupling problem of ISAC waveform and STARS coefficient, an efficient double-loop iterative algorithm based on the penalty dual decomposition (PDD) framework is conceived; 2) For the coupled phase-shift model, based on the PDD framework, a low complexity alternating optimization algorithm is proposed to tackle the coupled phase-shift constraint by alternately optimizing the amplitude and phase-shift coefficients of STARS with closed-form expressions. Finally, the numerical results demonstrate that: 1) STARS significantly outperforms conventional RIS in terms of CRB under the communication constraints; 2) The coupled phase-shift model achieves comparable performance to the independent one for low communication requirements or sufficient STARS elements; 3) It is more efficient to increase the number of passive elements of STARS than the active elements of the sensor; 4) Higher sensing accuracy can be achieved by STARS using the practical 2D maximum likelihood estimator compared with the conventional RIS.
\end{abstract}

\begin{IEEEkeywords}
    {C}ram{\'e}r-Rao bound, integrated sensing and communication (ISAC), simultaneously transmitting and reflecting intelligent surface (STARS).
\end{IEEEkeywords}

\section{Introduction}

Integrated sensing and communications (ISAC) technique has been recognized as an important enabler of next-generation wireless networks \cite{liu2022integrated}. The goal of ISAC is to integrate sensing function and communication function in a single platform to share the same resources, hardware facilities, and signal-processing modules. In addition to obvious advantages such as high spectrum, energy, and hardware efficiency, ISAC also enables wireless networks to be aware of the surrounding environment, thereby establishing ubiquitous intelligence in the future smart world \cite{zhang2021enabling}. This presents exciting opportunities for developing environment-aware technologies, including but not limited to augmented reality (AR), virtual reality (VR), and vehicle-to-everything (V2X). In addition, by replicating the physical world through sensing and exchanging information through communication, ISAC can connect the virtual world and the physical world, and build the foundation of the Metaverse \cite{xu2022full}.

Reconfigurable intelligent surface (RIS) is another promising technique to enable smart radio environment for next-generation wireless networks \cite{liu2021reconfigurable}. On the one hand, RIS is able to dynamically adjust its passive beamforming to enhance communication performance. On the other hand, RIS can also establish a line-of-sight (LoS) link with the target to facilitate sensing performance. However, the conventional transmitting/reflecting-only RIS requires sensing targets or communication users to be located on the same side of the RIS as the base station (BS), which can only achieve half-space coverage. To address this challenge, a promising simultaneous transmitting and reflecting surface (STARS) was recently proposed to realize $360^\circ$ full-space coverage \cite{liu2021star}. As a consequence, STARS is capable of providing new degrees of freedom (DoFs), i.e., both transmission and reflection beamforming \cite{mu2021simultaneously}, for enhancing both sensing and communication performance.

\subsection{Prior Works}

In recent years, there have been growing research interests in the ISAC from various perspectives. With the rapid development of multiple-input multiple-out (MIMO) techniques, multi-antenna arrays can provide higher DoFs to construct highly directional beams pointing to communication users and sensing targets, which motivates many works to investigate ISAC from the perspective of transmit beamforming \cite{liu2018mu, liu2020joint, pritzker2022transmit, zhang2022holographic}. Specifically, the authors of \cite{liu2018mu} jointly designed the transmit beamforming to approach the desired sensing beampattern while guaranteeing the minimum communication signal-to-interference-plus-noise-ratio (SINR) requirement. As a further advance, authors of \cite{liu2020joint} proposed to introduce an additional dedicated sensing signal to compensate for the DoF degradation caused by the limited number of communication users. To enhance the design flexibility between communication and sensing subsystems, several transmit beamforming designs were proposed in \cite{pritzker2022transmit} for guaranteeing the user-prescribed sensing and communication performance levels. Furthermore, in \cite{zhang2022holographic}, an ISAC system based on metamaterial antennas was conceived, which is capable of supporting ultra-dense radiation elements and realizing holographic beamforming.

Although the aforementioned transmit beamforming designs are capable of realizing a favorable tradeoff between sensing and communication, the sensing performance is not fully considered. It should be noted that the sensing function is carried out by transmitting a signal to the target and then analyzing the received echo signal reflected by the target, which indicates that the sensing performance is mainly characterized at the receiver, and it is not enough to only consider the beamforming of the transmitter. As a remedy, by considering the echo signals, some works adopted general sensing SINR \cite{chen2022generalized} or general sensing mutual information (MI) \cite{ouyang2022performance} as the performance metric, leading to joint transceiver designs. Meanwhile, from the perspective of estimation accuracy, the fundamental Cram{\'e}r-Rao bound (CRB), which characterizes the minimum achievable variance of the unbiased estimators at the receiver, has been considered in some recent works \cite{liu2021cramer, hua2022mimo}. In particular, the optimization framework for minimizing the CRB of the point target and the extended target was proposed in \cite{liu2021cramer}. Furthermore, the authors of \cite{hua2022mimo} investigated the optimal rate-CRB tradeoff region in ISAC systems.

Recently, motivated by the capability of RIS to adjust the signal propagation environment and significantly improve the performance of wireless communication \cite{huang2019reconfigurable, wu2019intelligent, yu2021irs}, RIS-assisted wireless sensing technique has received growing attention \cite{zhang2022metaradar, shao2022target, song2022intelligent, esmaeilbeig2022cramer}. In \cite{zhang2022metaradar}, the authors employed RIS to facilitate multi-target detection by jointly optimizing the sensing waveform and the RIS phase shifts. However, since the paths between BS, RIS, and targets have long round-trip distances and multiple hops, the performance of the RIS-assisted sensing system is limited by the significant path loss. Hence, the authors of \cite{shao2022target} proposed a RIS-self-sensing scheme, where the probing signal is sent by the co-located RIS controller and the echo signal is analyzed at the low-cost active sensors installed on the RIS. Furthermore, the authors of \cite{song2022intelligent} studied the RIS-assisted sensing from the CRB perspective, where the CRB for estimating the azimuth direction-of-arrival (DOA) of one target is derived and then minimized by jointly designing active and passive beamforming. As a further advance, a multi-RIS-aided sensing framework was proposed in \cite{esmaeilbeig2022cramer}, where the CRB of the estimated Doppler phase shift is minimized. There also have been extensive research contributions to RIS-assisted ISAC techniques. For example, a pair of joint active and passive beamforming designs were proposed in \cite{sankar2022beamforming} for ISAC systems assisted by single RIS and dual RISs, respectively. The authors of \cite{liu2022joint} jointly optimized the transmit waveform and passive beamforming to maximize the sensing SINR at the BS subject to different communication constraints. Moreover, RIS was exploited in \cite{xing2022passive} to enhance communication SINR and sensing detection resolution. To guarantee the estimation accuracy, the authors of \cite{wang2022joint} proposed to minimize the multi-user interference of communication under the minimum CRB requirement.

\subsection{Motivations and Challenges}

The motivations of this paper can be summarized in four folds.
\emph{Firstly}, in contrast to conventional RIS, STARS does not require the sensing target and the communicating user to be on the same side. Moreover, STARS can split a signal into two separate signals, which is a good match for the dual function in ISAC. Driven by the above observations, in this paper, we naturally propose adopting STARS to divide the whole space into two half-spaces, namely the \emph{sensing space} and the \emph{communication space}. Therefore, the signal from the BS is split at the STARS to carry out target sensing in the sensing space and serve communication users in the communication space, that is, ISAC is \emph{enabled} by the STARS. 

\emph{Secondly}, the sensing function is typically carried out at the BS by analyzing the echo signals from the target. However, the following challenges need to be addressed when utilizing STARS. On the one hand, the sensing signal has significant path loss over multiple hops, i.e., BS$\rightarrow$STARS$\rightarrow$target$\rightarrow$STARS$\rightarrow$BS, especially when the direct BS-target link is blocked and the STARS is deployed in the vicinity of the target. On the other hand, due to the transmission and reflection properties of STARS, the BS can also receive clutter signals from the communication space, which are difficult to distinguish from the desired echo signals from the sensing space. To address these challenges, we propose to install dedicated low-cost sensors on STARS following an idea similar to \cite{shao2022target}. Therefore, the sensing function can be carried out at these sensors, namely \emph{sensing-at-STARS}, rather than at the BS, thereby reducing the number of hops and path loss. Furthermore, to avoid clutter interference from the communication space, the side of the sensor facing the communication space can be physically blocked to achieve unidirectional reception of echo signals.

\emph{Thirdly}, most existing works assumed that the transmission and reflection phase shifts of STARS can be adjusted independently. However, this may be non-trivial to realize in practice since the active elements are required. Recently, a coupled phase-shift model is proposed for the STARS with purely passive lossless elements \cite{xu2022star}. Thus, to unveil the full potential of STARS in ISAC systems, we considered both independent and coupled phase-shift models of STARS.

\emph{Fourthly}, since the effectiveness of optimizing CRB has been widely demonstrated, we also considered it as the performance metric for sensing. However, in the existing works \cite{liu2021cramer, shao2022target, song2022intelligent, wang2022joint}, only the estimation of the azimuth DOA of the target is studied. Note that the RIS or STARS are usually equipped with a uniform planar array (UPA), which is capable of estimating both azimuth and elevation DOAs. This motivates us to investigate the CRB for the two-dimensional (2D) DOA estimation in this paper.

\subsection{Contributions}

The primary contributions of our paper are summarized as follows:

\begin{itemize}
    \item We propose a novel STARS-enabled ISAC system with a sensing-at-STARS structure, where the entire space is divided into a sensing space with a single target and a communication space with multiple users. Based on this setup, we derive the CRB for estimating the 2D DOAs of the target. Considering both independent and coupled phase-shift models of STARS, we formulate the CRB minimization problem under the communication SINR constraint.
    \item For the independent phase-shift model, instead of directly minimizing the complicated CRB, we propose an equivalent optimization for a more trackable modified Fisher information matrix (FIM). Furthermore, we propose an iterative algorithm based on penalty dual decomposition (PDD) to address the coupling of the optimization variables in the modified FIM.
    \item For the coupled phase-shift model, to tackle the additional complicated non-convex coupled phase-shift constraints, we conceive a low-complexity algorithm by iteratively optimizing the amplitude and phase-shift coefficients of STARS, where the optimal closed-form solutions are obtained in each iteration.
    \item Our numerical results verify the effectiveness of the proposed algorithms and reveal the superiority of the STARS over the conventional RIS in terms of ISAC performance both theoretically and practically. Some insights are also obtained from the numerical results. Firstly, the two phase-shift models of the STARS have similar performance in the cases of low communication requirements or sufficient STARS elements. Secondly, increasing the number of passive elements of STARS is more appealing than the number of sensor elements.
\end{itemize}

\subsection{Organization and Notations}

The rest of this paper is organized as follows. In Section \ref{sec:system_model}, the proposed STARS-enabled ISAC framework with the sensing-at-STARS structure is presented. Then, the CRB for the 2D DOA estimation is derived. In Section \ref{sec:indenependt}, a PDD-based algorithm is conceived to solve the CRB minimization problem with the independent phase-shift model, where a novel CRB simplification approach is proposed. In Section \ref{sec:coupled}, considering the coupled phase-shift model, another PDD-based algorithm is proposed to solve the corresponding CRB minimization problem, where the coupled phase-shift constraints are addressed in an alternating manner based on the optimal closed-form solutions. In Section \ref{sec:results}, the numerical results are provided to verify the effectiveness of the proposed framework and algorithms. Finally, this paper is concluded in Section \ref{sec:conclusion}

\emph{Notations:}
Scalars, vectors, and matrices are denoted by the lower-case, bold-face lower-case, and bold-face upper-case letters, respectively; 
$\mathbb{C}^{N \times M}$ and $\mathbb{R}^{N \times M}$ denotes the space of $N \times M$ complex and real matrices, respectively;
$a^*$ and $|a|$ denote the conjugate and magnitude of scalar $a$;  
$\mathbf{a}^H$ denotes the conjugate transpose of vector $\mathbf{a}$; 
$\mathrm{diag}(\mathbf{a})$ denotes a diagonal matrix with same value as the vector $\mathbf{a}$ on the diagonal;
$\mathbf{A} \succeq 0$ means that matrix $\mathbf{A}$ is positive semidefinite; 
$\mathrm{rank}(\mathbf{A})$ and $\mathrm{tr}(\mathbf{A})$ denote the rank and trace of matrix $\mathbf{A}$, respectively;
$\mathbb{E}[\cdot]$ denotes the statistical expectation; 
$\mathrm{Re}\{\cdot\}$ denotes the real component of a complex number;
$\mathcal{CN}(\mu, \sigma^2)$ denotes the distribution of a circularly symmetric complex Gaussian (CSCG) random variable with mean $\mu$ and variance $\sigma^2$.

\section{System Model} \label{sec:system_model}

As shown in Fig. \ref{fig:system_model}, we consider a narrowband STARS-enabled ISAC system, where the BS is equipped with a uniform linear array (ULA) consisting of $M$ antennas and the STARS is equipped with a uniform planar array (UPA) consisting of $N$ passive transmission-reflection (T-R) elements denoted by the set $\mathcal{N}$. The whole space is divided into two half-spaces by STARS, namely the \emph{sensing space} and the \emph{communication space}. Without loss of generality, we assume that the sensing space is on the reflection side and the communication space is on the transmission side. There is a single sensing target of interest in the sensing space and $K$ single-antenna communication users, whose indices are collected in $\mathcal{K}$, in the communication space. The direct links between BS and target/users are assumed to be blocked. To tackle the severe path-loss, we propose a sensing-at-STARS structure, where a dedicated low-cost sensor with a ULA consisting of $N_s$ elements is mounted on STARS. The side of the sensor facing the communication space is physically blocked to avoid the clutter signal from the communication space.  Furthermore, we consider a coherent time block of length $L$, during which the communication channels and sensing target parameters remain approximately constant.

\begin{figure}[t!]
    \centering
    \includegraphics[width=0.4\textwidth]{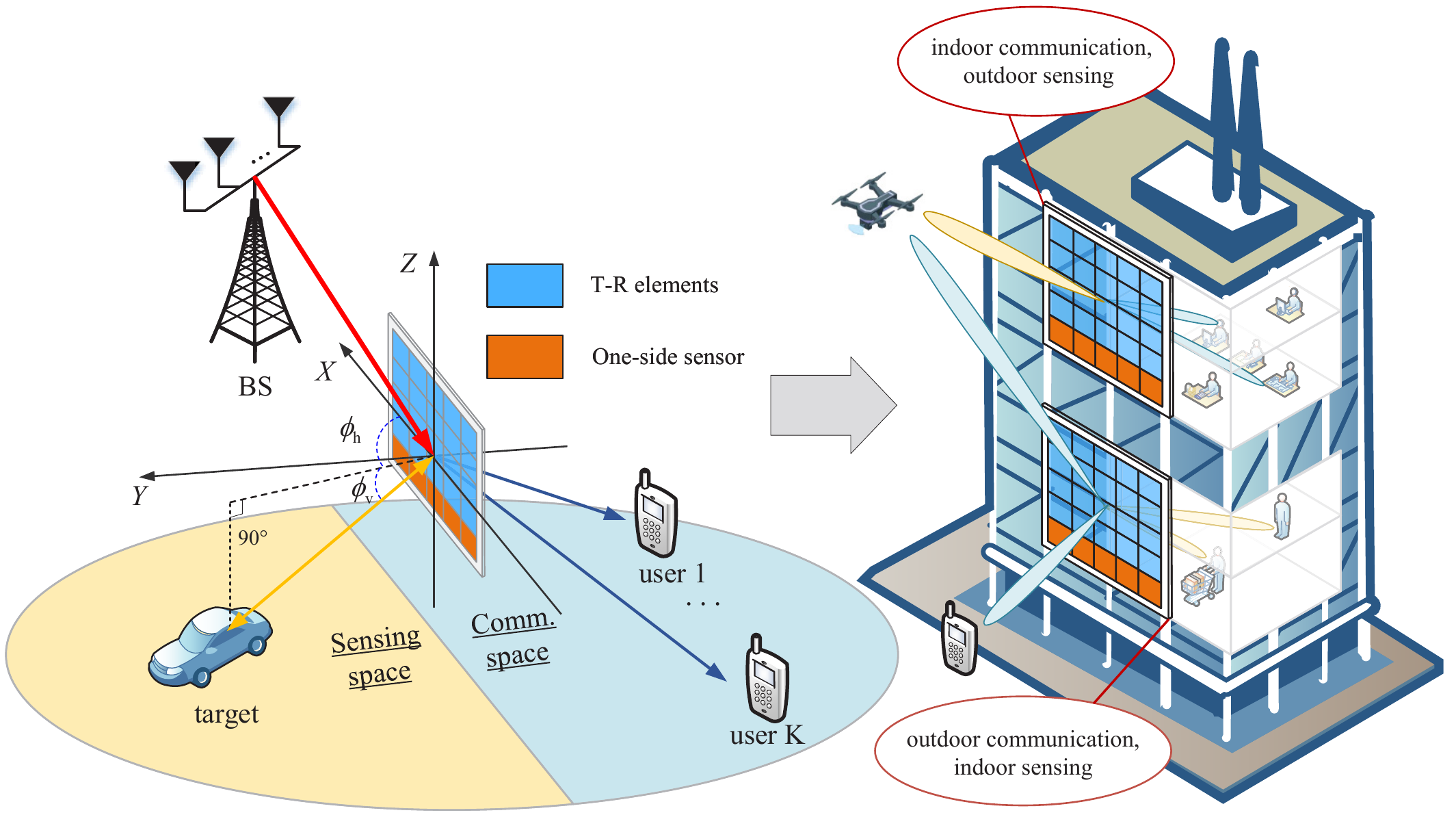}
    \caption{Illustration of the STARS-enabled ISAC system.}
    \label{fig:system_model}
\end{figure}

\subsection{STARS Model}
The energy splitting model is exploited to support simultaneous transmission and reflection of the STARS. In particular, the incident signal at the STARS from the BS is split into the sensing signal in the sensing space and the communication signal in the communication space. Denote $\mathbf{\Theta}_t \in \mathbb{C}^{N \times N}$ and $\mathbf{\Theta}_r \in \mathbb{C}^{N \times N}$ as the matrices of the transmission coefficients (TCs) and reflection coefficients (RCs), respectively, which can be modeled as 
\begin{align}
    \mathbf{\Theta}_i = \mathrm{diag} \left( \beta_{i,1} e^{j \varphi_{i,1}},\dots,\beta_{i,N} e^{j \varphi_{i,N}} \right), \forall i \in \{t, r\}.
\end{align}
In the above expression, $\beta_{i,n} \in [0,1]$ and $\varphi_{i,n} \in [0, 2\pi]$ denote the amplitude and phase-shift response of the $n$-th element. The exact value of $\beta_{i,n}$ and $\varphi_{i,n}$ is determined by the resistance and reactance of the STARS elements. In general, assuming that the phase shifts of TC and RC can be adjusted independently, the amplitudes need to satisfy the \emph{law of energy conservation} as follows:
\begin{equation}
    \beta_{t,n}^2 + \beta_{r,n}^2 = 1, \forall n \in \mathcal{N},
\end{equation}
which is termed as \emph{independent T\&R phase-shift} model. The independent T\&R phase-shift model requires the STARS elements to be active or lossy \cite{xu2022star}, which may result in higher manufacturing costs. As such, the STARS with low-cost passive and lossless elements has been studied in \cite{xu2022star}, where the electric and magnetic impedance of each element should be pure imaginary numbers. In this case, TCs and RCs also need to meet the following conditions \cite{xu2022star}:
\begin{equation}
    \cos(\varphi_{t,n} - \varphi_{r,n}) = 0, \forall n \in \mathcal{N}.
\end{equation}
Under the above constraints, the phase shifts of TCs and RCs are coupled, which is referred to as the \emph{coupled T\&R phase-shift} model. To fully investigate the role of STARS in the proposed system, both independent and coupled T\&R phase-shift models are considered in this paper.

\subsection{Signal Model}
To carry out sensing and communication simultaneously, the BS transmits the following joint signal at the time index $t$: 
\begin{equation}
    \mathbf{x}(t) = \mathbf{P}\mathbf{c}(t) + \mathbf{s}(t) = \sum_{k \in \mathcal{K}} \mathbf{p}_k c_k(t) + \mathbf{s}(t),
\end{equation}
where $\mathbf{P} = [\mathbf{p}_1,\dots,\mathbf{p}_K] \in \mathbb{C}^{M \times K}$ denotes the transmit beamforming matrix for delivering the information streams $\mathbf{c}(t) = [c_1(t),\dots,c_K(t)]^T \in \mathbb{C}^{K \times 1}$ to the $K$ communication users. The signal, $\mathbf{s}(t) \in \mathbb{C}^{M \times 1}$, is the dedicated sensing signal to achieve the full DoFs for target sensing \cite{liu2020joint}. The multiple beam transmission is exploited by the dedicated sensing signal. Thus, the covariance matrix of it, i.e., $\mathbf{R}_s = \mathbb{E}[\mathbf{s}(t) \mathbf{s}^H(t)]$, has a general rank. Furthermore, the communication signals are modeled as independent Gaussian random signals with zero mean and unit power, and the dedicated sensing signal is generated by pseudo-random coding, so that $\mathbb{E}[\mathbf{c}(t)\mathbf{c}(t)^H] = \mathbf{I}_{K}$ and $\mathbb{E}[\mathbf{c}(t)\mathbf{s}(t)^H] = \mathbf{0}_{K \times M}$. Thus, the covariance matrix of the transmit signal $\mathbf{x}(t)$ is given by 
\begin{equation}
    \mathbf{R}_x = \mathbb{E}[\mathbf{x}(t) \mathbf{x}(t)^H] = \mathbf{P} \mathbf{P}^H + \mathbf{R}_s.
\end{equation}
In practice, $\mathbf{R}_x$ can be calculated by averaging over $L$ time indexes as follows:
\begin{equation} \label{eqn:covariance_approx}
    \mathbf{R}_x \approx \frac{1}{L} \mathbf{X} \mathbf{X}^H,
\end{equation} 
where $\mathbf{X} = [\mathbf{x}(1),\dots,\mathbf{x}(L)]$. The above approximation is accurate when $L$ is large. Thus, we assume the accurate equality of \eqref{eqn:covariance_approx} is achieved throughout this paper. 

Given the transmit signal at the BS, the received signal at the $k$-th communication user in the communication space can be modeled as\footnote{Here, we omit the interference caused by the echo signal from the sensing space because its power at the communication user is negligible.}
\begin{align} \label{eqn:signal_communication}
    y_{c,k}(t) = &\underbrace{\mathbf{h}_k^H \mathbf{\Theta}_t \mathbf{G} \mathbf{p}_k c_k(t)}_{\text{desired signal}} + \underbrace{\sum_{i \in \mathcal{K} \backslash k} \mathbf{h}_k^H \mathbf{\Theta}_t \mathbf{G} \mathbf{p}_i c_i(t)}_{\text{inter-user interference}} \nonumber \\
    &+ \underbrace{\mathbf{h}_k^H \mathbf{\Theta}_t \mathbf{G} \mathbf{s}(t)}_{\text{sensing interference}} + n_k(t),
\end{align} 
where $\mathbf{G} \in \mathbb{C}^{N \times M}$ denotes the baseband channel matrix between the BS and the STARS, $\mathbf{h}_k \in \mathbb{C}^{N \times 1}$ denotes the baseband channel vector between the STARS and user $k$, and $n_{k} \sim \mathcal{CN}(0, \sigma_k^2)$ denotes the additive white Gaussian noise (AWGN) at user $k$ with the variance $\sigma_k^2$.

Then, in the sensing space, the received echo signal from the target over a coherent time block of length $L$ at the sensors can be modeled as 
\begin{equation} \label{eqn:echo_signal}
    \mathbf{Y}_s = \alpha \mathbf{b}(\phi_{h}, \phi_{v}) \mathbf{a}^T(\phi_{h}, \phi_{v}) \mathbf{\Theta}_r \mathbf{G} \mathbf{X} + \mathbf{N}_s,
\end{equation}
where $\alpha \in \mathbb{C}$ is the complex amplitude determined by the round-trip path-loss and the complex reflection factor of the target, $\phi_{h}$ and $\phi_{v}$ denotes azimuth and elevation DOAs of the target with respect to the STARS, respectively, $\mathbf{a}(\phi_{h}, \phi_{v}) \in \mathbb{C}^{N \times 1}$ denotes the steering vector of the STARS, $\mathbf{b}(\phi_{h}, \phi_{v}) \in \mathbb{C}^{N_s \times 1}$ denotes the steering vector of the sensors, and $\mathbf{N}_s$ denotes the AWGN noise with each entry obeying $\mathcal{CN}(0, \sigma_s^2)$. According to \cite{manikas2004differential}, the steering vectors under the planar wave assumption can be modeled as 
\begin{align}
    & \mathbf{a}(\phi_{h},\phi_{v}) = \mathrm{exp}(-j [\mathbf{r}_{X}, \mathbf{r}_{Y}, \mathbf{r}_{Z}] \mathbf{k}(\phi_{h},\phi_{v})), \\
    & \mathbf{b}(\phi_{h},\phi_{v}) = \mathrm{exp}(-j [ \bar{\mathbf{r}}_{X}, \bar{\mathbf{r}}_{Y}, \bar{\mathbf{r}}_{Z}] \mathbf{k}(\phi_{h},\phi_{v})),
\end{align}
where $[\mathbf{r}_{X}, \mathbf{r}_{Y}, \mathbf{r}_{Z}] \in \mathbb{R}^{N \times 3}$ and $[ \bar{\mathbf{r}}_{X}, \bar{\mathbf{r}}_{Y}, \bar{\mathbf{r}}_{Z}] \in \mathbb{R}^{N_s \times 3}$ have rows representing the Cartesian coordinates of the STARS elements and the sensors. The vector $\mathbf{k}(\phi_{h},\phi_{v}) \in \mathbb{R}^{3 \times 1}$ is the wavenumber vector defined as follows:
\begin{equation}
    \mathbf{k}(\phi_{h},\phi_{v}) = \frac{2 \pi}{\lambda_c}[\cos \phi_{h} \cos \phi_{v}, \sin \phi_{h} \cos \phi_{v}, \sin \phi_{v}]^T,
\end{equation}
where $\lambda_c$ denotes the wavelength of the carrier signal.  Without loss of generality, we assume that the ULA array of the sensors is deployed along the $X$-axis and the UPA array of the STARS is deployed within the $(X, Z)$ plane, i.e., $\bar{\mathbf{r}}_{Y} = \bar{\mathbf{r}}_{Z} = \mathbf{0}$ and $\mathbf{r}_{Y} = \mathbf{0}$. As a consequence, the steering vectors can be simplified as follows:
\begin{align}
    & \mathbf{a}(\phi_{h},\phi_{v}) = \mathrm{exp} \big( -j \frac{2 \pi}{\lambda_c} ( \mathbf{r}_{X} \cos \phi_{h} \cos \phi_{{v}} + \mathbf{r}_{Z} \sin \phi_{{v}} ) \big), \\
    & \mathbf{b}(\phi_{h},\phi_{v}) = \mathrm{exp} \big( -j \frac{2 \pi}{\lambda_c} \bar{\mathbf{r}}_{X} \cos \phi_{h} \cos \phi_{{v}} \big).
\end{align}
The echo signal model in \eqref{eqn:echo_signal} is assumed to be acquired in a specific range-Doppler bin. Thus, the range and Doppler parameters are omitted. Furthermore, in addition to the echo signal from the target, the sensors also receive the interference signal through the direct BS-sensor link. However, in practice, such interference can be effectively eliminated by the offline training \cite{shao2022target} and thereby is also omitted.

\subsection{Performance Metrics for Communication and 2D Sensing}
A common measure of multi-user communication is the SINR, which determines the achievable rate. Therefore, we focus on optimizing the SINR for each communication user. According to \eqref{eqn:signal_communication}, the SINR for decoding the desired signal 
at user $k$ is given by 
\begin{equation}  
    \gamma_k = \frac{|\mathbf{h}_k^H \mathbf{\Theta}_t \mathbf{G} \mathbf{p}_k|^2}{ \sum_{i \in \mathcal{K}\backslash k} |\mathbf{h}_k^H \mathbf{\Theta}_t \mathbf{G} \mathbf{p}_i|^2 \!+\! \mathbf{h}_k^H \mathbf{\Theta}_t \mathbf{G} \mathbf{R}_s \mathbf{G}^H \mathbf{\Theta}_t^H \mathbf{h}_k \!+\! \sigma_k^2}.
\end{equation}

For target sensing, we are interested in estimating the 2D DOAs $\phi_{h}$ and $\phi_{v}$ based on the observations $\mathbf{Y}_s$ over a coherent time block, which can be achieved by the classic maximum likelihood estimation (MLE) presented in Appendix A. After obtaining the estimated $\hat{\phi}_{h}$ and $\hat{\phi}_{v}$, the mean square error (MSE), i.e., $\epsilon_h^2 = \mathbb{E}[ |\phi_h - \hat{\phi}_h|^2 ]$ and $\epsilon_v^2 = \mathbb{E}[ |\phi_v - \hat{\phi}_v|^2 ]$, is commonly used to evaluate the estimation performance. However, the optimization of MSE is intractable since it is difficult to obtain closed-form expressions. Therefore, we are interested in optimizing the CRB for estimating the 2D DOAs $\phi_{h}$ and $\phi_{v}$, which provides a lower bound for the MSE and has the closed-form expression. To derive the CRB, we first vectorize the matrix $\mathbf{Y}_s$ as follows:
\begin{equation} \label{eqn:signal_vector}
    \mathbf{y}_s = \mathrm{vec}(\mathbf{Y}_s) = \mathbf{u} + \mathbf{n}_s,
\end{equation}
where $\mathbf{u} = \mathrm{vec}(\alpha  \mathbf{b}(\phi_{h},\phi_{v}) \mathbf{a}^T(\phi_{h},\phi_{v}) \mathbf{\Theta}_r \mathbf{G} \mathbf{X})$ and $\mathbf{n}_s = \mathrm{vec}(\mathbf{N}_s)$. Denote $\boldsymbol{\xi} = [\boldsymbol{\phi}^T, \tilde{\boldsymbol{\alpha}}^T]^T$, where $\boldsymbol{\phi} = [\phi_{h}, \phi_{v}]^T$ and $\tilde{\boldsymbol{\alpha}} = [\mathrm{Re}(\alpha), \mathrm{Im}(\alpha)]^T$, as the unknown parameters to estimate. Then, the FIM for estimating the vector $\boldsymbol{\xi}$ can be partitioned as  
\begin{equation}
    \mathbf{J}_{\boldsymbol{\xi}} = \begin{bmatrix}
        \mathbf{J}_{\boldsymbol{\phi} \boldsymbol{\phi}} &\mathbf{J}_{\boldsymbol{\phi} \tilde{\boldsymbol{\alpha}}} \\
        \mathbf{J}_{\boldsymbol{\phi} \tilde{\boldsymbol{\alpha}}}^T & \mathbf{J}_{\tilde{\boldsymbol{\alpha}} \tilde{\boldsymbol{\alpha}}}
    \end{bmatrix}.
\end{equation}
Then, the CRB matrix for estimating $\boldsymbol{\phi}$ is given by \cite{bekkerman2006target}
\begin{equation} \label{eqn:CRB}
    \mathrm{CRB}(\boldsymbol{\phi}) = \left [ \mathbf{J}_{\boldsymbol{\phi} \boldsymbol{\phi}} - \mathbf{J}_{\boldsymbol{\phi} \tilde{\boldsymbol{\alpha}}} \mathbf{J}_{\tilde{\boldsymbol{\alpha}} \tilde{\boldsymbol{\alpha}}}^{-1} \mathbf{J}_{\boldsymbol{\phi} \tilde{\boldsymbol{\alpha}}}^T \right ]^{-1},
\end{equation} 
where the expressions of the matrices $\mathbf{J}_{\boldsymbol{\phi} \boldsymbol{\phi}}$, $\mathbf{J}_{\boldsymbol{\phi} \tilde{\boldsymbol{\alpha}}}$, and $\mathbf{J}_{\tilde{\boldsymbol{\alpha}} \tilde{\boldsymbol{\alpha}}}$ are derived in Appendix~B. It can be observed that $\mathrm{CRB}(\boldsymbol{\phi})$ depends on the target DOAs $\boldsymbol{\phi}$. The knowledge of $\boldsymbol{\phi}$ is thus required to design the joint waveform at the BS and the TCs/RCs at the STARS. Note that in practice, the target DOAs typically do not change significantly between two adjacent coherent time blocks. Therefore, the estimated or predicted DOAs from estimation results in previous coherent time blocks are sufficient for system design \cite{liu2021cramer}. Based on the above analysis, we assume that the DOAs $\boldsymbol{\phi}$ are fixed for the optimization problem. Furthermore, it is assumed that the communication channels of both independent and coupled T\&R phase-shifted STARS have been obtained by the advanced channel estimation method \cite{wu2021channel}.

\section{CRB Optimization Design with Independent T\&R Phase-Shift} \label{sec:indenependt}

In this section, based on the proposed framework, we focus on the minimization of CRB in the case of STARS with independent T\&R phase shifts, which can be simplified to the optimization of a modified FIM. We then propose to solve the resulting optimization problem by invoking the PDD framework.

\subsection{Problem Formulation}
The diagonal elements of the CRB matrix represent the minimum variance of each target parameter estimated by the unbiased estimator \cite{kay1993fundamentals}. Therefore, we aim to minimize the trace of the CRB matrix, while guaranteeing the minimum SINR level at each communication user. To optimize the complicated CRB matrix $\mathrm{CRB}(\boldsymbol{\phi})$, we first transform it into a more trackable form according to the following proposition.
\begin{proposition} \label{proposition:CRB_minimize}
    \emph{
        Minimizing the trace of CRB matrix $\mathrm{CRB}(\boldsymbol{\phi})$ is equivalent to solving the following optimization problem
        \begin{subequations} \label{problem:proposition_1}
            \begin{align}
                \min_{\mathbf{U}, \mathbf{\Omega}} \quad & 
                \mathrm{tr}( \mathbf{U}^{-1} ) \\ 
                \label{constraint:Schur}
                \mathrm{s.t.} \quad & 
                \begin{bmatrix}
                    \mathbf{J}_{\boldsymbol{\phi} \boldsymbol{\phi}} - \mathbf{U} &\mathbf{J}_{\boldsymbol{\phi} \tilde{\boldsymbol{\alpha}}} \\
                    \mathbf{J}_{\boldsymbol{\phi} \tilde{\boldsymbol{\alpha}}}^T & \mathbf{J}_{\tilde{\boldsymbol{\alpha}} \tilde{\boldsymbol{\alpha}}}
                \end{bmatrix} \succeq 0, \\
                & \mathbf{U} \succeq 0,
            \end{align}
        \end{subequations}
        where $\mathbf{U} \in \mathbb{C}^{2 \times 2}$ is an auxiliary matrix and $\mathbf{\Omega}$ denotes the original optimization variables.
    }
\end{proposition}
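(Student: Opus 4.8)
The plan is to recognize the right-hand side as the inverse of a Schur complement and then collapse the matrix inequality via the Schur complement lemma. Writing $\mathbf{M} := \mathbf{J}_{\boldsymbol{\phi}\boldsymbol{\phi}} - \mathbf{J}_{\boldsymbol{\phi}\tilde{\boldsymbol{\alpha}}} \mathbf{J}_{\tilde{\boldsymbol{\alpha}}\tilde{\boldsymbol{\alpha}}}^{-1} \mathbf{J}_{\boldsymbol{\phi}\tilde{\boldsymbol{\alpha}}}^T$ for the Schur complement of $\mathbf{J}_{\tilde{\boldsymbol{\alpha}}\tilde{\boldsymbol{\alpha}}}$ in $\mathbf{J}_{\boldsymbol{\xi}}$, the definition \eqref{eqn:CRB} gives $\mathrm{CRB}(\boldsymbol{\phi}) = \mathbf{M}^{-1}$, so the original objective is $\mathrm{tr}(\mathrm{CRB}(\boldsymbol{\phi})) = \mathrm{tr}(\mathbf{M}^{-1})$. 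The goal is therefore to show that the minimum of $\mathrm{tr}(\mathbf{U}^{-1})$ over the feasible set of \eqref{problem:proposition_1} equals $\mathrm{tr}(\mathbf{M}^{-1})$ for every fixed value of the original variables $\mathbf{\Omega}$ (which determine the FIM blocks); the equivalence of the two joint problems then follows immediately, since the inner optimization over $\mathbf{U}$ reconstructs exactly the quantity being minimized over $\mathbf{\Omega}$.

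Next I would convert the linear matrix inequality \eqref{constraint:Schur} into a constraint directly on $\mathbf{U}$. Since $\mathbf{J}_{\tilde{\boldsymbol{\alpha}}\tilde{\boldsymbol{\alpha}}} \succ 0$ (the complex amplitude $\alpha$ is identifiable), the Schur complement lemma applied to the lower-right block shows that \eqref{constraint:Schur} holds if and only if $(\mathbf{J}_{\boldsymbol{\phi}\boldsymbol{\phi}} - \mathbf{U}) - \mathbf{J}_{\boldsymbol{\phi}\tilde{\boldsymbol{\alpha}}}\mathbf{J}_{\tilde{\boldsymbol{\alpha}}\tilde{\boldsymbol{\alpha}}}^{-1}\mathbf{J}_{\boldsymbol{\phi}\tilde{\boldsymbol{\alpha}}}^T \succeq 0$, i.e. $\mathbf{M} - \mathbf{U} \succeq 0$. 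Combined with $\mathbf{U} \succeq 0$, the feasible set for $\mathbf{U}$ is exactly $0 \preceq \mathbf{U} \preceq \mathbf{M}$.

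The core of the argument is then a monotonicity step. I would first note that any singular $\mathbf{U}$ makes $\mathrm{tr}(\mathbf{U}^{-1})$ diverge, so one may restrict to $0 \prec \mathbf{U} \preceq \mathbf{M}$; this also forces $\mathbf{M} \succeq \mathbf{U} \succ 0$, consistent with the DOAs being identifiable. Invoking the operator-antitonicity of matrix inversion on the positive-definite cone, $\mathbf{U} \preceq \mathbf{M}$ implies $\mathbf{M}^{-1} \preceq \mathbf{U}^{-1}$ in the L\"{o}wner order, and taking traces yields $\mathrm{tr}(\mathbf{U}^{-1}) \geq \mathrm{tr}(\mathbf{M}^{-1})$. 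Since $\mathbf{U} = \mathbf{M}$ is itself feasible and attains this lower bound, the inner minimization over $\mathbf{U}$ evaluates precisely to $\mathrm{tr}(\mathbf{M}^{-1}) = \mathrm{tr}(\mathrm{CRB}(\boldsymbol{\phi}))$, completing the reduction.

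The step I expect to require the most care is the monotonicity claim. Operator-antitonicity of inversion supplies the inequality in the L\"{o}wner order, but I must confirm that the trace is \emph{attained} at $\mathbf{U} = \mathbf{M}$ rather than merely bounded below; this is clean here because $\mathbf{M}$ simultaneously dominates every feasible $\mathbf{U}$ and is itself feasible, so no interior trade-off between the diagonal entries of $\mathbf{U}^{-1}$ can arise. A secondary technical point is justifying $\mathbf{J}_{\tilde{\boldsymbol{\alpha}}\tilde{\boldsymbol{\alpha}}} \succ 0$ and hence $\mathbf{M} \succ 0$ from the explicit FIM expressions in Appendix~B, which is needed both for the Schur-complement reduction and for the inverses appearing throughout to be well defined.
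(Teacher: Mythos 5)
Your proposal is correct and follows essentially the same route as the paper: both reduce the LMI \eqref{constraint:Schur} to $\mathbf{J}_{\boldsymbol{\phi}\boldsymbol{\phi}} - \mathbf{J}_{\boldsymbol{\phi}\tilde{\boldsymbol{\alpha}}}\mathbf{J}_{\tilde{\boldsymbol{\alpha}}\tilde{\boldsymbol{\alpha}}}^{-1}\mathbf{J}_{\boldsymbol{\phi}\tilde{\boldsymbol{\alpha}}}^T \succeq \mathbf{U}$ via the Schur complement condition and then invoke the monotone decrease of $\mathrm{tr}(\mathbf{A}^{-1})$ on the positive semidefinite cone (operator antitonicity of inversion plus the trace) to conclude the inner minimum over $\mathbf{U}$ is attained at the Schur complement itself. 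Your write-up is slightly more explicit about attainment at $\mathbf{U}=\mathbf{M}$ and the positive-definiteness prerequisites, but the decomposition and key lemma are identical to the paper's.
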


\begin{proof}
    Firstly, we have that the FIM $\mathbf{J}_{\boldsymbol{\phi} \boldsymbol{\phi}} - \mathbf{J}_{\boldsymbol{\phi} \tilde{\boldsymbol{\alpha}}} \mathbf{J}_{\tilde{\boldsymbol{\alpha}} \tilde{\boldsymbol{\alpha}}}^{-1} \mathbf{J}_{\boldsymbol{\phi} \tilde{\boldsymbol{\alpha}}}^T$
    is a positive semidefinite matrix. According to \cite[Example 3.46]{boyd2004convex}, the function $\mathrm{tr}(\mathbf{A}^{-1})$ is matrix decreasing on the 
    positive semidefinite matrix space. Therefore, minimizing $\mathrm{tr} \left( [\mathbf{J}_{\boldsymbol{\phi} \boldsymbol{\phi}} - \mathbf{J}_{\boldsymbol{\phi} \tilde{\boldsymbol{\alpha}}} \mathbf{J}_{\tilde{\boldsymbol{\alpha}} \tilde{\boldsymbol{\alpha}}}^{-1} \mathbf{J}_{\boldsymbol{\phi} \tilde{\boldsymbol{\alpha}}}^T]^{-1} \right)$
    is equivalent to minimizing $\mathrm{tr}(\mathbf{U}^{-1})$, where $\mathbf{U} \succeq 0$, subject to the constraint
    \begin{equation} \label{eqn:FIM_constraint}
        \mathbf{J}_{\boldsymbol{\phi} \boldsymbol{\phi}} - \mathbf{J}_{\boldsymbol{\phi} \tilde{\boldsymbol{\alpha}}} \mathbf{J}_{\tilde{\boldsymbol{\alpha}} \tilde{\boldsymbol{\alpha}}}^{-1} \mathbf{J}_{\boldsymbol{\phi} \tilde{\boldsymbol{\alpha}}}^T \succeq \mathbf{U}.
    \end{equation}
    Then, based on the Schur complement condition \cite{zhang2005schur}, the constraint \eqref{constraint:Schur} with a modified FIM can be obtained.    
\end{proof}

\begin{remark}
    \emph{\textbf{Proposition \ref{proposition:CRB_minimize}} can be easily extended to the cases of multiple targets. For example, consider a system with $Q$ targets. Denote $\alpha_q$, $\phi_{q, \mathrm{h}}$, and $\phi_{q, \mathrm{v}}$ as the complex amplitude, the azimuth DOA, and the elevation DOA of the $q$-th target, respectively. The unknown parameters become $\bar{\boldsymbol{\xi}} = [ \bar{\boldsymbol{\phi}}^T, \bar{\tilde{\boldsymbol{\alpha}}}^T ]$, where $\bar{\boldsymbol{\phi}} = [\phi_{1,h}, \phi_{1,v},\dots,\phi_{Q,h}, \phi_{Q,v}]^T \in \mathbb{R}^{2Q \times 1}$ and $\bar{\tilde{\boldsymbol{\alpha}}} = [\mathrm{Re}(\alpha_1), \mathrm{Im}(\alpha_1),\dots,\mathrm{Re}(\alpha_Q), \mathrm{Im}(\alpha_Q)] \in \mathbb{R}^{2Q \times 1}$. Following the same path in Appendix B, it can be shown that the corresponding CRB matrix of DOA estimation is in the same form as \eqref{eqn:CRB} but with a different dimension of $2Q \times 2Q$. Therefore, by defining a new auxiliary semidefinite matrix $\bar{\mathbf{U}} \in \mathbb{C}^{2Q \times 2Q}$, the corresponding CRB minimization problem for multiple targets can be converted into the same form as problem \eqref{problem:proposition_1}.}
\end{remark}

Based on \textbf{Proposition \ref{proposition:CRB_minimize}}, the optimization problem for minimizing $\mathrm{CRB}(\boldsymbol{\phi})$ can be formulated as follows:
\begin{subequations}
    \begin{align}
        \mathcal{P}: \quad \min_{ \scriptstyle \mathbf{U}, \mathbf{P}, \mathbf{R}_s, \atop \scriptstyle \boldsymbol{\theta}_t, \boldsymbol{\theta}_r} \quad & 
        \mathrm{tr} \left( \mathbf{U}^{-1} \right), \\
        \label{constraint:P1_1}
        \mathrm{s.t.} \quad &
        \begin{bmatrix}
            \mathbf{J}_{\boldsymbol{\phi} \boldsymbol{\phi}} - \mathbf{U} &\mathbf{J}_{\boldsymbol{\phi} \tilde{\boldsymbol{\alpha}}} \\
            \mathbf{J}_{\boldsymbol{\phi} \tilde{\boldsymbol{\alpha}}}^T & \mathbf{J}_{\tilde{\boldsymbol{\alpha}} \tilde{\boldsymbol{\alpha}}}
        \end{bmatrix} \succeq 0, \\
        \label{constraint:P1_2}
        & \gamma_k \ge \overline{\gamma}_k, \forall k\\
        \label{constraint:P1_3}
        & \mathrm{tr}(\mathbf{P}\mathbf{P}^H + \mathbf{R}_s) \le P, \\
        \label{constraint:P1_4}
        & \beta_{t,n}^2 + \beta_{r,n}^2 = 1, 0 \le \beta_{t,n}, \beta_{r,n} \le 1, \forall n, \\
        \label{constraint:P1_5}
        & \mathbf{R}_s \succeq 0, \mathbf{U} \succeq 0,
    \end{align}
\end{subequations}
where $\boldsymbol{\theta}_i = [\beta_{i,1} e^{j \varphi_{i,1}},\dots,\beta_{i,N} e^{j \varphi_{i,N}}]^T, \forall i \in \{t,r\}, $ is the vector comprised by the entries on the diagonal of the matrix $\mathbf{\Theta}_i, \forall i \in \{t,r\}$, $\overline{\gamma}_k \ge 0$ in constraint \eqref{constraint:P1_2} denotes the minimum SINR threshold of user $k$, and $P \ge 0$ in constraint \eqref{constraint:P1_3} denotes the total power budget at the BS. Constraint \eqref{constraint:P1_4} characterizes the amplitude relationships between TCs and RCs of the STARS. Finally, constraint \eqref{constraint:P1_5} ensures the matrices $\mathbf{R}_s$ and $\mathbf{U}$ to be positive semidefinite. The main challenges for solving problem $\mathcal{P}$ is the highly coupled transmit waveform matrices, i.e., $\mathbf{P}$ and $\mathbf{R}_s$, and the STARS coefficient vectors, i.e., $\boldsymbol{\theta}_t$ and $\boldsymbol{\theta}_r$, in the non-convex constraints \eqref{constraint:P1_1} and \eqref{constraint:P1_2}. 
As a consequence, it is challenging to find the global optimum of problem $\mathcal{P}$. To the best of the authors' knowledge, there is no existing algorithm for solving the unique coupling problem in CRB minimization. Therefore, in the following subsections, we develop an efficient PDD-based algorithm to obtain a high-quality solution to problem $\mathcal{P}$.

\subsection{PDD Framework for Solving Problem $\mathcal{P}$}
As shown in \cite{shi2020penalty}, the standard PDD optimization framework is developed in a double-loop structure, where the augmented Lagrangian (AL) problem of the original problem is optimized in a block coordinate descent (BCD) manner in the inner loop, while the Lagrangian dual variables and penalty factors are updated in the outer loop. Therefore, the key idea of applying the PDD method for solving $\mathcal{P}$ is to construct an AL problem of it that has a simple or even closed-form solution at each step of the BCD. To this end, we first define auxiliary variables as follows:
\begin{align}
    \mathbf{F} = \mathbf{\Theta}_r \mathbf{G} \mathbf{R}_x \mathbf{G}^H \mathbf{\Theta}_r^H,
\end{align}
Then, problem $\mathcal{P}$ can be reformulated as
\begin{subequations} \label{problem:P_transform}
    \begin{align}
        \min_{\boldsymbol{\chi}} \quad & \mathrm{tr}(\mathbf{U}^{-1})    \\
        \label{constraint:P1.1_1}
        \mathrm{s.t.} \quad & \mathbf{F} = \mathbf{\Theta}_r \mathbf{G} \mathbf{R}_x \mathbf{G}^H \mathbf{\Theta}_r^H, \\
        \label{constraint:P1.1_2}
        &\begin{bmatrix} 
            \mathbf{J}_{\boldsymbol{\phi} \boldsymbol{\phi}}(\mathbf{F}) - \mathbf{U} &\mathbf{J}_{\boldsymbol{\phi} \tilde{\boldsymbol{\alpha}}}(\mathbf{F}) \\
            \mathbf{J}_{\boldsymbol{\phi} \tilde{\boldsymbol{\alpha}}}^T(\mathbf{F}) & \mathbf{J}_{\tilde{\boldsymbol{\alpha}} \tilde{\boldsymbol{\alpha}}}(\mathbf{F})
        \end{bmatrix} \succeq 0, \\
        & \eqref{constraint:P1_2} - \eqref{constraint:P1_5},
    \end{align}
\end{subequations}
where $\boldsymbol{\chi} \triangleq  \{\mathbf{U}, \mathbf{P}, \mathbf{R}_s, \mathbf{F}, \boldsymbol{\theta}_t, \boldsymbol{\theta}_r\}$ represents all optimization variables. The entries of the matrix in the left-hand side of constant \eqref{constraint:P1.1_2} are given by
\begin{align}
    &\mathbf{J}_{\boldsymbol{\phi} \boldsymbol{\phi}}(\mathbf{F}) = \frac{2|\alpha|^2 L}{\sigma_s^2} \mathrm{Re} \left( \begin{bmatrix}
        \mathrm{tr} (\dot{\mathbf{B}}_{\phi_{\mathrm{h}}} \mathbf{F} \dot{\mathbf{B}}_{\phi_{\mathrm{h}}}^H)  & \mathrm{tr}(\dot{\mathbf{B}}_{\phi_{\mathrm{h}}} \mathbf{F} \dot{\mathbf{B}}_{\phi_{\mathrm{v}}}^H)  \\
        \mathrm{tr}(\dot{\mathbf{B}}_{\phi_{\mathrm{h}}} \mathbf{F} \dot{\mathbf{B}}_{\phi_{\mathrm{v}}}^H) & \mathrm{tr}(\dot{\mathbf{B}}_{\phi_{\mathrm{v}}} \mathbf{F} \dot{\mathbf{B}}_{\phi_{\mathrm{v}}}^H)
    \end{bmatrix}\right), \\
    &\mathbf{J}_{\boldsymbol{\phi} \tilde{\boldsymbol{\alpha}}}(\mathbf{F}) = \frac{2 L}{\sigma_s^2} \mathrm{Re} \left( \begin{bmatrix}
        \alpha^* \mathrm{tr}\left( \mathbf{B} \mathbf{F} \dot{\mathbf{B}}_{\phi_{\mathrm{h}}}^H \right) \\
        \alpha^* \mathrm{tr}\left( \mathbf{B} \mathbf{F} \dot{\mathbf{B}}_{\phi_{\mathrm{v}}}^H \right)
    \end{bmatrix} [1, j] \right), \\
    &\mathbf{J}_{\tilde{\boldsymbol{\alpha}} \tilde{\boldsymbol{\alpha}}}(\mathbf{F}) = \frac{2 L}{\sigma_s^2} \mathbf{I}_2 \mathrm{tr} \left( \mathbf{B} \mathbf{F} \mathbf{B}^H \right).
\end{align}

\begin{algorithm}[tb]
    \caption{PDD-based algorithm for solving problem $\mathcal{P}$.}
    \label{alg:PDD}
    \begin{algorithmic}[1]
        \STATE{Initialize feasible $\boldsymbol{\chi}^{[0]}$, $\mathbf{\Upsilon}^{[0]}$, $\rho^{[0]} \ge 0$, and set $0 < c < 1$,  $n=1$.}
        \REPEAT
        \STATE{ $\boldsymbol{\chi}^{[n+1]} = \mathrm{optimize} \left(  \mathcal{P}_{\mathrm{AL}}(\rho^{[n]}, \mathbf{\Upsilon}^{[n]}) \right) $ }
        \IF{$h(\boldsymbol{\chi}^{[n+1]}) \le \eta^{[n]}$}
            \STATE{ $\mathbf{\Upsilon}^{[n+1]} = \mathbf{\Upsilon}^{[n]} + \frac{1}{\rho} ( \mathbf{F}^{[n+1]} - \mathbf{\Theta}_r^{[n+1]} \mathbf{G} \mathbf{R}_x^{[n+1]} $ $\times \mathbf{G}^H (\mathbf{\Theta}_r^{[n+1]})^H )$. }
            \STATE{ $\rho^{[n+1]} = \rho^{[n]}$.}
        \ELSE
            \STATE{ $\mathbf{\Upsilon}^{[n+1]} = \mathbf{\Upsilon}^{[n]}$, $\rho^{[n+1]} = c \rho^{[n]}$. }
        \ENDIF
        \STATE{$n=n+1$.}
        \UNTIL{the constraint violation $h(\boldsymbol{\chi})$ falls below a predefined threshold.}
    \end{algorithmic}
\end{algorithm}

\noindent By introducing the Lagrangian dual variable $\mathbf{\Upsilon} \in \mathbb{C}^{N \times N}$ and the penalty factor $\rho > 0$ for the equality constraints \eqref{constraint:P1.1_1}, the following AL problem of \eqref{problem:P_transform} can be obtained:
\begin{subequations} \label{problem:AL_problem}
    \begin{align}
        \mathcal{P}_{\mathrm{AL}}(\rho, \mathbf{\Upsilon}): \quad & \min_{\boldsymbol{\chi}} \quad 
        \mathrm{tr}(\mathbf{U}^{-1}) + P_\rho(\boldsymbol{\chi}, \mathbf{\Upsilon}) \\
        \mathrm{s.t.} \quad & \eqref{constraint:P1_2} - \eqref{constraint:P1_5}, \eqref{constraint:P1.1_2},
    \end{align}
\end{subequations}
where 
\begin{equation}
    P_\rho(\boldsymbol{\chi}, \mathbf{\Upsilon}) = \frac{1}{2\rho} \|\mathbf{F} - \mathbf{\Theta}_r \mathbf{G} \mathbf{R}_x \mathbf{G}^H \mathbf{\Theta}_r^H + \rho \mathbf{\Upsilon} \|^2.
\end{equation}
According to \cite{shi2020penalty}, the PDD-based algorithm for solving problem $\mathcal{P}$ is summarized in \textbf{Algorithm \ref{alg:PDD}}, where $h(\boldsymbol{\chi})$ is the constraint violation function defined as follows:
\begin{equation}
    h(\boldsymbol{\chi}) = \|\mathbf{F} - \mathbf{\Theta}_r \mathbf{G} \mathbf{R}_x \mathbf{G}^H \mathbf{\Theta}_r^H \|_{\infty}
\end{equation} 
Moreover, $\{\eta^{[n]}\}_{n=1}^\infty$ is a sequence that converges to zero, which can be set empirically. In this paper, we set $\eta^{[n]} = 0.99h(\boldsymbol{\chi}^{[n-1]})$.  It can be observed that when the penalty factor $\rho$ is sufficiently small, the penalty term $P_\rho(\boldsymbol{\chi})$ and the constraint violation $h(\boldsymbol{\chi})$ will reduce to zero. In other words, the equality constraint \eqref{constraint:P1.1_1} is satisfied. The detailed discussion of the convergence and optimality of the PDD framework can be found in \cite{shi2020penalty}.

\subsection{Proposed BCD Algorithm for Solving AL Problem \eqref{problem:AL_problem}} \label{sec:BCD_independent}
The key step of the proposed PDD-based algorithm is to solve the AL problem \eqref{problem:AL_problem}. In particular, we divide the set of the optimal variables $\boldsymbol{\chi}$ into two blocks, namely $\{\mathbf{U}, \mathbf{F}, \mathbf{P}, \mathbf{R}_s\}$ and $\{\boldsymbol{\theta}_t, \boldsymbol{\theta}_r\}$. Then, the BCD algorithm is invoked to solve each block iteratively while fixing the others, which leads to the following two subproblems.  

\subsubsection{Subproblem with respect to $\{\mathbf{U}, \mathbf{F}, \mathbf{P}, \mathbf{R}_s\}$} \label{sec:BCD_independent_PR}

The subproblem with respect to $\{\mathbf{U}, \mathbf{F}, \mathbf{P}, \mathbf{R}_s\}$ is givn by
\begin{subequations} \label{problem:P_R}
    \begin{align}
        \min_{\mathbf{U}, \mathbf{F}, \mathbf{P}, \mathbf{R}_s} &
        \mathrm{tr}(\mathbf{U}^{-1}) + \frac{1}{2\rho} \|\mathbf{F} - \mathbf{\Theta}_r \mathbf{G} \mathbf{R}_x \mathbf{G}^H \mathbf{\Theta}_r^H + \rho \mathbf{\Upsilon} \|^2\\
        \label{constraint:FIM}
        \mathrm{s.t.} \quad &\begin{bmatrix}
            \mathbf{J}_{\boldsymbol{\phi} \boldsymbol{\phi}}(\mathbf{F}) - \mathbf{U} &\mathbf{J}_{\boldsymbol{\phi} \tilde{\boldsymbol{\alpha}}}(\mathbf{F}) \\
            \mathbf{J}_{\boldsymbol{\phi} \tilde{\boldsymbol{\alpha}}}^T(\mathbf{F}) & \mathbf{J}_{\tilde{\boldsymbol{\alpha}} \tilde{\boldsymbol{\alpha}}}(\mathbf{F})
        \end{bmatrix} \succeq 0, \\
        \label{constraint:P_R_SINR}
        & \frac{1}{\overline{\gamma}_k} |\mathbf{u}_k^H \mathbf{p}_k|^2 \ge \sum_{i \in \mathcal{K} \backslash k} | \mathbf{u}_k^H \mathbf{p}_i|^2 + \mathbf{u}_k^H  \mathbf{R}_s \mathbf{u}_k + \sigma_k^2, \forall k, \\
        & \mathrm{tr}( \mathbf{P}\mathbf{P}^H + \mathbf{R}_s) \le P, \\
        & \mathbf{R}_s \succeq 0, \mathbf{U} \succeq 0,
    \end{align}
\end{subequations}
where $\mathbf{u}_k = \mathbf{G}^H \mathbf{\Theta}_t^H \mathbf{h}_k$ denotes the effective channel vector for user $k$. Constraint \eqref{constraint:P_R_SINR} is transformed from the communication SINR constraint \eqref{constraint:P1_2}. Although problem \eqref{problem:P_R} is apparently non-convex due to the constraint \eqref{constraint:P_R_SINR}, we now show that the global optimum of it can be obtained via semidefinite relaxation (SDR) approach.
\begin{lemma} \label{lemma:trasnmit}
    \emph{
        Given $\mathbf{P} = [\mathbf{p}_1,\dots,\mathbf{p}_K]^T$, there exists $\mathbf{R}_x$ and $\mathbf{R}_s$ that satisfy 
        $\mathbf{R}_x = \mathbf{P} \mathbf{P}^H + \mathbf{R}_s$ if and only if
        \begin{equation}
            \mathbf{R}_x \succeq \mathbf{P} \mathbf{P}^H = \sum_{k \in \mathcal{K}} \mathbf{p}_k \mathbf{p}_k^H
        \end{equation}
    }
\end{lemma}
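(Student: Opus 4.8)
The plan is to prove the two directions of the equivalence, exploiting the fact that, as a signal covariance matrix, the dedicated sensing covariance $\mathbf{R}_s = \mathbb{E}[\mathbf{s}(t)\mathbf{s}^H(t)]$ must be positive semidefinite; this is the implicit feasibility requirement hidden behind the phrase ``there exists $\mathbf{R}_s$''. With this in mind, the lemma reduces to the standard Loewner-ordering statement that a valid positive semidefinite $\mathbf{R}_s$ completing the decomposition $\mathbf{R}_x = \mathbf{P}\mathbf{P}^H + \mathbf{R}_s$ exists if and only if $\mathbf{R}_x$ dominates $\mathbf{P}\mathbf{P}^H$ in the positive semidefinite cone.

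For necessity, I would assume a feasible pair $(\mathbf{R}_x, \mathbf{R}_s)$ with $\mathbf{R}_s \succeq 0$ satisfying $\mathbf{R}_x = \mathbf{P}\mathbf{P}^H + \mathbf{R}_s$. Rearranging gives $\mathbf{R}_x - \mathbf{P}\mathbf{P}^H = \mathbf{R}_s \succeq 0$, which is exactly $\mathbf{R}_x \succeq \mathbf{P}\mathbf{P}^H$. For sufficiency, starting from $\mathbf{R}_x \succeq \mathbf{P}\mathbf{P}^H$, I would explicitly construct the candidate $\mathbf{R}_s \triangleq \mathbf{R}_x - \mathbf{P}\mathbf{P}^H$. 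The assumed ordering guarantees $\mathbf{R}_s \succeq 0$, so it is a legitimate covariance matrix, and by construction $\mathbf{P}\mathbf{P}^H + \mathbf{R}_s = \mathbf{R}_x$; hence a feasible decomposition exists. The identity $\mathbf{P}\mathbf{P}^H = \sum_{k \in \mathcal{K}} \mathbf{p}_k \mathbf{p}_k^H$ follows directly from expanding the product over the columns of $\mathbf{P}$.

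There is no genuine computational obstacle here; the only subtlety — and the entire point of the lemma — is to recognize that the implicit constraint $\mathbf{R}_s \succeq 0$ is what converts the exact decomposition into the Loewner inequality $\mathbf{R}_x \succeq \mathbf{P}\mathbf{P}^H$. This reformulation is precisely what enables the subsequent semidefinite relaxation: the modified FIM depends on the waveform only through $\mathbf{F} = \mathbf{\Theta}_r\mathbf{G}\mathbf{R}_x\mathbf{G}^H\mathbf{\Theta}_r^H$, so adopting $\mathbf{R}_x$ rather than $\mathbf{R}_s$ as the optimization variable and imposing $\mathbf{R}_x \succeq \mathbf{P}\mathbf{P}^H$ lets one eliminate $\mathbf{R}_s$ altogether while keeping the SINR constraint \eqref{constraint:P_R_SINR} expressible purely in terms of $\mathbf{R}_x$ and $\{\mathbf{p}_k\}_{k \in \mathcal{K}}$.
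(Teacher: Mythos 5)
Your proof is correct and is precisely the argument the paper has in mind: the paper simply states that necessity and sufficiency "can be readily proved," and your two directions (rearranging to get $\mathbf{R}_x - \mathbf{P}\mathbf{P}^H = \mathbf{R}_s \succeq 0$ for necessity, and constructing $\mathbf{R}_s \triangleq \mathbf{R}_x - \mathbf{P}\mathbf{P}^H$ for sufficiency) are the intended details, with the key observation correctly identified as the implicit positive-semidefiniteness of $\mathbf{R}_s$ from constraint \eqref{constraint:P1_5}. No gap; your version is simply more explicit than the paper's.
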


\begin{proof}
    The necessity and sufficiency of this condition can be readily proved.
\end{proof}

By applying \textbf{Lemma \ref{lemma:trasnmit}} and defining the auxiliary variables $\mathbf{P}_k = \mathbf{p}_k \mathbf{p}_k^H, \forall k \in \mathcal{K},$ that satisfy $\mathbf{P}_k \succeq 0$ and $\mathrm{rank}(\mathbf{P}_k)=1$, we can formulate the SDR problem of \eqref{problem:P_R} as follows:
\begin{subequations} \label{problem:P_R_SDR}
    \begin{align}
        \min_{\scriptstyle \mathbf{U}, \mathbf{F}, \mathbf{R}_x, \atop \scriptstyle \{\mathbf{P}_k\}_{k \in \mathcal{K}}}  & 
        \mathrm{tr}(\mathbf{U}^{-1}) + \frac{1}{2\rho}\|\mathbf{F} - \mathbf{\Theta}_r \mathbf{G} \mathbf{R}_x \mathbf{G}^H \mathbf{\Theta}_r^H + \rho \mathbf{\Upsilon}\|^2\\
        \mathrm{s.t.} \quad & (1 + \frac{1}{\overline{\gamma}_k}) \mathbf{u}_k^H \mathbf{P}_k \mathbf{u}_k \ge \mathbf{u}_k^H \mathbf{R}_x \mathbf{u}_k + \sigma_k^2, \forall k,\\
        & \mathrm{tr}(\mathbf{R}_x) \le P, \\
        & \mathbf{R}_x \succeq \sum_{k \in \mathcal{K}} \mathbf{P}_k, \mathbf{R}_s \succeq 0, \mathbf{U} \succeq 0, \mathbf{P}_k \succeq 0, \forall k,
    \end{align}
\end{subequations}
where the non-convex rank-one constraints $\mathrm{rank}(\mathbf{P}_k)=1, \forall k \in \mathcal{K},$ are relaxed. The relaxed problem \eqref{problem:P_R_SDR}
is convex semidefinite programming (SDP), the global optimum of which can be efficiently obtained by the existing convex optimization solvers. However, due to the omission of the rank-one constraints, the global optimum of problem \eqref{problem:P_R_SDR} may have a higher rank, which may not be a solution to the original non-convex problem \eqref{problem:P_R}. Fortunately, in the following proposition, we show that the rank-one global optimum of problem \eqref{problem:P_R} can always be constructed from an arbitrary global optimum of problem \eqref{problem:P_R_SDR}.
\begin{proposition} \label{proposetion:SDR}
    \emph{Given an arbitrary global optimum $\tilde{\mathbf{R}}_x, \{\tilde{\mathbf{P}}_k\}_{k \in \mathcal{K}}$ of problem \eqref{problem:P_R_SDR}, the following solution is a global optimum of problem \eqref{problem:P_R}:
    \begin{equation}
        \mathbf{R}_x^\star = \tilde{\mathbf{R}}_x, \quad \mathbf{p}_k^\star = (\mathbf{u}_k^H \tilde{\mathbf{P}}_k \mathbf{u}_k)^{-1/2} \tilde{\mathbf{P}}_k \mathbf{u}_k.
    \end{equation} 
    }
\end{proposition}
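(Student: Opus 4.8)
The plan is to run the standard two-step semidefinite-relaxation argument: first show that problem \eqref{problem:P_R_SDR} is a genuine relaxation of \eqref{problem:P_R}, so its optimal value lower-bounds that of \eqref{problem:P_R}; then verify that the proposed closed-form construction is feasible for \eqref{problem:P_R} and attains exactly this lower bound, whence it must be a global optimum. A preliminary observation makes this clean: the objective of both problems depends on the variables only through $\mathbf{U}$, $\mathbf{F}$, and $\mathbf{R}_x$, not on the individual $\mathbf{p}_k$. Since the construction keeps $\mathbf{R}_x^\star = \tilde{\mathbf{R}}_x$ and leaves $\mathbf{U}, \mathbf{F}$ untouched, the objective value is preserved automatically, and the entire argument reduces to checking feasibility of $\{\mathbf{p}_k^\star\}$ together with the induced $\mathbf{R}_s^\star = \tilde{\mathbf{R}}_x - \sum_{k} \mathbf{p}_k^\star (\mathbf{p}_k^\star)^H$.

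First I would establish the relaxation direction. Given any point feasible for \eqref{problem:P_R}, setting $\mathbf{P}_k = \mathbf{p}_k \mathbf{p}_k^H$ and $\mathbf{R}_x = \sum_k \mathbf{P}_k + \mathbf{R}_s$ produces a point feasible for \eqref{problem:P_R_SDR} with the same objective: the relation $\mathbf{R}_x \succeq \sum_k \mathbf{P}_k$ holds since $\mathbf{R}_s \succeq 0$, which is precisely \textbf{Lemma \ref{lemma:trasnmit}}, and the SINR inequality of \eqref{problem:P_R} rearranges termwise into the form appearing in \eqref{problem:P_R_SDR}. Hence the SDR optimum is a lower bound for the optimum of \eqref{problem:P_R}. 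I would also note the construction is well-defined: the scalar $\mathbf{u}_k^H \tilde{\mathbf{P}}_k \mathbf{u}_k$ is strictly positive, since otherwise the SDR SINR constraint would force $\mathbf{u}_k^H \tilde{\mathbf{R}}_x \mathbf{u}_k + \sigma_k^2 \le 0$, impossible because $\sigma_k^2 > 0$.

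The easy feasibility checks come next. A direct computation using that $\tilde{\mathbf{P}}_k$ is Hermitian gives $|\mathbf{u}_k^H \mathbf{p}_k^\star|^2 = \mathbf{u}_k^H \tilde{\mathbf{P}}_k \mathbf{u}_k$, after which the original SINR constraint \eqref{constraint:P_R_SINR}, once the interference and sensing terms are combined into $\mathbf{u}_k^H \mathbf{R}_x^\star \mathbf{u}_k$, collapses exactly to the SDR SINR constraint that $\tilde{\mathbf{R}}_x, \{\tilde{\mathbf{P}}_k\}$ already satisfy. The power budget is immediate because $\mathrm{tr}(\mathbf{R}_x^\star) = \mathrm{tr}(\tilde{\mathbf{R}}_x) \le P$, and $\mathbf{U} \succeq 0$, the FIM constraint \eqref{constraint:FIM}, are inherited unchanged.

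The main obstacle — the only nontrivial step — is establishing $\mathbf{R}_s^\star \succeq 0$, i.e. by \textbf{Lemma \ref{lemma:trasnmit}} that $\tilde{\mathbf{R}}_x \succeq \sum_k \mathbf{p}_k^\star (\mathbf{p}_k^\star)^H$. I would prove the sharper per-user bound $\mathbf{p}_k^\star (\mathbf{p}_k^\star)^H \preceq \tilde{\mathbf{P}}_k$. Writing $\mathbf{v}_k = \tilde{\mathbf{P}}_k^{1/2} \mathbf{u}_k$, the rank-one matrix factors as $\mathbf{p}_k^\star (\mathbf{p}_k^\star)^H = \tilde{\mathbf{P}}_k^{1/2} \big( \mathbf{v}_k \mathbf{v}_k^H / \|\mathbf{v}_k\|^2 \big) \tilde{\mathbf{P}}_k^{1/2}$, so that $\tilde{\mathbf{P}}_k - \mathbf{p}_k^\star (\mathbf{p}_k^\star)^H = \tilde{\mathbf{P}}_k^{1/2} \big( \mathbf{I} - \mathbf{v}_k \mathbf{v}_k^H / \|\mathbf{v}_k\|^2 \big) \tilde{\mathbf{P}}_k^{1/2} \succeq 0$, because the middle factor is an orthogonal projection. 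Summing over $k$ and chaining with the SDR constraint $\sum_k \tilde{\mathbf{P}}_k \preceq \tilde{\mathbf{R}}_x$ yields the required residual positive semidefiniteness. Having verified every constraint while matching the objective to the lower bound, global optimality of the construction for \eqref{problem:P_R} follows.
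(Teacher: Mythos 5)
Your proof is correct and complete: the relaxation direction, the positivity of $\mathbf{u}_k^H \tilde{\mathbf{P}}_k \mathbf{u}_k$, the identity $|\mathbf{u}_k^H \mathbf{p}_k^\star|^2 = \mathbf{u}_k^H \tilde{\mathbf{P}}_k \mathbf{u}_k$, and the projection argument giving $\mathbf{p}_k^\star (\mathbf{p}_k^\star)^H \preceq \tilde{\mathbf{P}}_k$ (hence $\mathbf{R}_s^\star \succeq 0$ via \textbf{Lemma \ref{lemma:trasnmit}}) are exactly the ingredients of the standard tightness argument. The paper does not prove the proposition itself but defers to \cite[Theorem 1]{liu2020joint}, whose proof proceeds along essentially the same lines you have reconstructed, so your write-up is a faithful self-contained version of the paper's (cited) approach.
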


\begin{proof}
    Please refer to \cite[Theorem 1]{liu2020joint}
\end{proof}

According to \textbf{Proposition \ref{proposetion:SDR}}, a global optimum of problem \eqref{problem:P_R} can be obtained by solving problem \eqref{problem:P_R_SDR}. Then, the optimal $\mathbf{R}_s$ can be calculated as 
\begin{equation}
    \mathbf{R}_s^\star = \mathbf{R}_x^\star - \sum_{k \in \mathcal{K}} \mathbf{p}_k^\star (\mathbf{p}_k^\star)^H.
\end{equation}

\subsubsection{Subproblem with respect to $\{\boldsymbol{\theta}_t, \boldsymbol{\theta}_r\}$}
The subproblem with respect to $\{\boldsymbol{\theta}_t, \boldsymbol{\theta}_r\}$ is given by 
\begin{subequations} \label{problem:theta}
    \begin{align}
        \min_{\boldsymbol{\theta}_t, \boldsymbol{\theta}_r} \quad & 
        \|\mathbf{F} - \mathbf{\Theta}_r \mathbf{G} \mathbf{R}_x \mathbf{G}^H \mathbf{\Theta}_r^H + \rho \mathbf{\Upsilon} \|^2 \\
        \mathrm{s.t.} \quad & \gamma_k \ge \overline{\gamma}_k, \forall k, \\
        & \beta_{t,n}^2  + \beta_{r,n}^2 \le 1, 0 \le \beta_{t,n}, \beta_{r,n} \le 1, \forall n.
    \end{align}
\end{subequations}
To facilitate the optimization of $\boldsymbol{\theta}_t$ and $\boldsymbol{\theta}_r$, we first transform the objective function and the communication SINR $\gamma_k$ into the more trackable forms.
In particular, we define $\bar{\mathbf{R}}_x = \mathbf{G} \mathbf{R}_x \mathbf{G}^H$. The eigenvalue decomposition of the matrix $\bar{\mathbf{R}}_x$ is given by 
\begin{equation} \label{eqn:eigen_decomposition}
    \bar{\mathbf{R}}_x = \sum_{j=1}^{R} \varrho_j \mathbf{v}_j \mathbf{v}_j^H = \sum_{j=1}^{R} \bar{\mathbf{v}}_j \bar{\mathbf{v}}_j^H,
\end{equation} 
where $\bar{\mathbf{v}}_j = \sqrt{\varrho_j} \mathbf{v}_j$ with $\varrho_j$ and $\mathbf{v}_j$ denoting the eigenvalue and the corresponding eigenvector, respectively,
and $R$ denotes the rank of the matrix $\bar{\mathbf{R}}_x$. As such, the objective function can be reformulated as 
\begin{align}
    & \big\|\mathbf{F} - \sum_{j=1}^{R} \mathbf{\Theta}_r \bar{\mathbf{v}}_j \bar{\mathbf{v}}_j^H \mathbf{\Theta}_r^H + \rho \mathbf{\Upsilon} \big\|^2 \nonumber \\
    = &\big\|\bar{\mathbf{F}} - \sum_{j=1}^{R} \mathrm{diag}(\bar{\mathbf{v}}_j) \boldsymbol{\theta}_r \boldsymbol{\theta}_r^H \mathrm{diag}(\bar{\mathbf{v}}_j)^H + \rho \mathbf{\Upsilon} \big\|^2.
\end{align}
Next, we define the following variables:
\begin{align}
    &\mathbf{\Phi}_{k,i} = \mathrm{diag}(\mathbf{h}_k^H) \mathbf{G} \mathbf{p}_i \mathbf{p}_i^H \mathbf{G}^H \mathrm{diag}(\mathbf{h}_k), \\
    &\mathbf{\Psi}_k = \mathrm{diag}(\mathbf{h}_k^H) \mathbf{G} \mathbf{R}_s \mathbf{G}^H \mathrm{diag}(\mathbf{h}_k), \\
    &\bar{\mathbf{V}}_j = \mathrm{diag}(\bar{\mathbf{v}}_j), \bar{\mathbf{F}} = \mathbf{F} + \rho \mathbf{\Upsilon}.
\end{align}
As a consequence, problem \eqref{problem:theta} can be rewritten as 
\begin{subequations} \label{problem:theta_transformed}
    \begin{align}
        \min_{\boldsymbol{\theta}_t, \boldsymbol{\theta}_r} \quad &
        \big\|\bar{\mathbf{F}} - \sum_{j=1}^{R} \bar{\mathbf{V}}_j \boldsymbol{\theta}_r \boldsymbol{\theta}_r^H \bar{\mathbf{V}}_j^H \big\|^2 \\
        \label{constraint:P1.7_1}
        \mathrm{s.t.} \quad & \frac{1}{\overline{\gamma}_k} \boldsymbol{\theta}_t^H \mathbf{\Phi}_{k,k}^* \boldsymbol{\theta}_t \ge \sum_{i \in \mathcal{K} \backslash k} \!\! \boldsymbol{\theta}_t^H \mathbf{\Phi}_{k,i}^* \boldsymbol{\theta}_t  + \boldsymbol{\theta}_t^H \mathbf{\Psi}_k^* \boldsymbol{\theta}_t + \sigma_k^2, \forall k, \\
        \label{constraint:P1.7_2}
        & |[\boldsymbol{\theta}_t]_n |^2  + |[\boldsymbol{\theta}_r]_n|^2 = 1,  \forall n.
    \end{align}
\end{subequations}
By observing that the objective function and all the constraints of problem \eqref{problem:theta_transformed} are in quadratic form with respect to the optimization variables, we also exploit the SDR approach to approximately solve it. By defining the auxiliary variables $\mathbf{Q}_i = \boldsymbol{\theta}_i \boldsymbol{\theta}_i^H, \forall i \in \{t,r\},$ that satisfies $\mathbf{Q}_i \succeq 0$ and $\mathrm{rank}(\mathbf{Q}_i) = 1$, the SDR problem of problem \eqref{problem:theta_transformed} is given by 
\begin{subequations} \label{problem:theta_SDR}
    \begin{align}
        \min_{\mathbf{Q}_t, \mathbf{Q}_r} \quad &
        \big\|\bar{\mathbf{F}} - \sum_{k=1}^{R} \tilde{\mathbf{V}}_k \mathbf{Q}_r \tilde{\mathbf{V}}_k^H \big\|^2 \\
        \mathrm{s.t.} \quad & \frac{1}{\overline{\gamma}_k}  \mathrm{tr}(\mathbf{\Phi}_{k,k}^* \mathbf{Q}_t) \ge 
        \sum_{i \in \mathcal{K} \backslash k} \mathrm{tr}(\mathbf{\Phi}_{k,i}^* \mathbf{Q}_t) \nonumber \\
        & \qquad \qquad \qquad \qquad + \mathrm{tr}(\mathbf{\Psi}_k^* \mathbf{Q}_t) + \sigma_k^2, \forall k, \\
        & [\mathbf{Q}_t]_{n,n} + [\mathbf{Q}_r]_{n,n} = 1,  \forall n, \\
        & \mathbf{Q}_t \succeq 0, \mathbf{Q}_r \succeq 0.
    \end{align}
\end{subequations}   
The above problem is also a convex SDP and thereby the global optimum of it can be efficiently obtained via the existing convex optimization solvers. While the SDR may result in the solution with the general rank, the eigenvalue decomposition or Gaussian randomization \cite{luo2010semidefinite} can be applied to construct a feasible rank-one solution to problem \eqref{problem:theta_transformed}. Note that a sufficient number of Gaussian randomization can achieve at least $\frac{\pi}{4}$-approximation of the optimal objective value of problem \eqref{problem:theta_transformed}. However, the slight performance loss caused by constructing the rank-one solution cannot theoretically guarantee the monotonicity of the objective value during the BCD iteration process. In this case, the penalty-based alternating minimization (AltMin) algorithm proposed in \cite{yu2021irs} with the provable convergence to the stationary point can be exploited, where the rank-one constraints are transformed as a penalty term in the objective function. Nevertheless, it is worth mentioning that the eigenvalue decomposition or Gaussian randomization can generally guarantees the convergence of the proposed algorithm in practice.

The overall BCD algorithm for solving problem \eqref{problem:AL_problem} is summarized in \textbf{Algorithm \ref{alg:BCD}}. Typically, when the AltMin algorithm is exploited for updating the block $\{\boldsymbol{\theta}_t, \boldsymbol{\theta_r}\}$, \textbf{Algorithm \ref{alg:BCD}} is guaranteed to converge to a stationary point of problem \eqref{problem:AL_problem} in polynomial time \cite{razaviyayn2013unified}. The main complexity of \textbf{Algorithm \ref{alg:BCD}} arises from solving problems \eqref{problem:P_R_SDR} and \eqref{problem:theta_SDR}. Given the solution accuracy $\epsilon$, the corresponding complexity via the interior-point method is in order of $\mathcal{O}((K^{6.5}M^{6.5} + N^{6.5})\log(1/\epsilon))$ and $\mathcal{O}((K+N)^{6.5}N^{6.5} \log(1/\epsilon))$, respectively \cite{liu2020joint, toh2008inexact}.
\begin{algorithm}[tb]
    \caption{BCD algorithm for solving problem \eqref{problem:AL_problem}.}
    \label{alg:BCD}
    \begin{algorithmic}[1]
        \STATE{Initialize feasible $\boldsymbol{\chi}$.}
        \REPEAT
        \STATE{update $\{\mathbf{U}, \mathbf{F}, \mathbf{P}, \mathbf{R}_s\}$ by solving problem \eqref{problem:P_R_SDR}.}
        \STATE{update $\{\boldsymbol{\theta}_t, \boldsymbol{\theta}_r\}$ by solving problem \eqref{problem:theta_SDR}.}
        \UNTIL{the fractional reduction of the objective value falls below a predefined threshold.}
    \end{algorithmic}
\end{algorithm}

\section{CRB Optimization Design with Coupled T\&R Phase-Shift} \label{sec:coupled}
In this section, we turn our attention to CRB optimization in the case of STARS with coupled T\&R phase shifts. Following a similar path in Section \ref{sec:coupled}, the PDD framework is also invoked. Regarding the coupled T\&R phase-shift constraints, a low-complexity iterative algorithm is proposed, where the amplitude and phase-shift coefficients of SATRS are updated alternately by the closed-form solutions.

\subsection{Problem Formulation}
According to \textbf{Proposition \ref{proposition:CRB_minimize}}, the optimization problem for minimizing $\mathrm{CRB}(\boldsymbol{\phi})$ with the coupled phase-shift constraints of STARS can be formulated as follows:
\begin{subequations} \label{problem:coupled_phase_shift}
    \begin{align}
        \tilde{\mathcal{P}}: \quad \min_{\mathbf{U}, \mathbf{P}, \mathbf{R}_s, \boldsymbol{\theta}_t, \boldsymbol{\theta}_r} \quad & 
        \mathrm{tr} \left( \mathbf{U}^{-1} \right), \\
        \label{constraint:coupled_phase}
        \mathrm{s.t.} \quad & \cos(\varphi_{t,n} - \varphi_{r,n}) = 0, \forall n,\\
        & \eqref{constraint:P1_1} - \eqref{constraint:P1_5},
    \end{align}
\end{subequations}
where the coupled phase-shift constraint \eqref{constraint:coupled_phase} can be reformulate as follows:
\begin{align}
    |\varphi_{t,n} - \varphi_{r,n}| = \frac{1}{2}\pi \text{ or } \frac{3}{2}\pi, \forall n
\end{align} 
Compared to the independent T\&R phase-shift model, the coupled T\&R phase-shift model imposed in \eqref{constraint:coupled_phase} makes the problem even more complex. In particular, the optimization subject to this constraint requires hybrid continuous and discrete control. For example, the phase shift $\varphi_{t,n}$ can be selected as any value in a continuous region $[0, 2 \pi]$, while the phase shift can only be selected from a discrete set $\{\varphi_{t,n} \pm \frac{1}{2} \pi, \varphi_{t,n} \pm \frac{3}{2} \pi\}$, which cannot be solved by existing methods. As a consequence, in the following subsections, we develop a new efficient PDD-based algorithm to solve the coupled T\&R phase-shift constraints.

\subsection{PDD Framework for Solving Problem $\tilde{\mathcal{P}}$}
In Section \ref{sec:indenependt}, we have proposed a PDD-based algorithm for the independent T\&R phase-shift model. In this case, we aim to obtain an equivalent form of problem \eqref{problem:coupled_phase_shift} where the coupled T\&R phased shifts are relaxed to the independent ones. Toward this idea, in addition to $\mathbf{F} = \mathbf{\Theta}_r \mathbf{G} \mathbf{R}_x \mathbf{G}^H \mathbf{\Theta}_r^H$, we define another set of auxiliary variables as follows:
\begin{align}
    \tilde{\boldsymbol{\theta}}_i = \boldsymbol{\theta}_i, \forall i \in \{t,r\},
\end{align}
where $\tilde{\boldsymbol{\theta}}_i = [\tilde{\beta}_{i,1} e^{j \tilde{\varphi}_{i,1}},\dots,\tilde{\beta}_{i,N} e^{j \tilde{\varphi}_{i,N}}]^T, \forall i \in \{t,r\}$. Then, problem $\tilde{\mathcal{P}}$ can be reformulated as
\begin{subequations} \label{problem:P_transform_2}
    \begin{align}
        \min_{\tilde{ \boldsymbol{\chi} }} \quad & \mathrm{tr}(\mathbf{U}^{-1})    \\
        \mathrm{s.t.} \quad 
        \label{constraint:phase_equality}
        & \tilde{\boldsymbol{\theta}}_r = \boldsymbol{\theta}_r, \tilde{\boldsymbol{\theta}}_t = \boldsymbol{\theta}_t, \\
        & \tilde{\beta}_{t,n}^2 + \tilde{\beta}_{r,n}^2 = 1, 0 \le \tilde{\beta}_{t,n}, \tilde{\beta}_{r,n} \le 1, \forall n, \\
        & |\tilde{\varphi}_{t,n} - \tilde{\varphi}_{r,n}| = \frac{1}{2}\pi \text{ or } \frac{3}{2}\pi, \forall n,\\
        \label{constraint:phase_relaxed}
        & \beta_{t,n}^2  + \beta_{r,n}^2 = 1, 0 \le \beta_{t,n}, \beta_{r,n} \le 1, \forall n, \\
        & \eqref{constraint:P1_2} - \eqref{constraint:P1_5}, \eqref{constraint:P1.1_1}, \eqref{constraint:P1.1_2}
    \end{align}
\end{subequations}
where $\tilde{ \boldsymbol{\chi} } \triangleq  \{\mathbf{U}, \mathbf{P}, \mathbf{R}_s, \mathbf{F}, \boldsymbol{\theta}_t, \boldsymbol{\theta}_r, \tilde{\boldsymbol{\theta}}_t, \tilde{\boldsymbol{\theta}}_r\}$ represents all optimization variables. Note that in problem \eqref{problem:P_transform_2}, the phase-shift constraints of the original optimization variables $\boldsymbol{\theta}_t$ and $\boldsymbol{\theta}_r$ have been relaxed to be independent, and the coupled T\&R phase-shift constraints are only related to the auxiliary variables $\tilde{\boldsymbol{\theta}}_t$ and $\tilde{\boldsymbol{\theta}}_r$. By introducing the Lagrangian dual variable $\boldsymbol{\lambda}_i \in \mathbb{C}^{N \times 1}, \forall i \in \{t,r\},$ for the additional equality constraints \eqref{constraint:phase_equality}, the following AL problem of \eqref{problem:P_transform_2} can be obtained:
\begin{subequations} \label{problem:AL_problem_2}
    \begin{align}
        \tilde{\mathcal{P}}_{\mathrm{AL}}(\rho, \mathbf{\Upsilon}, & \boldsymbol{\lambda}_i): \quad \min_{\tilde{ \boldsymbol{\chi} }} \quad 
        \mathrm{tr}(\mathbf{U}^{-1}) + \tilde{P}_\rho(\tilde{ \boldsymbol{\chi} }, \mathbf{\Upsilon}, \boldsymbol{\lambda}_i) \\
        \mathrm{s.t.} \quad & \eqref{constraint:P1_2} - \eqref{constraint:P1_5}, \eqref{constraint:P1.1_1}, \eqref{constraint:P1.1_2}, \eqref{constraint:phase_equality} - \eqref{constraint:phase_relaxed}, 
    \end{align}
\end{subequations}
where 
\begin{align}
    \tilde{P}_\rho(\tilde{ \boldsymbol{\chi} }, \mathbf{\Upsilon}, \boldsymbol{\lambda}_i) = &\frac{1}{2\rho} \big( \|\mathbf{F} - \mathbf{\Theta}_r \mathbf{G} \mathbf{R}_x \mathbf{G}^H \mathbf{\Theta}_r^H + \rho \mathbf{\Upsilon}\|^2 \nonumber \\
    &+ \sum_{ i \in \{t,r\} } \| \tilde{\boldsymbol{\theta}}_i - \boldsymbol{\theta}_i + \rho \boldsymbol{\lambda}_i \|^2  \big)
\end{align}
Thus, the PDD-based algorithm for solving problem $\tilde{\mathcal{P}}$ is summarized in \textbf{Algorithm \ref{alg:PDD_2}} and the constraint violation function is defined as 
\begin{align}
    \tilde{h}(\tilde{ \boldsymbol{\chi} }) = \max \big\{ &\|\mathbf{F} - \mathbf{\Theta}_r \mathbf{G} \mathbf{R}_x \mathbf{G}^H \mathbf{\Theta}_r^H \|_{\infty}, \nonumber \\
    &\| \tilde{\boldsymbol{\theta}}_t - \boldsymbol{\theta}_t \|_{\infty}, \| \tilde{\boldsymbol{\theta}}_r - \boldsymbol{\theta}_r \|_{\infty}
    \big\}.
\end{align} 

\begin{algorithm}[tb]
    \caption{PDD-based algorithm for solving problem $\tilde{\mathcal{P}}$.}
    \label{alg:PDD_2}
    \begin{algorithmic}[1]
        \STATE{Initialize feasible $\tilde{ \boldsymbol{\chi} }^{[0]}$, $\mathbf{\Upsilon}^{[0]}$, $\boldsymbol{\lambda}_i^{[0]}, \forall i \in \{t,r\}$, $\rho^{[0]} \ge 0$, and set $0 < c < 1$,  $n=1$.}
        \REPEAT
        \STATE{ $\tilde{ \boldsymbol{\chi} }^{[n+1]} = \mathrm{optimize} \left(  \tilde{\mathcal{P}}_{\mathrm{AL}}(\rho^{[n]}, \mathbf{\Upsilon}^{[n]}, \boldsymbol{\lambda}_i^{[n]}) \right) $ }
        \IF{$h(\tilde{ \boldsymbol{\chi} }^{[n+1]}) \le \eta^{[n]}$}
            \STATE{ $\mathbf{\Upsilon}^{[n+1]} = \mathbf{\Upsilon}^{[n]} + \frac{1}{\rho} ( \mathbf{F}^{[n+1]} - \mathbf{\Theta}_r^{[n+1]} \mathbf{G} \mathbf{R}_x^{[n+1]}$ $\times \mathbf{G}^H (\mathbf{\Theta}_r^{[n+1]})^H )$. }
            \STATE{ $\boldsymbol{\lambda}_i^{[n+1]} = \boldsymbol{\lambda}_i^{[n]} + \frac{1}{\rho} ( \tilde{\boldsymbol{\theta}}_i^{[n+1]} - \boldsymbol{\theta}_i^{[n+1]} ), \forall i \in \{t,r\} $. }
            \STATE{ $\rho^{[n+1]} = \rho^{[n]}$.}
        \ELSE
            \STATE{ $\mathbf{\Upsilon}^{[n+1]} = \mathbf{\Upsilon}^{[n]}$, $\boldsymbol{\lambda}_i^{[n+1]} = \boldsymbol{\lambda}_i^{[n]}, \forall i \in \{t,r\}$.}
            \STATE{ $\rho^{[n+1]} = c \rho^{[n]}$. }
        \ENDIF
        \STATE{$n=n+1$.}
        \UNTIL{the constraint violation $\tilde{h}(\tilde{ \boldsymbol{\chi} })$ falls below a predefined threshold.}
    \end{algorithmic}
\end{algorithm}

\subsection{Proposed BCD Algorithm for Solving AL Problem \eqref{problem:AL_problem_2}}
Similar to Section \ref{sec:BCD_independent}, the BCD is exploited to solve the AL problem \eqref{problem:AL_problem_2}, where the set of optimization variables $\tilde{ \boldsymbol{\chi} }$ is divided into three blocks, namely $\{\mathbf{U}, \mathbf{F}, \mathbf{P}, \mathbf{R}_s\}$, $\{\boldsymbol{\theta}_t, \boldsymbol{\theta}_r\}$, and $\{\tilde{\boldsymbol{\theta}}_t, \tilde{\boldsymbol{\theta}}_r\}$. The solutions of the corresponding subproblems are given as follows.

\subsubsection{Subproblem with respect to $\{\mathbf{U}, \mathbf{F}, \mathbf{P}, \mathbf{R}_s\}$} 
The subproblem with respect to $\{\mathbf{U}, \mathbf{F}, \mathbf{P}, \mathbf{R}_s\}$ for the coupled T\&R phase-shift model is the same as that for the independent T\&R phase-shift model. Thus, it can be solved following the same path in Section \ref{sec:BCD_independent_PR}. 

\subsubsection{Subproblem with respect to $\{\boldsymbol{\theta}_t, \boldsymbol{\theta}_r\}$}
The subproblem with respect to $\{\boldsymbol{\theta}_t, \boldsymbol{\theta}_r\}$ can be optimized with the relaxed independent T\&R phase-shift constraints, which is given by 
\begin{subequations} \label{problem:theta_2}
    \begin{align}
        \min_{\boldsymbol{\theta}_r, \boldsymbol{\theta}_t} \quad & 
        \|\mathbf{F} - \mathbf{\Theta}_r \mathbf{G} \mathbf{R}_x \mathbf{G}^H \mathbf{\Theta}_r^H + \rho \mathbf{\Upsilon}\|^2 \nonumber \\
        &+ \sum_{i \in \{t,r\}} \| \tilde{\boldsymbol{\theta}}_i - \boldsymbol{\theta}_i + \rho \boldsymbol{\lambda}_i \|^2 \\
        \mathrm{s.t.} \quad & \gamma_k \ge \overline{\gamma}_k, \forall k \\
        & \beta_{t,n}^2  + \beta_{r,n}^2 = 1, 0 \le \beta_{t,n}, \beta_{r,n} \le 1, \forall n.
    \end{align}
\end{subequations}
Then, by defining $\tilde{\boldsymbol{\upsilon}}_i = -(\tilde{\boldsymbol{\theta}}_i + \rho \boldsymbol{\lambda}_i), \forall i \in \{t,r\},$ and following a similar path in Section \ref{sec:BCD_independent}, the objective function can be reformulated as 
\begin{align}
    &\big\| \bar{\mathbf{F}} - \sum_{j=1}^{R} \bar{\mathbf{V}}_j \boldsymbol{\theta}_r \boldsymbol{\theta}_r^H \bar{\mathbf{V}}_j^H \big\|^2 + \sum_{i \in \{t,r\}} \| \tilde{\boldsymbol{\upsilon}}_i + \boldsymbol{\theta}_i \|^2 \nonumber \\
    = & \big\|\bar{\mathbf{F}} - \sum_{j=1}^{R} \hat{\mathbf{V}}_j \boldsymbol{\vartheta}_r \boldsymbol{\vartheta}_r^H \hat{\mathbf{V}}_j^H \big\|^2 + \sum_{i \in \{t,r\} } \left(\boldsymbol{\vartheta}_i^H \mathbf{\Xi}_i \boldsymbol{\vartheta}_i + \tilde{\boldsymbol{\upsilon}}_i^H \tilde{\boldsymbol{\upsilon}}_i \right),
\end{align}
where 
\begin{equation}
    \hat{\mathbf{V}}_j = \begin{bmatrix}
        \bar{\mathbf{V}}_j &\mathbf{0}_{N}
    \end{bmatrix},
    \mathbf{\Xi}_i = \begin{bmatrix}
        \mathbf{I}_{N} &\tilde{\boldsymbol{\upsilon}}_i \\
        \tilde{\boldsymbol{\upsilon}}_i^H & 0
    \end{bmatrix},
    \boldsymbol{\vartheta}_i = \begin{bmatrix}
        \kappa_i\boldsymbol{\theta}_i\\
        \kappa_i
    \end{bmatrix},
\end{equation}
and $|\kappa_i|^2=1$. Similarly, the minimum communication SINR can be reformulated as 
\begin{align}
    \frac{1}{\overline{\gamma}_k} \boldsymbol{\vartheta}_t^H \hat{\mathbf{\Phi}}_{k,k}^* \boldsymbol{\vartheta}_t \ge \sum_{i \in \mathcal{K} \backslash k} \boldsymbol{\vartheta}_t^H \hat{\mathbf{\Phi}}_{k,i}^* \boldsymbol{\vartheta}_t + \boldsymbol{\vartheta}_t^H \hat{\mathbf{\Psi}}_k^* \boldsymbol{\vartheta}_t + \sigma_k^2, \forall k,
\end{align}
where 
\begin{equation}
    \hat{\mathbf{\Phi}}_{k,i} = \begin{bmatrix}
        \mathbf{\Phi}_{k,i} & \mathbf{0}_{N} \\ 
        \mathbf{0}_{N}^T & 0
    \end{bmatrix}, \quad 
    \hat{\mathbf{\Psi}}_k = \begin{bmatrix}
        \mathbf{\Psi}_k & \mathbf{0}_{N} \\ 
        \mathbf{0}_{N}^T & 0
    \end{bmatrix}
\end{equation}
It can be observed that the objective function and all the constraints of problem \eqref{problem:theta_2} have been transformed into the homogeneous quadratic form. Thus, we also exploit the SDR approach to approximately solve it. In particular, by defining the auxiliary variables $\hat{\mathbf{Q}}_i = \boldsymbol{\vartheta}_i \boldsymbol{\vartheta}_i^H, \forall i \in \{t,r\}$, which satisfies $\hat{\mathbf{Q}}_i \succeq 0$ and $\mathrm{rank}(\hat{\mathbf{Q}}_i) = 1$, the SDR problem of \eqref{problem:theta_2} is given by 
\begin{subequations} \label{problem:theta_SDR_2}
    \begin{align}
        \min_{\hat{\mathbf{Q}}_t, \hat{\mathbf{Q}}_r} \quad &
        \big\| \bar{\mathbf{F}} - \sum_{j=1}^{R} \hat{\mathbf{V}}_j \hat{\mathbf{Q}}_t \hat{\mathbf{V}}_j^H \big\|^2 + \sum_{i \in \{t,r\} } \mathrm{tr}(\mathbf{\Xi}_i \hat{\mathbf{Q}}_i)   \\
        \mathrm{s.t.} \quad & \frac{1}{\overline{\gamma}_k}  \mathrm{tr}( \hat{\mathbf{\Phi}}_{k,k}^* \hat{\mathbf{Q}}_t) \ge 
        \sum_{i \in \mathcal{K} \backslash k} \mathrm{tr}( \hat{\mathbf{\Phi}}_{k,i}^* \hat{\mathbf{Q}}_t) \nonumber \\
        &\qquad \qquad \qquad \qquad + \mathrm{tr}( \hat{\mathbf{\Psi}}_k^* \hat{\mathbf{Q}}_t) + \sigma_k^2, \forall k, \\
        & [\hat{\mathbf{Q}}_t]_{n,n} + [\hat{\mathbf{Q}}_r]_{n,n} = 1,  \forall n, \\
        & [\hat{\mathbf{Q}}_t]_{N+1,N+1} = [\hat{\mathbf{Q}}_r]_{N+1,N+1} = 1, \\
        & \hat{\mathbf{Q}}_t \succeq 0, \hat{\mathbf{Q}}_r \succeq 0,
    \end{align}
\end{subequations}  
which is a convex SDP. Denote $\boldsymbol{\vartheta}_i^{\star}, \forall i \in \{t,r\},$ as the approximated rank-one solution. Then, the corresponding solution of problem \eqref{problem:theta_2} is given by 
\begin{equation}
    \boldsymbol{\theta}_i^{\star} = \frac{1}{ |[\boldsymbol{\vartheta}_i^{\star}]_{N+1}|} [\boldsymbol{\vartheta}_i^{\star}]_{1:N}, \forall i \in \{t,r\}.
\end{equation}
Similarly, the penalty-based AltMin method in \cite{yu2021irs} can be employed to guarantee the theoretical convergence of the BCD.

\subsubsection{Subproblem with respect to $\{\tilde{\boldsymbol{\theta}}_t, \tilde{\boldsymbol{\theta}}_r\}$}
The optimization variables $\{\tilde{\boldsymbol{\theta}}_t, \tilde{\boldsymbol{\theta}}_r\}$ only appear in the penalty term in the objective function as well as the coupled amplitude and phase-shift constraints. Thus, the subproblem with respect to $\{\tilde{\boldsymbol{\theta}}_t, \tilde{\boldsymbol{\theta}}_r\}$ is given by
\begin{subequations} \label{problem:tilde_theta}
    \begin{align}
        \min_{\tilde{\boldsymbol{\theta}}_t, \tilde{\boldsymbol{\theta}}_r} \quad & \sum_{i \in \{t,r\}}
        \| \tilde{\boldsymbol{\theta}}_i - \boldsymbol{\theta}_i + \rho \boldsymbol{\lambda}_i \|^2 \\
        \mathrm{s.t.} \quad  
        \label{constraint:tilde_theta_1}
        & \tilde{\beta}_{t,n}^2 + \tilde{\beta}_{r,n}^2 = 1, 0 \le \tilde{\beta}_{t,n}, \tilde{\beta}_{r,n} \le 1, \forall n, \\
        \label{constraint:tilde_theta_2}
        & |\tilde{\varphi}_{t,n} - \tilde{\varphi}_{r,n}| = \frac{1}{2}\pi \text{ or } \frac{3}{2}\pi, \forall n.
    \end{align}
\end{subequations}

\begin{figure*}[!t]
    \normalsize
    \setcounter{equation}{56}
    \begin{align}
        \label{eqn:break}
        \sum_{i \in \{t,r\}} \| \tilde{\boldsymbol{\theta}}_i - \boldsymbol{\theta}_i + \rho \boldsymbol{\lambda}_i \|^2 
        = &\sum_{i \in \{t,r\}} \| \mathrm{diag}(\tilde{\boldsymbol{\beta}}_i) \tilde{\mathbf{q}}_i + \boldsymbol{\upsilon}_i \|^2 \nonumber\\
        = &\sum_{i \in \{t,r\}} 2 \mathrm{Re}( \boldsymbol{\upsilon}_i^H \mathrm{diag}(\tilde{\boldsymbol{\beta}}_i) \tilde{\mathbf{q}}_i ) + \sum_{i \in \{t,r\}} \tilde{\mathbf{q}}_i^H \mathrm{diag}(\tilde{\boldsymbol{\beta}}_i) \mathrm{diag}(\tilde{\boldsymbol{\beta}}_i) \tilde{\mathbf{q}}_i + \sum_{i \in \{t,r\}} \boldsymbol{\upsilon}_i^H \boldsymbol{\upsilon}_i \nonumber\\
        \overset{(a)}{=} & \sum_{i \in \{t,r\}} 2 \mathrm{Re}( \boldsymbol{\upsilon}_i^H \mathrm{diag}(\tilde{\boldsymbol{\beta}}_i) \tilde{\mathbf{q}}_i ) + \sum_{i \in \{t,r\}} \sum_{n \in \mathcal{N}} \tilde{\beta}_{i,n}^2 + \sum_{i \in \{t,r\}} \boldsymbol{\upsilon}_i^H \boldsymbol{\upsilon}_i \nonumber \\
        \overset{(b)}{=} &\sum_{i \in \{t,r\}} 2 \mathrm{Re}( \boldsymbol{\upsilon}_i^H \mathrm{diag}(\tilde{\boldsymbol{\beta}}_i) \tilde{\mathbf{q}}_i ) + \underbrace{N + \sum_{i \in \{t,r\}} \boldsymbol{\upsilon}_i^H \boldsymbol{\upsilon}_i}_{\text{constant}}.
    \end{align}
    \hrulefill
    \vspace*{4pt}
\end{figure*}

\noindent In this problem, both constraints are non-convex and the second constraint even requires a binary decision, which is challenging to solve. However, in the following, we show that the amplitude coefficients and the phase shift coefficients can be optimized alternately using the closed-form solutions. To this end, we reformulate $\tilde{\boldsymbol{\theta}}_i$ as
\begin{equation}
    \tilde{\boldsymbol{\theta}}_i = \mathrm{diag}(\tilde{\boldsymbol{\beta}}_i) \tilde{\mathbf{q}}_i, \forall i \in \{t,r\},
\end{equation}
where $\tilde{\boldsymbol{\beta}}_i = [\tilde{\beta}_{i,1},\dots,\tilde{\beta}_{i,N}]^T$ and $\tilde{\mathbf{q}}_i = [e^{j \tilde{\varphi}_{i,1}},\dots,e^{j \tilde{\varphi}_{i,N}}]^T$. By defining $\boldsymbol{\upsilon}_i = -\boldsymbol{\theta}_i + \rho \boldsymbol{\lambda}_i, \forall i \in \{t,r\},$ the objective function of \eqref{problem:tilde_theta} can be reformulated into \eqref{eqn:break}, as shown at the top of the page. In \eqref{eqn:break}, the equality $(a)$ is due to the property $| [\tilde{\mathbf{q}}_i]_n |^2 = |e^{j \tilde{\varphi}_{i,n}}|^2 = 1$, and the equality $(b)$ stems from the property
$\tilde{\beta}_{t,n}^2 + \tilde{\beta}_{r,n}^2 = 1$. By removing the constant term in the objective function, problem \eqref{problem:tilde_theta} can be simplified as 
\begin{subequations}
    \begin{align}
        \min_{\tilde{\boldsymbol{\beta}}_t, \tilde{\boldsymbol{\beta}}_r, \tilde{\mathbf{q}}_t, \tilde{\mathbf{q}}_r} \quad &
        \sum_{i \in \{t,r\}} \mathrm{Re}( \boldsymbol{\upsilon}_i^H \mathrm{diag}(\tilde{\boldsymbol{\beta}}_i) \tilde{\mathbf{q}}_i ) \\
        \mathrm{s.t.} \quad & \eqref{constraint:tilde_theta_1}, \eqref{constraint:tilde_theta_2}.
    \end{align}
\end{subequations}
To solve it, we first give the following two propositions.

\begin{proposition} \label{proposition:optimal_phase}
    \emph{
    (\emph{Closed-form solution for coupled T\&R phase-shift}) 
    With the coupled T\&R phase-shift constraint \eqref{constraint:tilde_theta_2}, for any given $\tilde{\boldsymbol{\beta}}_t$ and $\tilde{\boldsymbol{\beta}}_r$, the optimal solution for the $n$-th entries $\tilde{q}_{t,n}$, $\tilde{q}_{r,n}$ of $\tilde{\mathbf{q}}_t$, $\tilde{\mathbf{q}}_r$ are chosen from the following two pairs of solutions 
    \begin{equation} \label{eqn:optimal_q}
        \begin{cases}
            \tilde{q}_{t,n} = e^{j (\pi - \angle \psi_n^+ )}, \tilde{q}_{r,n} = e^{j (\frac{3}{2}\pi - \angle \psi_n^+ )}, \\
            \tilde{q}_{t,n} = e^{j (\pi - \angle \psi_n^- )}, \tilde{q}_{r,n} = e^{j (\frac{1}{2}\pi - \angle \psi_n^- )}, 
        \end{cases}
    \end{equation}
    where $\psi_n^+ = \tilde{\upsilon}_{t,n}^* + j \tilde{\upsilon}_{r,n}^*$ and $\psi_n^- = \tilde{\upsilon}_{t,n}^* - j \tilde{\upsilon}_{r,n}^*$,
    such that the value of $\mathrm{Re}( \tilde{\upsilon}_{t,n}^* \tilde{q}_{t,n} ) + \mathrm{Re}( \tilde{\upsilon}_{r,n}^* \tilde{q}_{r,n} )$ is minimized. 
    Here, $\tilde{\upsilon}_{i,n}^*$ denotes the $n$-th entry of the vector $\tilde{\boldsymbol{\upsilon}}_i^H = \boldsymbol{\upsilon}_i^H \mathrm{diag}(\tilde{\boldsymbol{\beta}}_i), \forall i \in \{t,r\}$.
    }
\end{proposition}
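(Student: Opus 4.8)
The plan is to exploit the element-wise separability of the problem. Since both the objective $\sum_{i \in \{t,r\}} \mathrm{Re}(\tilde{\boldsymbol{\upsilon}}_i^H \tilde{\mathbf{q}}_i) = \sum_{n \in \mathcal{N}} \big[ \mathrm{Re}(\tilde{\upsilon}_{t,n}^* \tilde{q}_{t,n}) + \mathrm{Re}(\tilde{\upsilon}_{r,n}^* \tilde{q}_{r,n}) \big]$ and the coupling constraint \eqref{constraint:tilde_theta_2} are fully decoupled across the index $n$, I would reduce the task to $N$ independent scalar subproblems, each minimizing $\mathrm{Re}(\tilde{\upsilon}_{t,n}^* \tilde{q}_{t,n}) + \mathrm{Re}(\tilde{\upsilon}_{r,n}^* \tilde{q}_{r,n})$ over the unit-modulus variables $\tilde{q}_{t,n} = e^{j\tilde{\varphi}_{t,n}}$ and $\tilde{q}_{r,n} = e^{j\tilde{\varphi}_{r,n}}$, for fixed amplitudes.

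The key observation is that the constraint $|\tilde{\varphi}_{t,n} - \tilde{\varphi}_{r,n}| = \tfrac{1}{2}\pi$ or $\tfrac{3}{2}\pi$ is equivalent, modulo $2\pi$, to exactly the two cases $\tilde{\varphi}_{r,n} = \tilde{\varphi}_{t,n} + \tfrac{1}{2}\pi$ and $\tilde{\varphi}_{r,n} = \tilde{\varphi}_{t,n} - \tfrac{1}{2}\pi$, so I would treat each case separately. Substituting $e^{j\tilde{\varphi}_{r,n}} = \pm j \, e^{j\tilde{\varphi}_{t,n}}$ into the scalar objective collapses its two real parts into a single phasor: for the $+\tfrac{1}{2}\pi$ case the subproblem becomes $\min_{\tilde{\varphi}_{t,n}} \mathrm{Re}(\psi_n^+ e^{j\tilde{\varphi}_{t,n}})$ with $\psi_n^+ = \tilde{\upsilon}_{t,n}^* + j\tilde{\upsilon}_{r,n}^*$, while for the $-\tfrac{1}{2}\pi$ case it becomes $\min_{\tilde{\varphi}_{t,n}} \mathrm{Re}(\psi_n^- e^{j\tilde{\varphi}_{t,n}})$ with $\psi_n^- = \tilde{\upsilon}_{t,n}^* - j\tilde{\upsilon}_{r,n}^*$.

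Each collapsed subproblem is the elementary phase-alignment problem $\min_\theta \mathrm{Re}(c \, e^{j\theta}) = \min_\theta |c|\cos(\theta + \angle c)$, whose minimizer is $\theta = \pi - \angle c$ and whose optimal value is $-|c|$. Applying this with $c = \psi_n^{+}$ and $c = \psi_n^{-}$ respectively yields $\tilde{\varphi}_{t,n} = \pi - \angle\psi_n^{\pm}$, and back-substituting the coupling relation $\tilde{\varphi}_{r,n} = \tilde{\varphi}_{t,n} \pm \tfrac{1}{2}\pi$ reproduces exactly the two candidate pairs in \eqref{eqn:optimal_q}, with optimal values $-|\psi_n^+|$ and $-|\psi_n^-|$. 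I would close the argument by noting that the true optimum for element $n$ is the pair attaining the smaller of these two values, i.e.\ the one associated with $\max\{|\psi_n^+|, |\psi_n^-|\}$, which is why the proposition presents the solution as a selection between two pairs.

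The main obstacle, though minor, is precisely this discrete branching induced by the coupling constraint: unlike the independent phase-shift model, the two admissible phase offsets cannot be merged into a single continuous optimization, so the proof must construct both closed-form candidates and compare their objective values rather than deriving one expression. The rest of the argument is routine phasor manipulation once the per-element decoupling and the $\pm\tfrac{1}{2}\pi$ case split are in place.
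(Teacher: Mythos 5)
Your proposal is correct and follows essentially the same route as the paper's Appendix~C: decouple over $n$, rewrite the coupled phase constraint as $\tilde{q}_{r,n}=\pm j\,\tilde{q}_{t,n}$, substitute to collapse the objective into $\mathrm{Re}\left(\psi_n^{\pm}\tilde{q}_{t,n}\right)$, and solve the elementary phase-alignment problem. Your explicit remark that the final selection picks the branch attaining $-\max\{|\psi_n^+|,|\psi_n^-|\}$ is a small clarification the paper leaves implicit, but it is not a different argument.
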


\begin{proof}
    Please refer to Appendix~C.
\end{proof}

\begin{proposition} \label{proposition:optimal_amplitude}
    \emph{
    (\emph{Closed-form solution for amplitude})
    For any given $\tilde{\mathbf{q}}_t$ and $\tilde{\mathbf{q}}_r$, the optimal solution of the $n$-th entries $\tilde{\beta}_{t,n}$, $\tilde{\beta}_{r,n}$ of $\tilde{\boldsymbol{\beta}}_t$, $\tilde{\boldsymbol{\beta}}_r$ are given by
    \begin{align} 
        \label{eqn:optimal_beta}
        & \tilde{\beta}_{t,n} = \sin \omega_n, \tilde{\beta}_{r,n} = \cos \omega_n, \\
        \label{eqn:optimal_omega}
        & \omega_n = \begin{cases}
            -\frac{1}{2}\pi - \psi_n, &\text{if } \psi_n \in [-\pi, -\frac{1}{2}\pi),\\[-0.5em]
            0, & \text{if } \psi_n \in [-\frac{1}{2}\pi, \frac{1}{4}\pi), \\[-0.5em]
            \frac{1}{2}\pi, &\text{otherwise},
        \end{cases}
    \end{align}  
    where $\psi_n = \mathrm{sgn}(b_n) \arccos(\frac{a_n}{\sqrt{a_n^2 + b_n^2}}) \in [-\pi, \pi]$, $a_n = |\breve{\upsilon}_{t,n}^*| \cos (\angle \breve{\upsilon}_{t,n}^*)$, $b_n = |\breve{\upsilon}_{r,n}^*| \cos (\angle \breve{\upsilon}_{r,n}^*)$,
    and $\breve{\upsilon}_{i,n}^*$ is the $n$-th entry of the vector $\breve{\boldsymbol{\upsilon}}_i^H = \boldsymbol{\upsilon}_i^H\mathrm{diag}(\tilde{\mathbf{q}}_i), \forall i \in \{t,r\}$.
    }
\end{proposition}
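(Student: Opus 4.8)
The plan is to exploit the fact that, once $\tilde{\mathbf{q}}_t,\tilde{\mathbf{q}}_r$ are fixed, the objective of problem \eqref{problem:tilde_theta} becomes \emph{real-linear and fully separable} across the $N$ elements, so that the vector problem decouples into $N$ scalar problems, each involving only the pair $(\tilde\beta_{t,n},\tilde\beta_{r,n})$. First I would substitute $\tilde{\boldsymbol{\theta}}_i=\mathrm{diag}(\tilde{\mathbf{q}}_i)\tilde{\boldsymbol{\beta}}_i$ and use $\breve{\boldsymbol{\upsilon}}_i^H=\boldsymbol{\upsilon}_i^H\mathrm{diag}(\tilde{\mathbf{q}}_i)$ to rewrite the (constant-removed) objective $\sum_{i\in\{t,r\}}\mathrm{Re}(\boldsymbol{\upsilon}_i^H\mathrm{diag}(\tilde{\boldsymbol{\beta}}_i)\tilde{\mathbf{q}}_i)$ as $\sum_n( a_n\tilde\beta_{t,n}+b_n\tilde\beta_{r,n})$, where $a_n=\mathrm{Re}(\breve\upsilon_{t,n}^*)=|\breve\upsilon_{t,n}^*|\cos(\angle\breve\upsilon_{t,n}^*)$ and $b_n$ is the analogous quantity for index $r$; these are exactly the $a_n,b_n$ of the statement, and the amplitude constraint \eqref{constraint:tilde_theta_1} likewise decouples over $n$ (while the fixed phases make \eqref{constraint:tilde_theta_2} inert here).

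Next, for each element I would encode the feasible set $\{\tilde\beta_{t,n}^2+\tilde\beta_{r,n}^2=1,\ \tilde\beta_{t,n},\tilde\beta_{r,n}\in[0,1]\}$—the first-quadrant arc of the unit circle—by the parametrization $\tilde\beta_{t,n}=\sin\omega_n,\ \tilde\beta_{r,n}=\cos\omega_n$ with $\omega_n\in[0,\tfrac{\pi}{2}]$, which automatically enforces both the energy-conservation equality and the box bounds. This reduces the scalar problem to minimizing $a_n\sin\omega_n+b_n\cos\omega_n$ over $\omega_n\in[0,\tfrac{\pi}{2}]$. Applying the harmonic-addition identity I would rewrite this as $\sqrt{a_n^2+b_n^2}\,\sin(\omega_n+\psi_n)$ and verify that the phase produced by requiring $\cos\psi_n=a_n/\sqrt{a_n^2+b_n^2}$ together with $\mathrm{sgn}(\sin\psi_n)=\mathrm{sgn}(b_n)$ is precisely $\psi_n=\mathrm{sgn}(b_n)\arccos(a_n/\sqrt{a_n^2+b_n^2})\in[-\pi,\pi]$ as defined in the statement. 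Since $\sqrt{a_n^2+b_n^2}\ge 0$, minimizing then amounts to minimizing $\sin(\omega_n+\psi_n)$ over the length-$\tfrac{\pi}{2}$ arc $\omega_n+\psi_n\in[\psi_n,\psi_n+\tfrac{\pi}{2}]$.

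The remaining and most delicate step is the case analysis that locates this minimizer and yields the three branches of \eqref{eqn:optimal_omega}. I would argue that $\sin$ attains its unconstrained minimum $-1$ inside the arc exactly when $-\tfrac{\pi}{2}\pmod{2\pi}$ lies in $[\psi_n,\psi_n+\tfrac{\pi}{2}]$, i.e. when $\psi_n\in[-\pi,-\tfrac{\pi}{2})$, in which case the interior stationary point $\omega_n=-\tfrac{\pi}{2}-\psi_n$ is feasible and optimal (first branch). Otherwise the minimum is attained at an endpoint, and comparing the two endpoint values $\sin\psi_n$ (at $\omega_n=0$) and $\cos\psi_n$ (at $\omega_n=\tfrac{\pi}{2}$) through $\sin\psi_n-\cos\psi_n=\sqrt2\sin(\psi_n-\tfrac{\pi}{4})$ shows that $\omega_n=0$ is optimal for $\psi_n\in[-\tfrac{\pi}{2},\tfrac{\pi}{4})$ and $\omega_n=\tfrac{\pi}{2}$ for $\psi_n\in[\tfrac{\pi}{4},\pi]$, giving the last two branches. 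Substituting the optimal $\omega_n$ back through $\tilde\beta_{t,n}=\sin\omega_n,\ \tilde\beta_{r,n}=\cos\omega_n$ then recovers \eqref{eqn:optimal_beta}. I expect the bookkeeping of the $2\pi$-periodicity of $\sin$ over the arc—in particular confirming at the boundary $\psi_n=\pi$ that the endpoint $\omega_n=\tfrac{\pi}{2}$ already realizes the global value $-1$ so no other feasible point can do better—to be the part requiring the most care.
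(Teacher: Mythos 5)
Your proposal is correct and follows essentially the same route as the paper's Appendix~D: separate the problem over $n$, reduce the objective to $a_n\tilde\beta_{t,n}+b_n\tilde\beta_{r,n}$, parametrize the first-quadrant arc by $\omega_n\in[0,\tfrac{\pi}{2}]$, and apply the harmonic-addition identity to get $\sqrt{a_n^2+b_n^2}\sin(\omega_n+\psi_n)$. The only difference is that you explicitly carry out the three-branch case analysis (interior minimizer when $\psi_n\in[-\pi,-\tfrac{\pi}{2})$, endpoint comparison via $\sin\psi_n-\cos\psi_n=\sqrt{2}\sin(\psi_n-\tfrac{\pi}{4})$ otherwise), which the paper compresses into the remark that the optimal $\omega_n$ ``can be readily obtained.''
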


\begin{proof}
    Please refer to Appendix~D.
\end{proof}

According to \textbf{Proposition \ref{proposition:optimal_phase}} and \textbf{Proposition \ref{proposition:optimal_amplitude}}, by fixing the amplitude (phase-shift) coefficients, the phase-shift (amplitude) coefficients have the optimal closed-form solutions. Consequently, problem \eqref{problem:tilde_theta} can also be solved iteratively, the detail of which is given in \textbf{Algorithm \ref{alg:closed_form_AO}}. Since the optimal solution is obtained at each step, the convergence of \textbf{Algorithm \ref{alg:closed_form_AO}} to the stationary points is guaranteed \cite{razaviyayn2013unified}.

\begin{algorithm}[tb]
    \caption{AO algorithm for solving problem \eqref{problem:tilde_theta}.}
    \label{alg:closed_form_AO}
    \begin{algorithmic}[1]
        \STATE{Initialize feasible $\tilde{\mathbf{q}}_t$, $\tilde{\mathbf{q}}_r$, $\tilde{\boldsymbol{\beta}}_t$, and $\tilde{\boldsymbol{\beta}}_r$.}
        \REPEAT
        \STATE{update each entry of $\tilde{\mathbf{q}}_t$ and $\tilde{\mathbf{q}}_r$ by \eqref{eqn:optimal_q}}
        \STATE{update each entry of $\tilde{\boldsymbol{\beta}}_t$ and $\tilde{\boldsymbol{\beta}}_r$ by \eqref{eqn:optimal_beta}.}
        \UNTIL{the fractional reduction of the objective value falls below a predefined threshold.}
    \end{algorithmic}
\end{algorithm}

\begin{algorithm}[tb]
    \caption{BCD algorithm for solving problem \eqref{problem:AL_problem_2}.}
    \label{alg:BCD_2}
    \begin{algorithmic}[1]
        \STATE{Initialize feasible $\tilde{ \boldsymbol{\chi} }$ such that $\mathbf{F} = \mathbf{\Theta}_r \mathbf{G} \mathbf{R}_x \mathbf{G}^H \mathbf{\Theta}_r^H$ and $\tilde{\boldsymbol{\theta}}_i = \boldsymbol{\theta}_i, \forall i \in \{t,r\}$.}
        \REPEAT
        \STATE{update $\{\mathbf{U}, \mathbf{F}, \mathbf{P}, \mathbf{R}_s\}$ by solving problem \eqref{problem:P_R_SDR}.}
        \STATE{update $\{\boldsymbol{\theta}_t, \boldsymbol{\theta}_r\}$ by solving problem \eqref{problem:theta_SDR_2}.}
        \STATE{update $\{\tilde{\boldsymbol{\theta}}_t, \tilde{\boldsymbol{\theta}}_r\}$ by solving problem \eqref{problem:tilde_theta} through Algorithm \ref{alg:closed_form_AO}.}
        \UNTIL{the fractional reduction of the objective value falls below a predefined threshold.}
    \end{algorithmic}
\end{algorithm}

\begin{table*}[tbp]
    \caption{System Parameters}
    \begin{center}
    \centering
    \resizebox{\textwidth}{!}{
        \begin{tabular}{|l|l|l||l|l|l|}
            \hline
            \centering
            $M$ & The number of antennas at the BS &$10$  & $N$ & The number of passive elements at the STARS &$6 \sim 18$\\
            \hline
            $N_s$ & The number of sensor elements &$2 \sim 24$  & $P$ & Transmit power at the BS &$30$ dBm\\
            \hline
            \centering
            $K$ & The number of communication users  & $4$ or $6$  & $\sigma_k^2, \sigma_s^2$ & The noise power  & $-110$ dBm\\
            \hline
            \centering
            $\alpha_{BR}, \alpha_{RU}$ &Path loss exponents &$2$ &$L$ &The length of a coherent time block &$100$ \\
            \hline
            \centering
            $\rho_0$&  The path loss at 1 m & $30$ dB &$\phi_h, \phi_v$ &Azimuth and elevation DOAs of the target &$120^\circ, 30^\circ$ \\
            \hline
            \centering
            $\varepsilon$&  Rician factor & $3$ dB & & & \\
            \hline
        \end{tabular}
    }
    \end{center}
    \label{table:parameters}
\end{table*}

The overall BCD algorithm for solving problem \eqref{problem:AL_problem_2} is summarized in \textbf{Algorithm \ref{alg:BCD_2}}. Similarly, when the AltMin method is adopted for optimizing the block $\{\boldsymbol{\theta}_t, \boldsymbol{\theta}_r\}$, the convergence of \textbf{Algorithm \ref{alg:BCD_2}} to the stationary point can be theoretically guaranteed \cite{razaviyayn2013unified}. The complexity of this algorithm is summarized as follows. Firstly, the complexities of using the interior-point method to solve problem \eqref{problem:P_R_SDR} and \eqref{problem:theta_SDR_2} are in order of $\mathcal{O}((K^{6.5}M^{6.5} + N^{6.5})\log(1/\epsilon))$ and $\mathcal{O}((K+N+2)^{6.5}(N+1)^{6.5} \log(1/\epsilon))$, respectively \cite{liu2020joint,toh2008inexact}. Moreover, in \textbf{Algorithm \ref{alg:closed_form_AO}}, the complexities of updating $\{\tilde{\mathbf{q}}_t, \tilde{\mathbf{q}}_r\}$ and $\{\tilde{\boldsymbol{\beta}}_t, \tilde{\boldsymbol{\beta}}_r\}$ are in order of $\mathcal{O}(4N)$ and $\mathcal{O}(2N)$, respectively.

\section{Numerical Results} \label{sec:results}

In this section, the numerical results obtained through Monte Carlo simulations are provided to evaluate the performance of the proposed STARS-enabled ISAC system. We assume that the BS is $40$ m away from the STARS. The communication users are randomly distributed within $20 \sim 50$ m from the transmission side of STARS, and the sensing target is $30$ m away from the reflection side of the STARS. We assume the Rician channel model for all communication channels. Thus, the channels $\mathbf{G}$ and $\mathbf{h}_k, \forall k \in \mathcal{K},$ are given by  
\begin{align}
    \mathbf{G} &= \sqrt{ \frac{\rho_0}{d_{BR}^{\alpha_{BR}} } }   \left( \sqrt{\frac{\varepsilon}{1 + \varepsilon}} \mathbf{G}^{\mathrm{LoS}} + \sqrt{\frac{1}{1 + \varepsilon}} \mathbf{G}^{\mathrm{NLoS}} \right), \\
    \mathbf{h}_k &= \sqrt{ \frac{\rho_0}{d_{RU,k}^{\alpha_{RU}} } } \left( \sqrt{\frac{\varepsilon}{1 + \varepsilon}} \mathbf{h}_k^{\mathrm{LoS}} + \sqrt{\frac{1}{1 + \varepsilon}} \mathbf{h}_k^{\mathrm{NLoS}} \right), 
\end{align} 
where $d_{BR}$ and $d_{RU,k}$ denote the BS-STARS distance and the STARS-user-$k$ distance, respectively, $\alpha_{BR}$ and $\alpha_{RU}$ denotes the corresponding path loss exponents, $\rho_0$ denotes the path loss at the reference distance of $1$ m, $\varepsilon$ denotes the Rician factor, $\mathbf{G}^{\mathrm{LoS}}$ and $\mathbf{h}_k^{\mathrm{LoS}}$ are the deterministic LoS component, and $\mathbf{G}^{\mathrm{NLoS}}$ and $\mathbf{h}_k^{\mathrm{NLoS}}$ are the random non-LoS component modeled as Rayleigh fading. The main
adopted system parameters are given in Table \ref{table:parameters}. The CVX toolbox \cite{cvx} is used to solve all convex problems involved in the proposed algorithms. The convergence threshold of the PDD-based algorithms and the BCD-based algorithms are set as $10^{-4}$ and $10^{-3}$, respectively. Without loss of generality, we assume that all communication users have the same SINR requirement, i.e., $\overline{\gamma}_k = \overline{\gamma}, \forall k \in \mathcal{K}$.  

To verify the efficiency of the proposed framework, we compared it with a baseline that employs one conventional reflecting-only RIS and one conventional transmitting-only RIS, both with $N/2$ elements and adjacent to each other at the same location as the STARS. This baseline is essentially a special case of STARS where the amplitudes of the TCs and RCs are fixed to $\boldsymbol{\beta}_t = [\mathbf{1}_{1 \times N/2}, \mathbf{0}_{1 \times N/2}]^T$ and $\boldsymbol{\beta}_r = [\mathbf{0}_{1 \times N/2}, \mathbf{1}_{1 \times N/2}]^T$. Therefore, the resultant optimization problem can also be solved by the proposed PDD-based algorithm. All following numerical results are obtained by averaging over $50$ random channel realizations unless otherwise specified.

\begin{figure}[t!]
    \centering
    \includegraphics[width=0.4\textwidth]{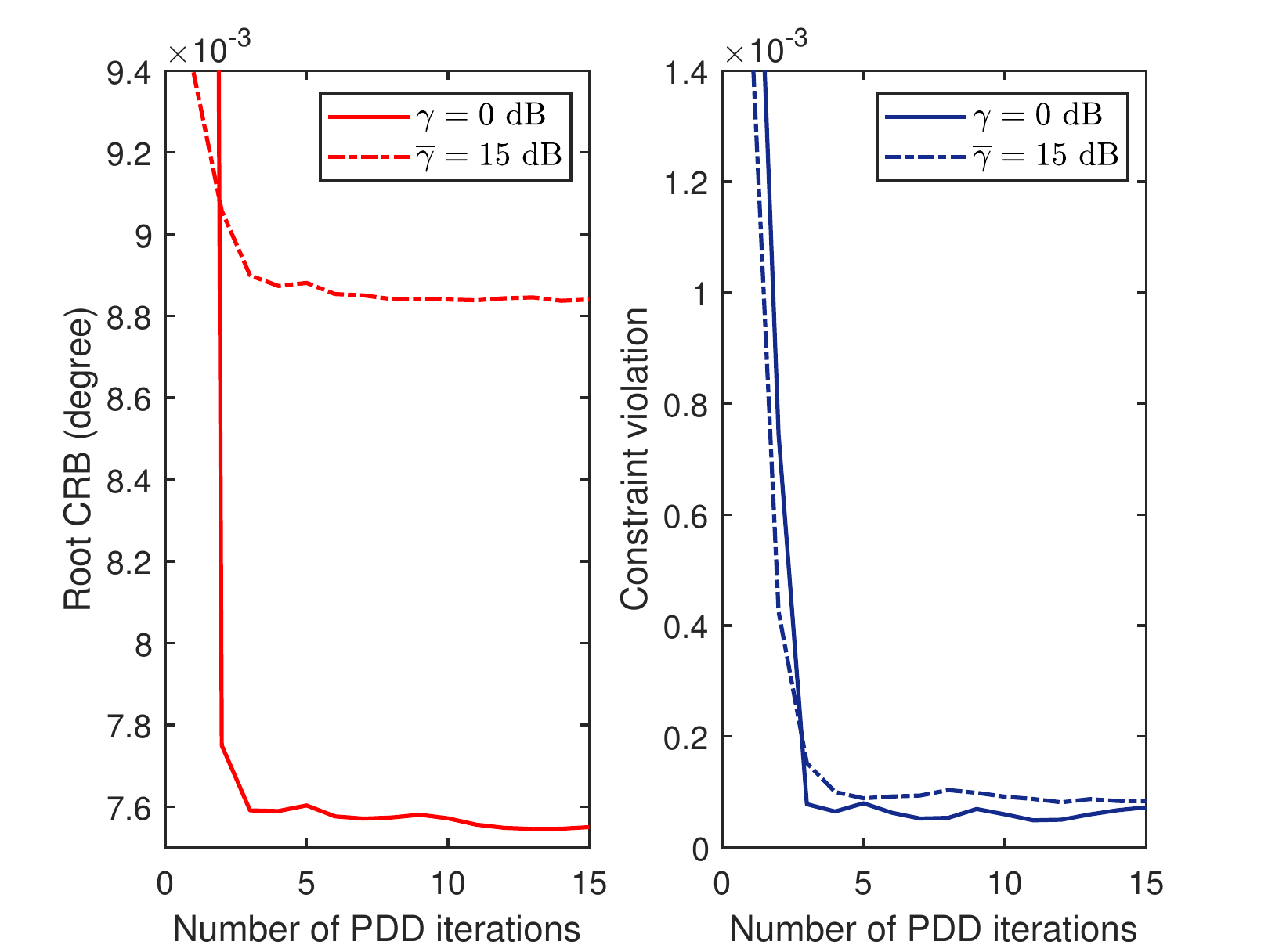}
    \caption{The convergence behavior of Algorithm \ref{alg:PDD}.}
    \label{fig:PDD_1}
\end{figure}
\begin{figure}[t!]
    \centering
    \includegraphics[width=0.4\textwidth]{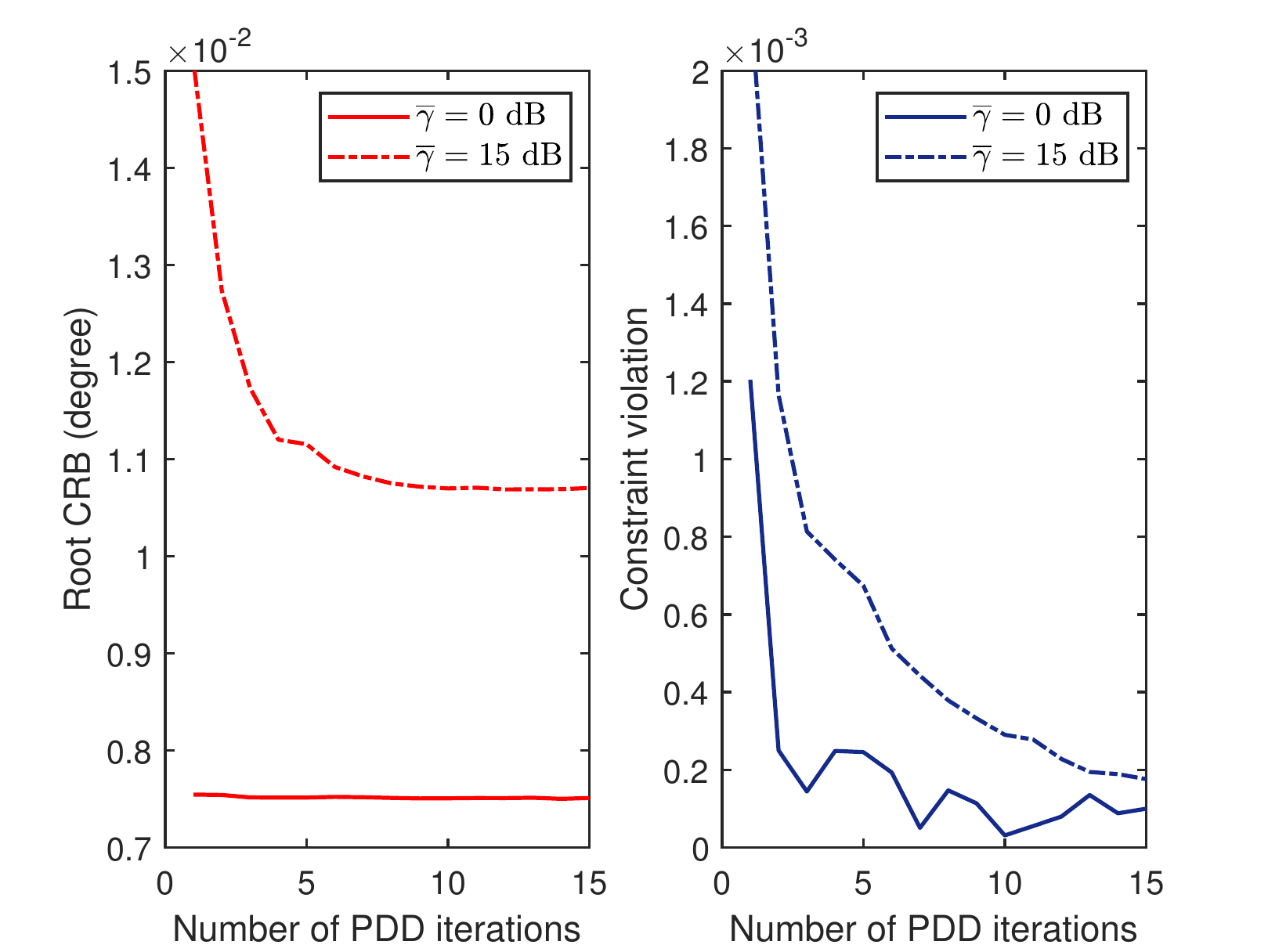}
    \caption{The convergence behavior of Algorithm \ref{alg:PDD_2}.}
    \label{fig:PDD_2}
\end{figure}

\subsection{Convergence Performance of the Proposed Algorithms}
In Fig. \ref{fig:PDD_1} and Fig. \ref{fig:PDD_2}, we examine the convergence performance of the proposed algorithms over a random channel realization. In particular, the initialization point of \textbf{Algorithm \ref{alg:PDD}} is randomly selected while that of \textbf{Algorithm \ref{alg:PDD_2}} is selected as the output of \textbf{Algorithm \ref{alg:PDD}}. Note that the original unit of the root CRB is radians, which is difficult to intuitively match the accuracy of actual DOA estimates. Therefore, we convert the unit of the root CRB from radians to degrees. In terms of the root CRB and the constraint violation, it can be observed that by using \textbf{Algorithm \ref{alg:PDD}} and \textbf{Algorithm \ref{alg:PDD_2}}, they can converge well for both independent and coupled phase-shift models. Moreover, when the SINR threshold is low, namely $\overline{\gamma}=0$dB, the \textbf{Algorithm \ref{alg:PDD_2}} almost converges at the first PDD iteration. This is because, with the low SINR threshold, the coupled phase-shift model is almost identical to the independent phase-shift model. Consequently, when \textbf{Algorithm \ref{alg:PDD_2}} is initialized as the output of \textbf{Algorithm \ref{alg:PDD}}, the optimal solution is almost reached at the beginning. This phenomenon will be further explained in the following numerical results.

\subsection{Root CRB Versus Communication SINR Threshold}
In Fig. \ref{fig:CRB_vs_SINR}, we studied the achieved root CRB versus the communication SINR threshold considering different user numbers. As can be observed, there is a tradeoff between the sensing and communication performance. This is because the higher communication performance requires more resources such as power and DoFs in the communication space, resulting in lower sensing performance. It can also be seen that independent of the phase shift model, STARS always outperforms conventional RIS in achieving a lower CRB. The conventional RIS almost becomes infeasible when $K=6$ and $\overline{\gamma} > 5$dB. This is indeed expected since conventional RIS utilizes a limited number of elements for sensing and communication spaces and therefore cannot achieve the same DoFs as STARS. Furthermore, the performance gap between coupled and independent phase shifts of the STARS is small when the communication SINR threshold $\overline{\gamma}$ is weak. However, as $\overline{\gamma}$ increases, the transmission phase shifts for the communication space become more stringent, which also limits the reflection phase shifts for the sensing space when the phase shifts are coupled, leading to a larger performance gap with the independent phase shifts.

\begin{figure}[t!]
    \centering
    \includegraphics[width=0.4\textwidth]{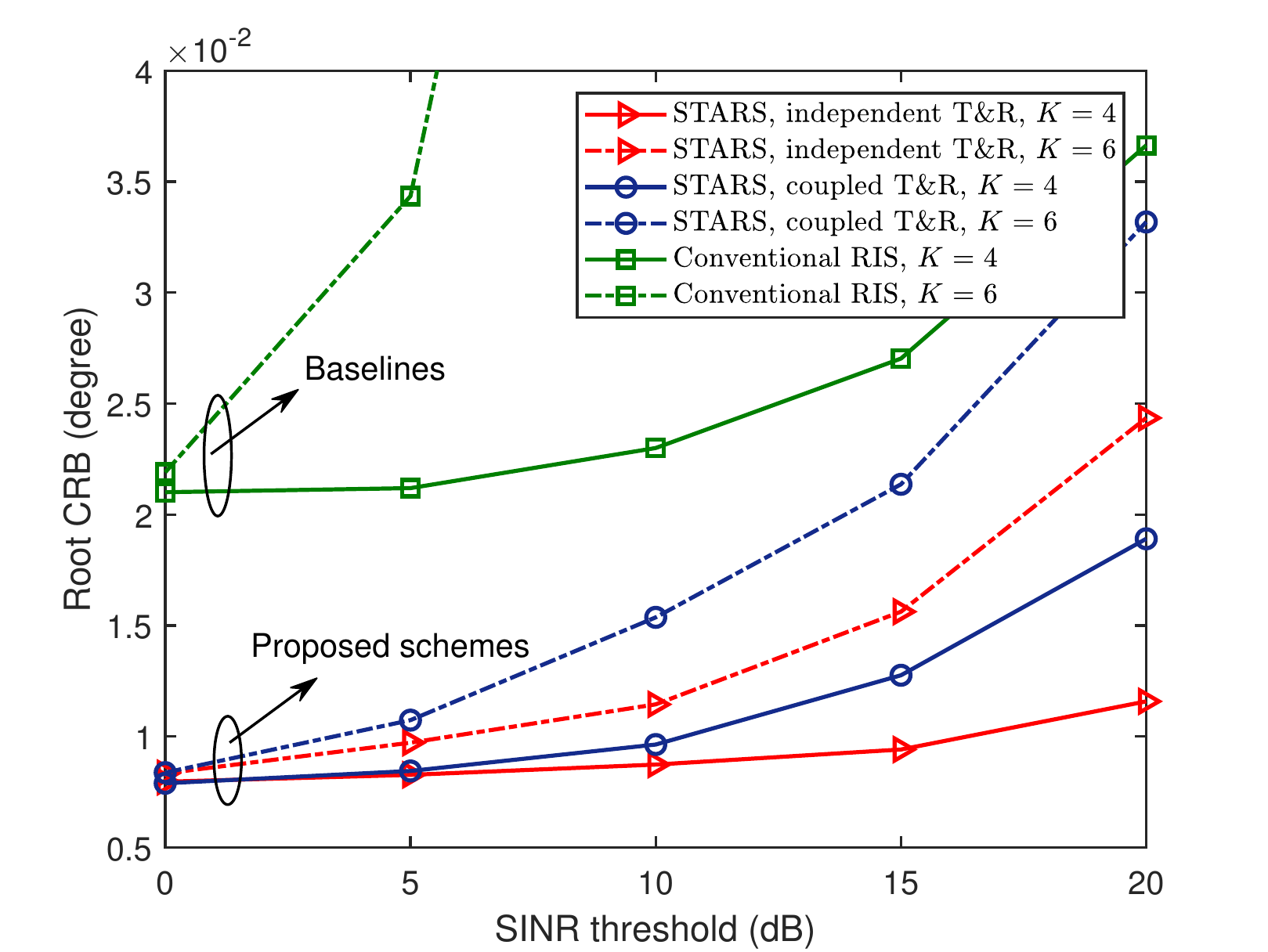}
    \caption{Root CRB versus the communication SINR threshold $\overline{\gamma}$ for $N=10$ and $N_s =5$.}
    \label{fig:CRB_vs_SINR}
\end{figure}
\begin{figure}[t!]
    \centering
    \includegraphics[width=0.4\textwidth]{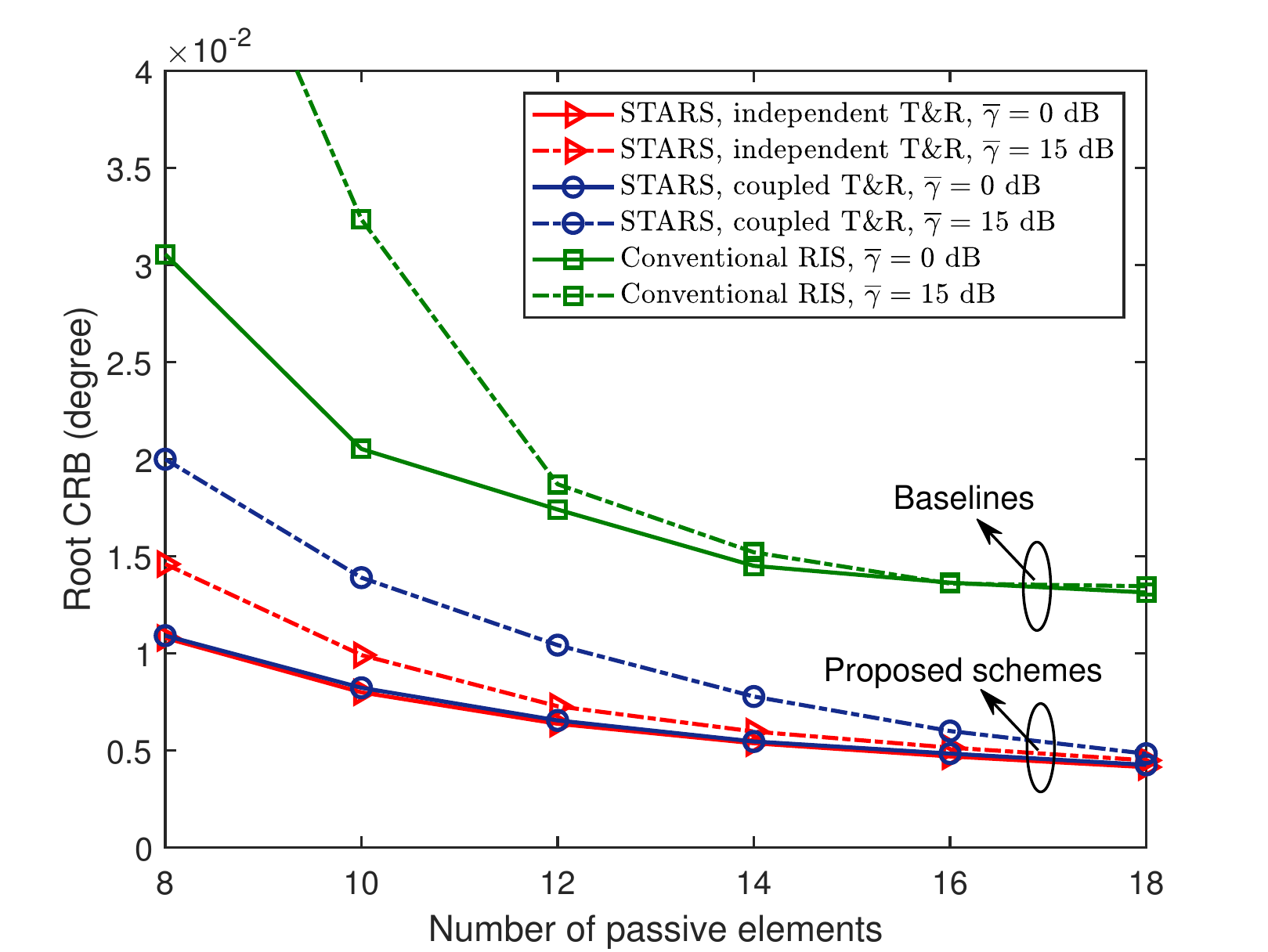}
    \caption{Root CRB versus the number of passive elements $N$ for $N_s = 5$ and $K=4$.}
    \label{fig:passive}
\end{figure}

\subsection{Root CRB Versus Number of Passive Elements}
In Fig. \ref{fig:passive}, we illustrate the impact of the number of passive elements $N$ of STARS when $N_s = 5$ and $K=4$. It can be observed that as the number of passive elements increases, the root CRB decreases and the communication SINR threshold $\overline{\gamma}$ has less influence on the CRB performance. This is because more passive elements introduce more DoFs to construct a more directional sensing beam and achieve a more flexible communication channel tuning capability. Due to the same reason, when $\overline{\gamma}=15$dB, the root CRB achieved by the two phase-shift models finally converges to the same value as $N$ increases. However, when $\overline{\gamma}=0$dB, the performance of the coupled phase-shift model is always comparable to that of the independent phase-shift model.

\begin{figure}[t!]
    \centering
    \includegraphics[width=0.4\textwidth]{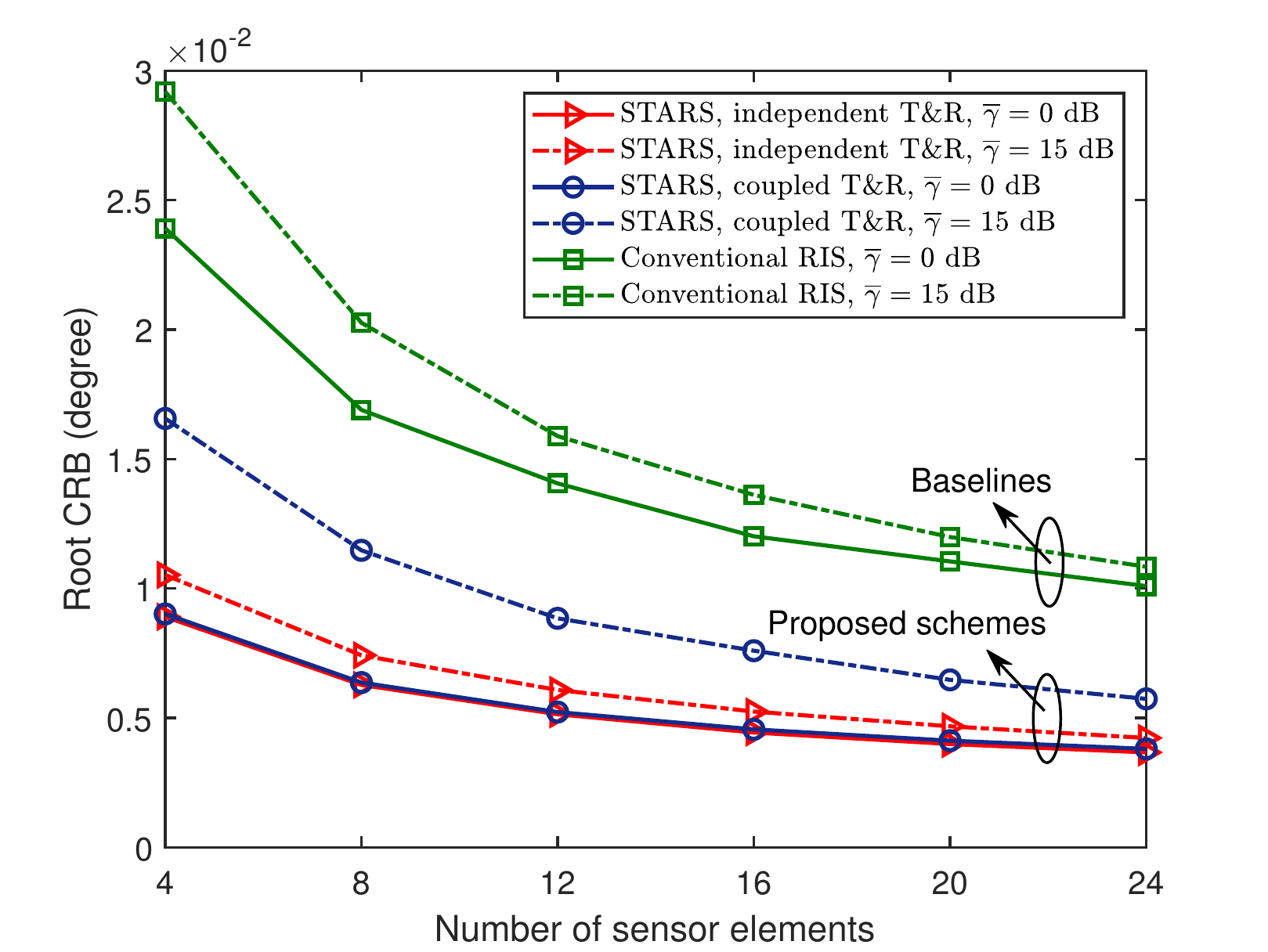}
    \caption{Root CRB versus the number of sensor elements $N_s$ for $N = 10$ and $K=4$.}
    \label{fig:sensor}
\end{figure}
\begin{figure}[t!]
    \centering
    \includegraphics[width=0.4\textwidth]{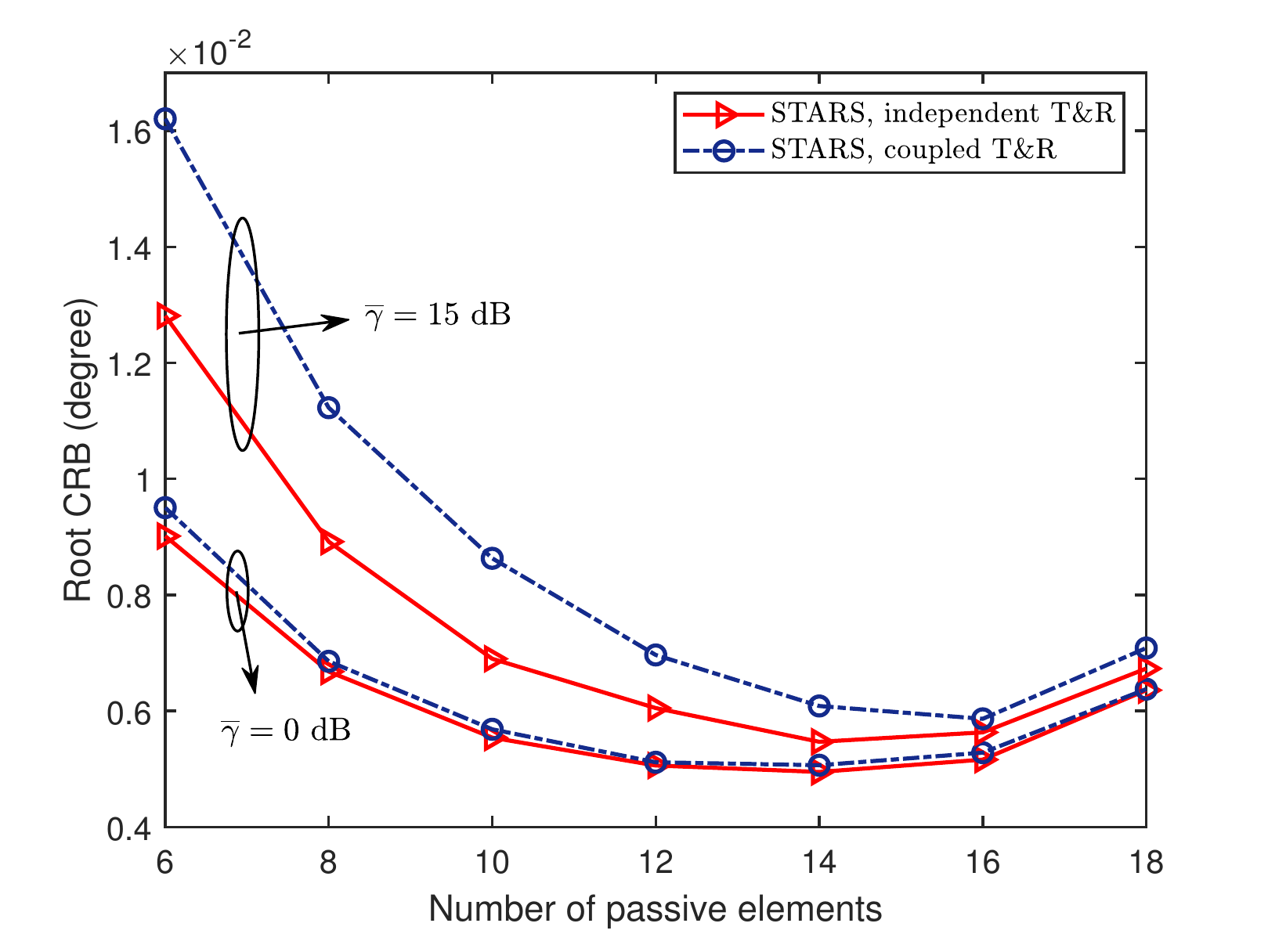}
    \caption{Tradeoff between the number of passive and sensor elements for $N + N_s = 20$ and $K=4$.}
    \label{fig:element_tradeoff}
\end{figure}

\begin{figure*}[t!]
    \centering
    \captionsetup{font={footnotesize}}
    \begin{subfigure}[t]{0.24\textwidth}
        \centering
        \includegraphics[width=1\textwidth]{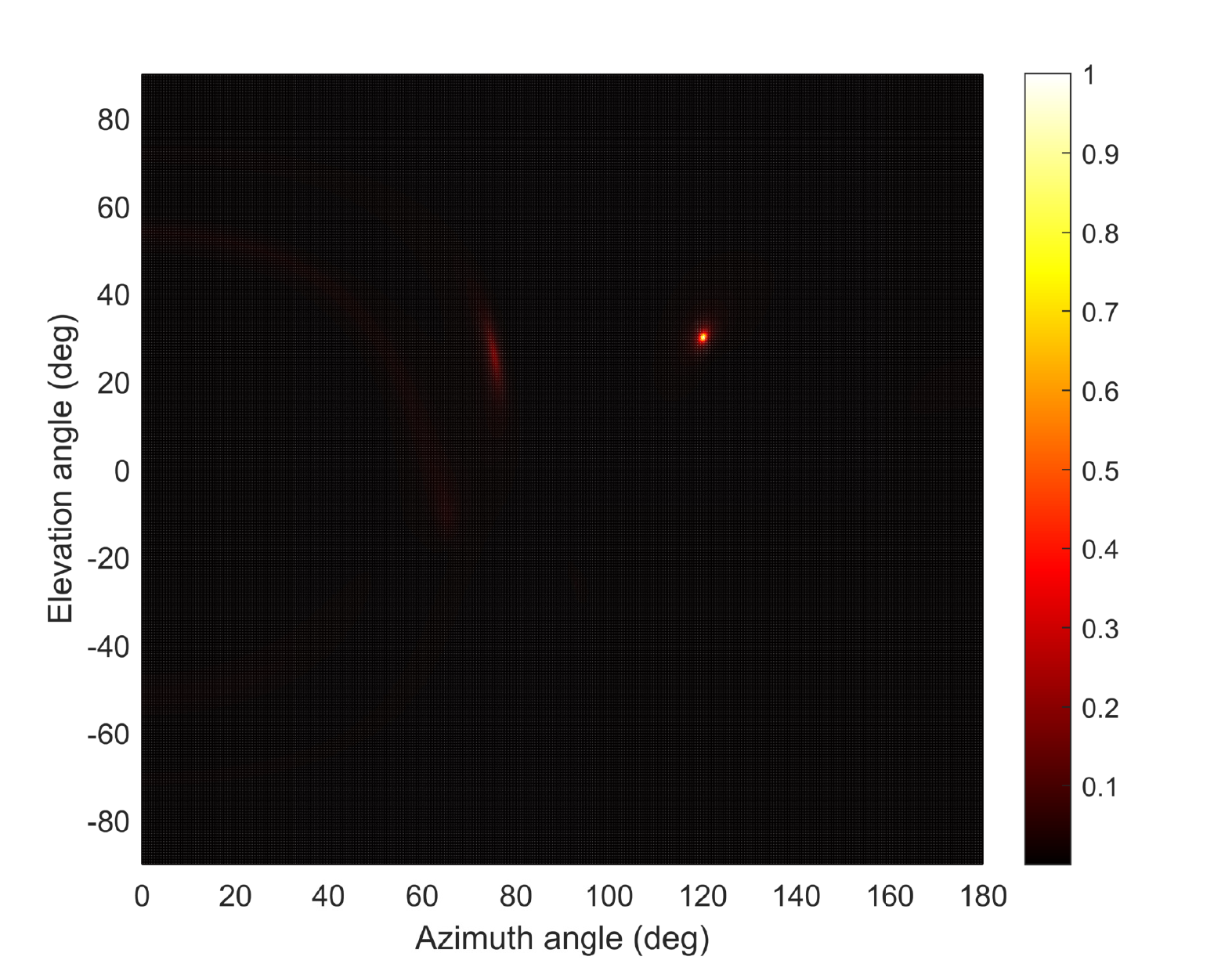}
        \caption{STARS, independent, $\overline{\gamma} = 0$dB.}
    \end{subfigure}
    \begin{subfigure}[t]{0.24\textwidth}
        \centering
        \includegraphics[width=1\textwidth]{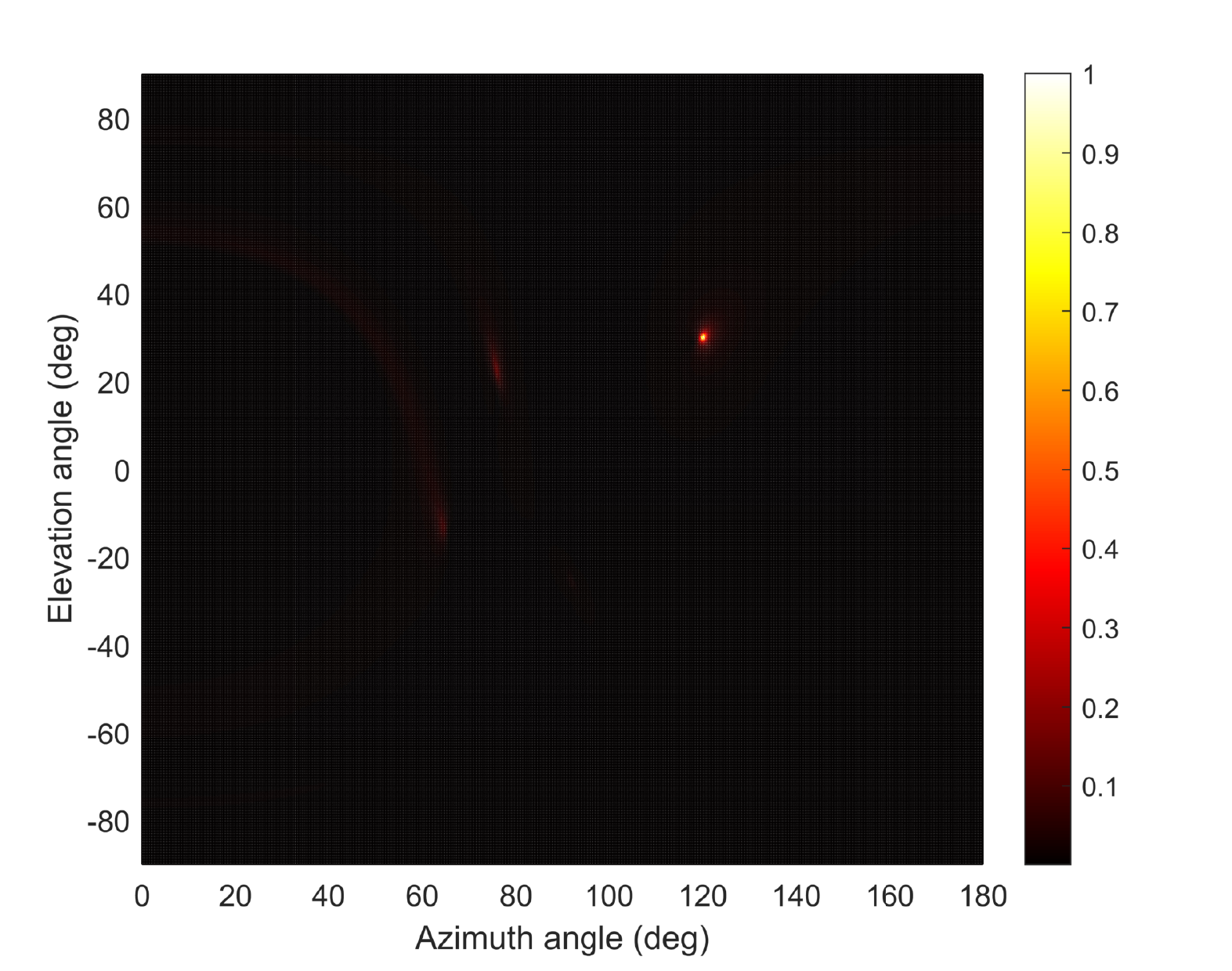}
        \caption{STARS, coupled, $\overline{\gamma} = 0$dB.}
    \end{subfigure}
    \begin{subfigure}[t]{0.24\textwidth}
        \centering
        \includegraphics[width=1\textwidth]{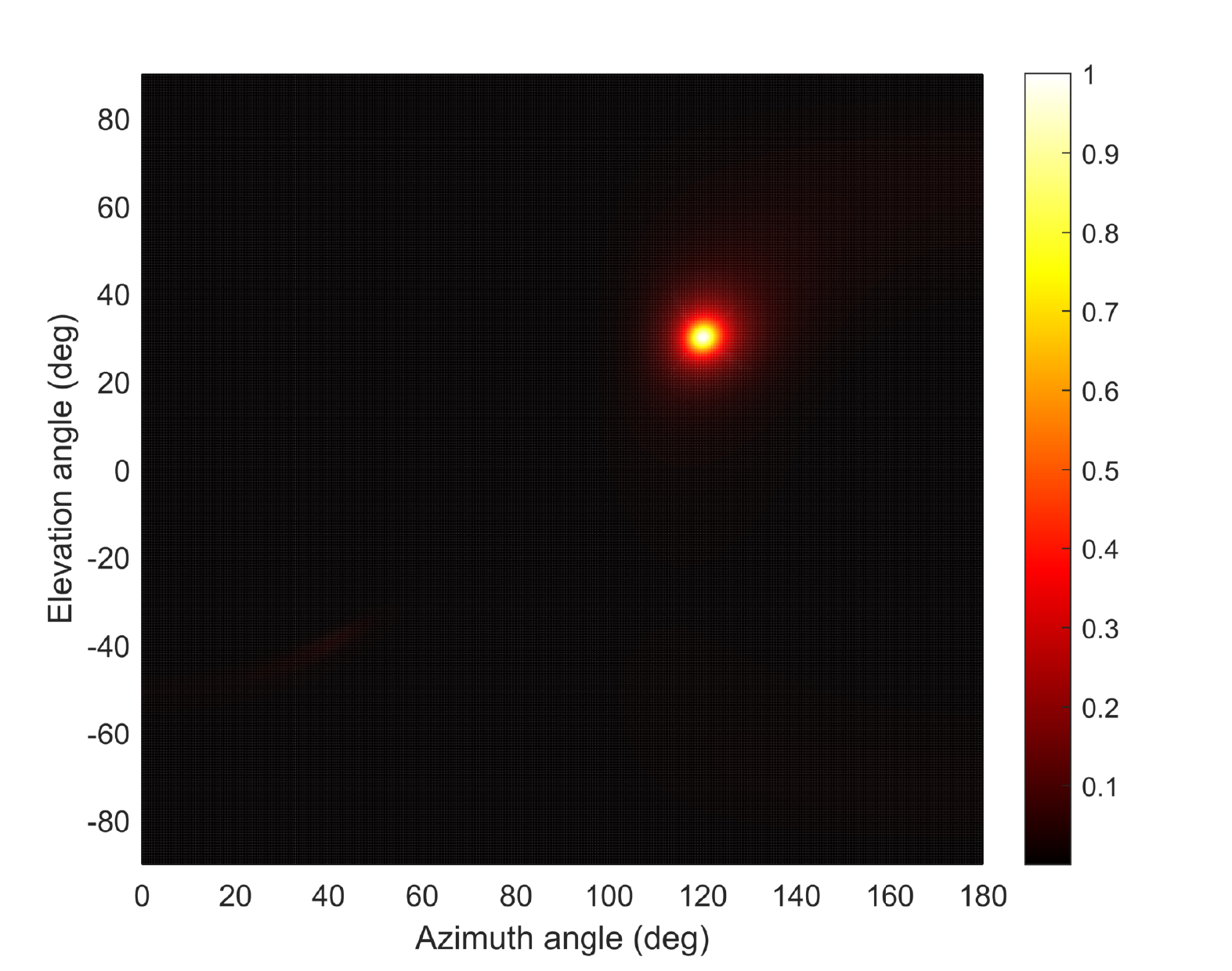}
        \caption{Conventional RIS, $\overline{\gamma} = 0$dB.}
    \end{subfigure}
    \begin{subfigure}[t]{0.24\textwidth}
        \centering
        \includegraphics[width=1\textwidth]{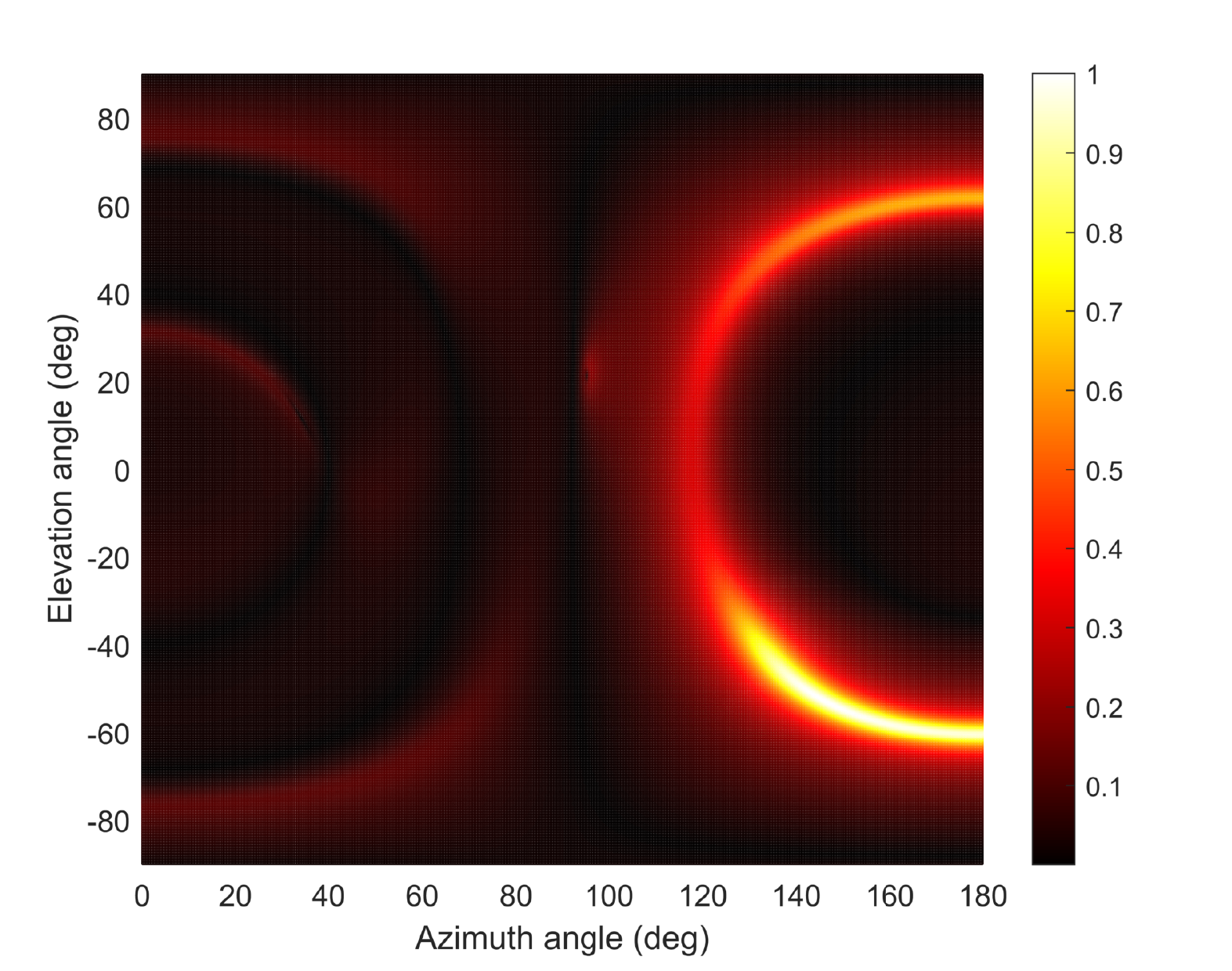}
        \caption{STARS, random, $\overline{\gamma} = 0$dB.}
    \end{subfigure}
    \begin{subfigure}[t]{0.24\textwidth}
        \centering
        \includegraphics[width=1\textwidth]{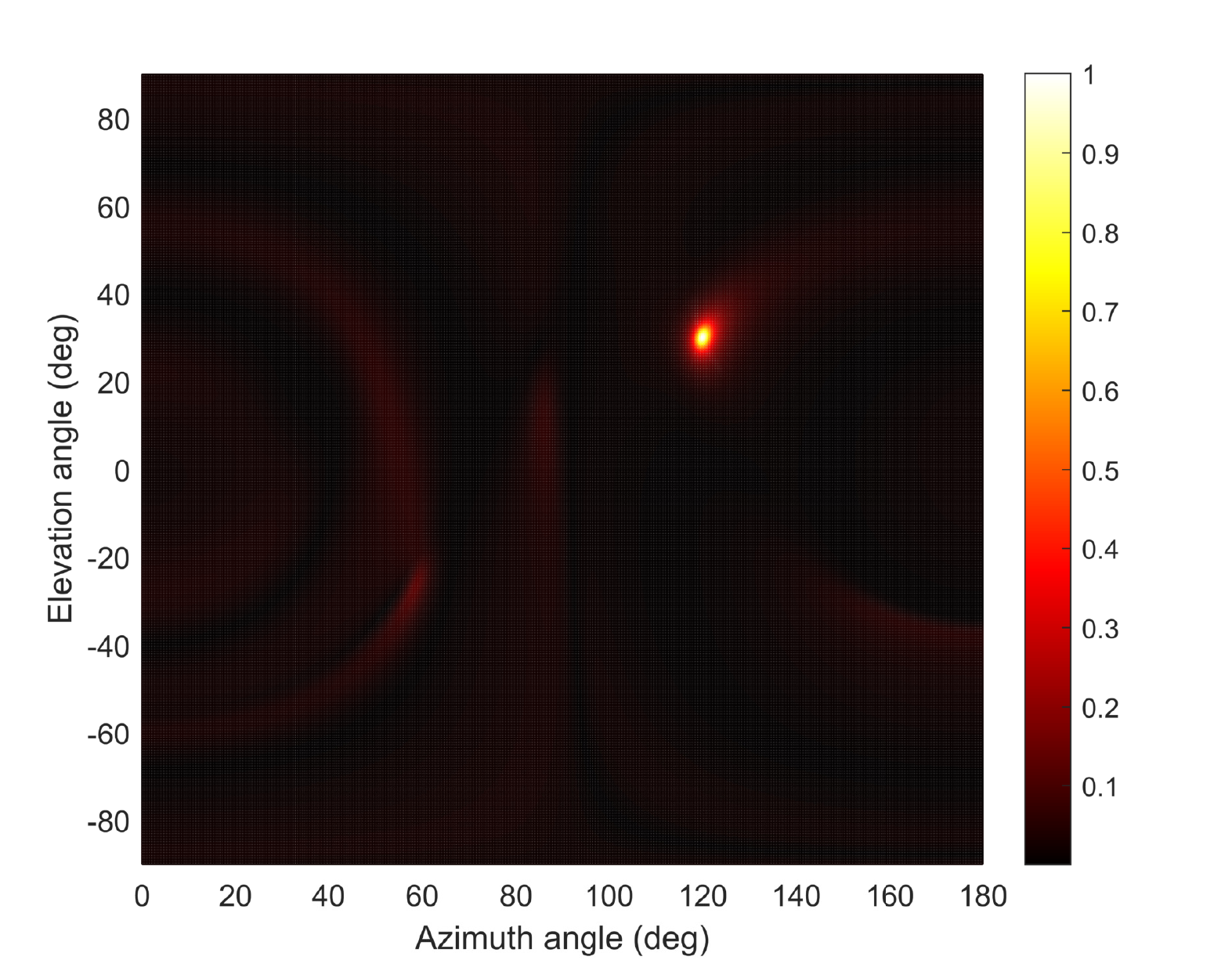}
        \caption{STARS, independent, $\overline{\gamma} = 20$dB.}
    \end{subfigure}
    \begin{subfigure}[t]{0.24\textwidth}
        \centering
        \includegraphics[width=1\textwidth]{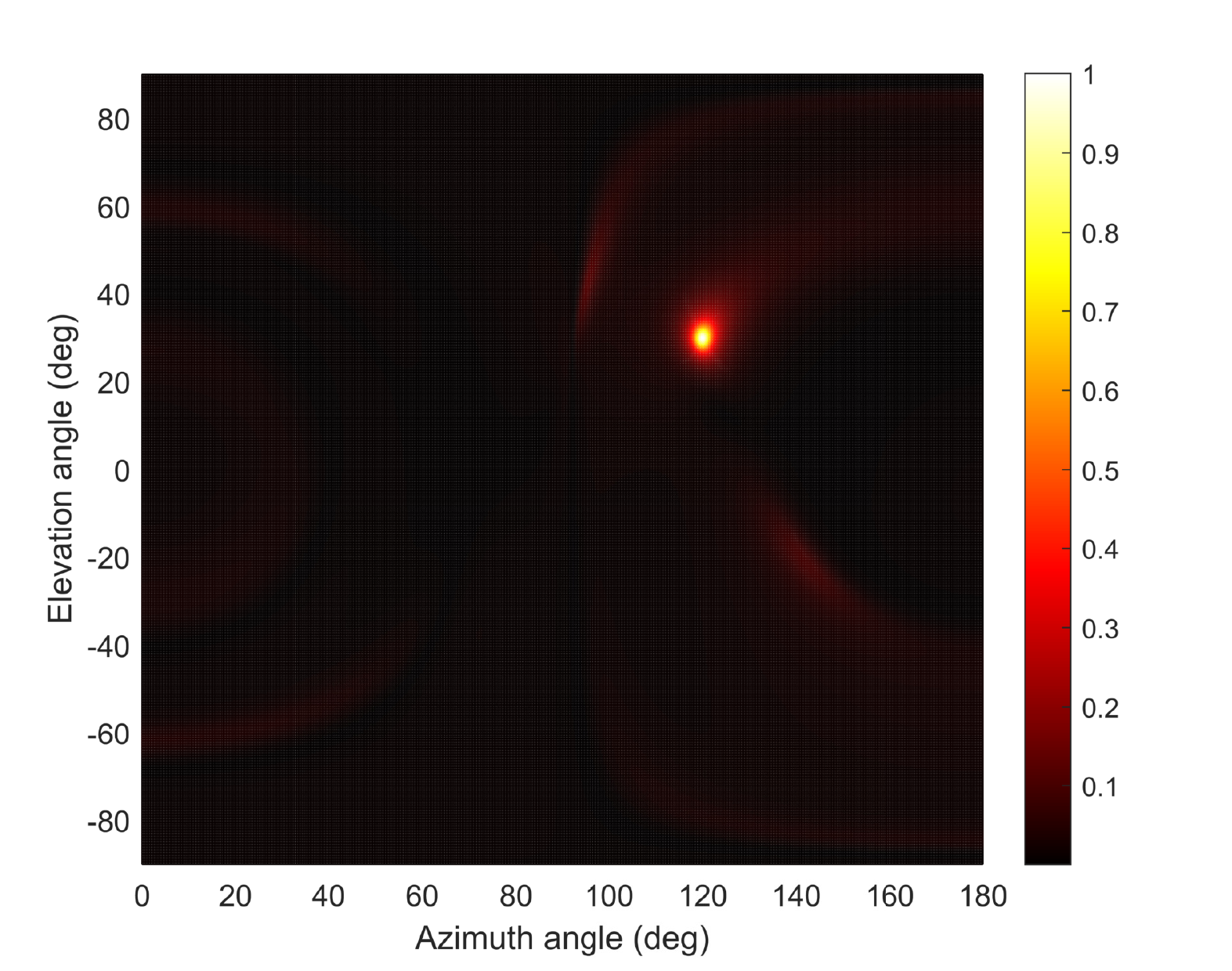}
        \caption{STARS, coupled, $\overline{\gamma} = 20$dB.}
    \end{subfigure}
    \begin{subfigure}[t]{0.24\textwidth}
        \centering
        \includegraphics[width=1\textwidth]{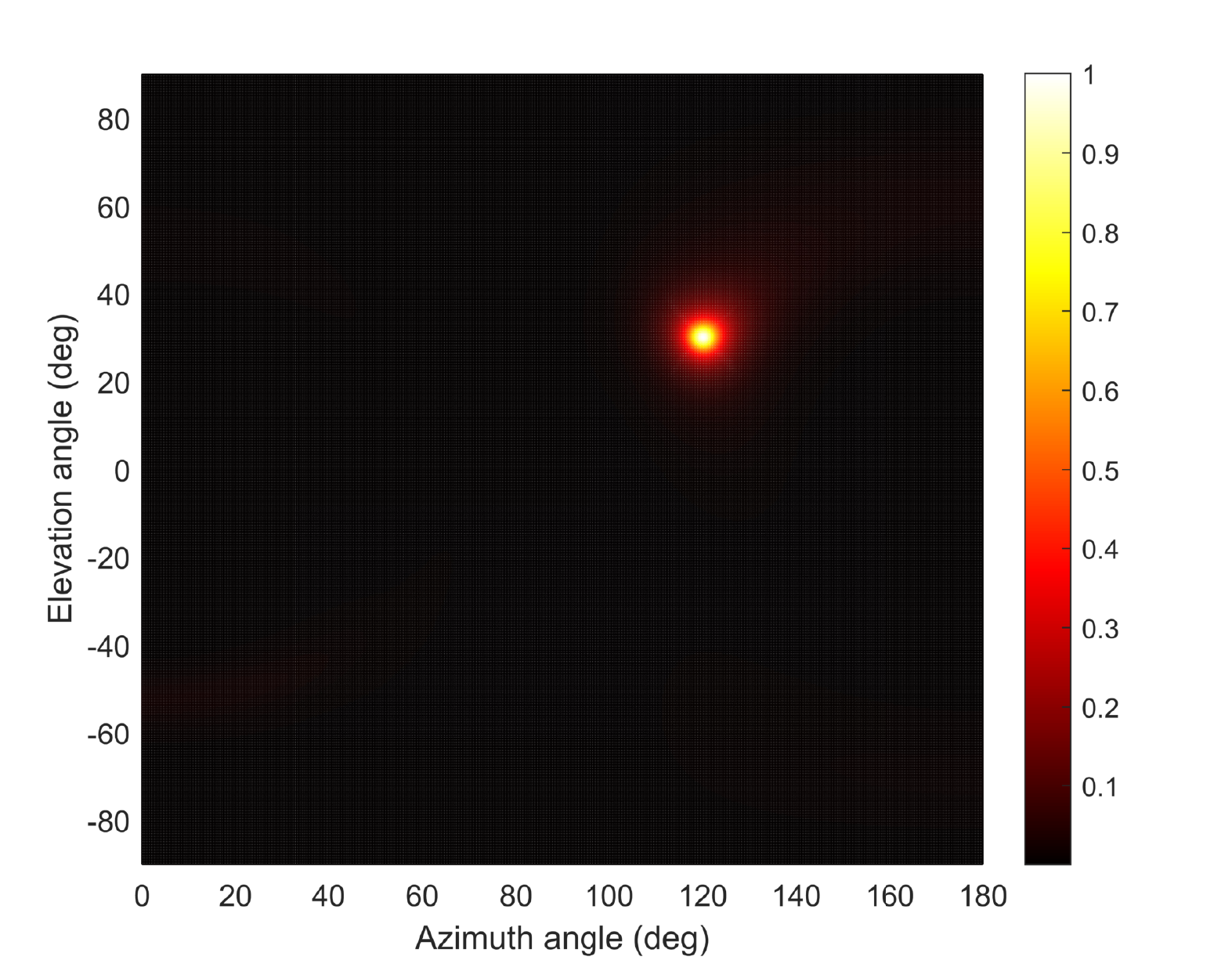}
        \caption{Conventional RIS, $\overline{\gamma} = 20$dB.}
    \end{subfigure}
    \begin{subfigure}[t]{0.24\textwidth}
        \centering
        \includegraphics[width=1\textwidth]{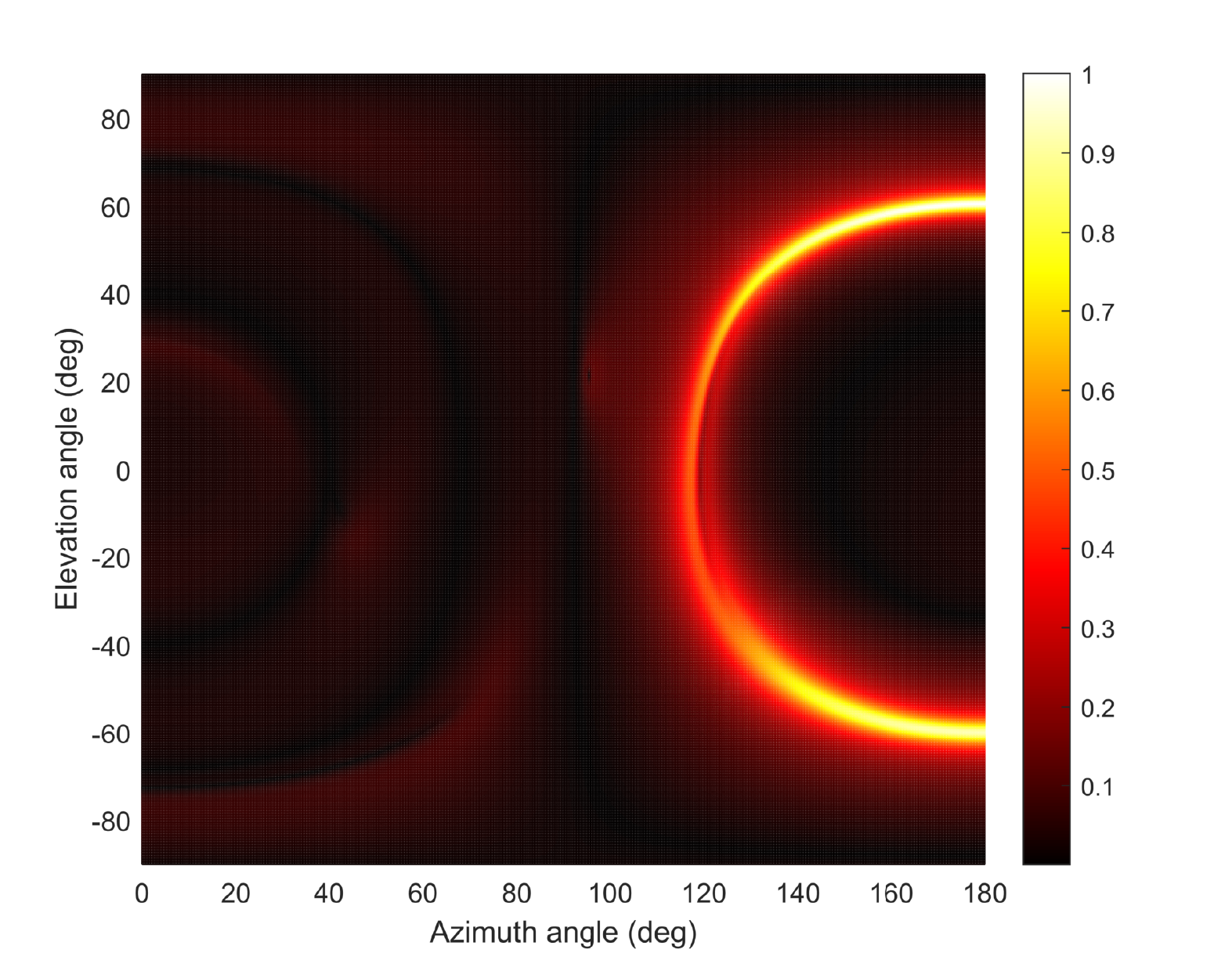}
        \caption{STARS, random, $\overline{\gamma} = 20$dB.}
    \end{subfigure}
    \caption{\small Normalized spectrum of MLE by different schemes for $N = 10$, $N_s = 5$, and $K=4$.}
    \label{fig:MLE}
\end{figure*}

\subsection{Root CRB Versus Number of Sensor Elements}
In Fig. \ref{fig:sensor}, we further studied the impact of the number of active sensor elements $N_s$ when $N=10$. We can see that all schemes are capable of achieving higher estimation accuracy when more sensor elements are installed. Similarly, the superior performance gain of STARS compared with conventional RIS is also demonstrated. When $\overline{\gamma}=0$dB, the coupled phase-shift model achieves almost the same performance as the independent phase-shift model. When $\overline{\gamma}=15$dB, there is an obvious performance gap between the two phase-shift models, which is gradually reduced as the number of sensor elements increases.

\subsection{Tradeoff Between Passive and Sensor Elements}
In Fig. \ref{fig:element_tradeoff}, to obtain more insights, we plot the root CRB versus the number of passive elements when the total number of the passive and sensor elements is fixed to be $N + N_s = 20$. As can be observed, the more passive elements do not necessarily lead to lower root CRB in this case, which implies that there is a tradeoff between the number of passive elements and the number of sensor elements. Furthermore, for both cases with $\overline{\gamma}=0$ dB and $\overline{\gamma} = 15$ dB, the best sensing performance is achieved when $N=14$ and $N_s = 6$. In other words, installing more passive elements can improve system performance more effectively than installing more sensor elements. This is because more passive elements provide more full-space DoFs for both communication and sensing, while more sensor elements only increase the dimension of the signals used for sensing.

\subsection{Spectrum of MLE}
To verify the effectiveness of optimizing the CRB, the practical MLE of DOAs $\phi_h$ and $\phi_v$ are investigated. Fig. \ref{fig:MLE} demonstrates the normalized spectrum of MLE obtained via the 2D search over a predefined fine angle grid. In the spectrum, the point with the highest value (the brightest point) is corresponding to the estimated DOAs. As can be observed, when $\overline{\gamma} = 0$dB, both independent and coupled phase-shift STARSs can achieve a point-like brightest region around $(120^\circ, 30^\circ)$ in the spectrum, which indicates the high accuracy of estimating DOAs. When $\overline{\gamma}$ increases to $20$dB, the brightest region achieved by STARS is still relatively small. However, when the conventional RIS is exploited, the brightest region becomes very large regardless of the value of $\overline{\gamma}$, which implies a low sensing resolution. Finally, we also demonstrate the spectrum obtained by the STARS with random reflection phase shifts. In this case, we can see that the estimation of DOAs is almost infeasible, which verifies the effectiveness of optimizing the CRB.

\section{Conclusion} \label{sec:conclusion}
A STARS-enabled ISAC system was investigated, where a novel sensing-at-STARS structure and a pair of efficient CRB optimization frameworks for both independent and coupled T\&R phase-shift models were proposed. Numerical results revealed the significant performance gain of CRB and 2D DOAs estimation achieved by the STARS over the conventional RIS, and the slight performance gap between the independent and coupled T\&R phase-shift models in the cases of low communication requirements and sufficient STARS elements. Moreover, compared with employing more sensor elements, installing more passive elements is more efficient in practice to enhance performance and reduce cost. This paper confirmed the effectiveness of employing STARS to simultaneously support high-quality sensing on one side and high-quality communication on the other. Consequently, in practical design, STARS is capable of breaking the physical blockage such as walls and windows, extending the legacy communication or sensing system, and carrying out the other function in a separate space. Finally, the potential of STARS to support dual functions on both sides is also foreseeable, which can be a promising direction for future research.

\section*{Appendix~A\\Maximum Likelihood Estimate of DOAs} \label{appendiex:MLE}
In this appendix, the MLE of DOAs is derived. To facilitate the estimation at the sensors, we assume that the transmit signal $\mathbf{X}$ and the channel $\mathbf{G}$ are known at the sensors, which can be obtained by the wired control link between the BS and the STARS. According to \eqref{eqn:signal_vector}, the vectorized signal over a coherent time block of length $L$ at the sensor can be rewritten as
\begin{equation}
    \mathbf{y}_s = \alpha \boldsymbol{\delta}(\phi_{h}, \phi_{v}) + \mathbf{n}_s,
\end{equation} 
where $\boldsymbol{\delta}(\phi_{h}, \phi_{v}) = \mathrm{vec}(\mathbf{b}(\phi_{h},\phi_{v}) \mathbf{a}^T(\phi_{h},\phi_{v}) \mathbf{\Theta}_r \mathbf{G} \mathbf{X})$.
It can be observed that $\mathbf{y}_s$ is a Gaussian vector with mean $\alpha \boldsymbol{\delta}(\phi_{h}, \phi_{v})$ and variance $\sigma_s \mathbf{I}_{N_\mathrm{S}L}$. Given parameters $\boldsymbol{\xi}$, the likelihood function of $\mathbf{y}_s$ is
\begin{equation}
    f_{\mathbf{y}_s}(\mathbf{y}_s; \boldsymbol{\xi}) = \frac{1}{\sqrt{ (\pi \sigma_s)^{N_{\mathrm{S}} L} }} \mathrm{exp}\left( -\frac{1}{\sigma^2} \| \mathbf{y}_s - \alpha \boldsymbol{\delta}(\phi_{h}, \phi_{v}) \|^2  \right).
\end{equation} 
Thus, the MLE of $\boldsymbol{\xi}$ is given by 
\begin{align} \label{eqn:MLE}
    \hat{\boldsymbol{\xi}}  &= \arg \max_{\boldsymbol{\xi}} f_{\mathbf{y}_s}(\mathbf{y}_s; \boldsymbol{\xi}) = \arg \min_{\boldsymbol{\xi}} \| \mathbf{y}_s - \alpha \boldsymbol{\delta}(\phi_{h}, \phi_{v}) \|^2.
\end{align}
According to \eqref{eqn:MLE}, for any given $\phi_{h}$ and $\phi_{v}$, $\alpha$ can be estimated as 
\begin{equation}
    \hat{\alpha} = \arg \min_{\alpha} \| \mathbf{y}_s - \alpha \boldsymbol{\delta}(\phi_{h}, \phi_{v}) \|^2 = \frac{\boldsymbol{\delta}^H(\phi_{h}, \phi_{v})}{ \| \boldsymbol{\delta}(\phi_{h}, \phi_{v}) \|^2 } \mathbf{y}_s.
\end{equation} 
With $\hat{\alpha}$ at hand, we have 
\begin{align}
    \| \mathbf{y}_s - \hat{\alpha} \boldsymbol{\delta}(\phi_{h}, \phi_{v}) \|^2 = \| \mathbf{y}_s \|^2 - \frac{|\boldsymbol{\delta}^H(\phi_{h}, \phi_{v}) \mathbf{y}_s|^2}{\|\boldsymbol{\delta}(\phi_{h}, \phi_{v})\|^2}.
\end{align}
Thus, the MLE of $\phi_{h}$ and $\phi_{v}$ is given by 
\begin{equation}
    (\hat{\phi}_{h}, \hat{\phi}_{v}) = \arg \max_{\phi_{h}, \phi_{v}} \frac{|\boldsymbol{\delta}^H(\phi_{h}, \phi_{v}) \mathbf{y}_s|^2}{\|\boldsymbol{\delta}(\phi_{h}, \phi_{v})\|^2},
\end{equation}
which can be obtained by exhaustively searching $\phi_{h}$ and $\phi_{v}$ on the fine grids of $[0, \pi]$ and $[-\frac{\pi}{2}, \frac{\pi}{2}]$, respectively.

\section*{Appendix~B\\Derivation of the Fisher Information Matrices} \label{appendiex:FIM}
It can be readily observed that $\mathbf{y}_s$ is a Gaussian observation with the distribution $\mathbf{y}_s \sim \mathcal{CN}(\mathbf{u}, \mathbf{R}_n)$, where $\mathbf{R}_n = \sigma_s^2 \mathbf{I}_{N_sL}$ represents the covariance matrix of $\mathbf{n}_s$. Then, following the same path in \cite[Appendix 3C]{kay1993fundamentals}, the element at the $\ell$-th row and the $p$-th column of $\mathbf{J}_{\boldsymbol{\xi}}$ can be calculated by
\begin{align} \label{eqn:FIM_point}
    [\mathbf{J}_{\boldsymbol{\xi}}]_{\ell,p} = &
    2 \mathrm{Re} \left\{ \frac{\partial \mathbf{u}^H}{\partial \xi_\ell} \mathbf{R}_n^{-1} \frac{\partial \mathbf{u}}{\partial \xi_p} \right\} + \mathrm{tr}\left( \mathbf{R}_n^{-1}  \frac{\partial \mathbf{R}_n}{\partial \xi_\ell} \mathbf{R}_n^{-1} \frac{\partial \mathbf{R}_n}{\partial \xi_p}  \right) \nonumber \\
    = &\frac{2}{\sigma_s^2} \mathrm{Re} \left\{ \frac{\partial \mathbf{u}^H}{\partial \xi_\ell} \frac{\partial \mathbf{u}}{\partial \xi_p} \right\},
\end{align} 
where $\xi_\ell$ denotes the $\ell$-th element of $\boldsymbol{\xi}$. Thus, we first derive the derivative of $\mathbf{u}$ with respect to the unknown parameters. By defining $\mathbf{B} = \mathbf{b}(\phi_{h},\phi_{v}) \mathbf{a}^T(\phi_{h},\phi_{v})$, we have
\begin{align}
    &\frac{\partial \mathbf{u}}{\partial \boldsymbol{\phi}} = [ \alpha \mathrm{vec}(\dot{\mathbf{B}}_{\phi_{h}} \mathbf{\Theta}_r \mathbf{G} \mathbf{X} ), \alpha \mathrm{vec}(\dot{\mathbf{B}}_{\phi_{v}} \mathbf{\Theta}_r \mathbf{G} \mathbf{X} ) ], \\ &\frac{\partial \mathbf{u}}{\partial \tilde{\boldsymbol{\alpha}}} = \mathrm{vec}(\mathbf{B} \mathbf{\Theta}_r \mathbf{G} \mathbf{X}) [1, j],
\end{align}
where
\begin{align}
    \dot{\mathbf{B}}_{\phi_{h}}  = &\frac{\partial \mathbf{B}}{\partial \phi_{h}} = \frac{\partial \mathbf{b}}{{\partial \phi_{h}}} \mathbf{a}^T
    + \mathbf{b} \frac{\partial \mathbf{a}^T}{\partial \phi_{h}} \nonumber \\ = &j \frac{2 \pi}{\lambda_c} \sin \phi_{h} \cos \phi_{v} \left(  \mathrm{diag}(\bar{\mathbf{r}}_{\mathit{X}}) \mathbf{b} \mathbf{a}^T + \mathbf{b} \mathbf{a}^T \mathrm{diag}(\mathbf{r}_{\mathit{X}}) \right). \\   
    \dot{\mathbf{B}}_{\phi_{v}} = &\frac{\partial \mathbf{B}}{\partial \phi_{v}} = \frac{\partial \mathbf{b}}{{\partial \phi_{v}}} \mathbf{a}^T
    + \mathbf{b} \frac{\partial \mathbf{a}^T}{\partial \phi_{v}} \nonumber \\
    = &j \frac{2 \pi}{\lambda_c} \cos \phi_{h} \sin \phi_{v} \left( \mathrm{diag}(\bar{\mathbf{r}}_{\mathit{X}}) \mathbf{b} \mathbf{a}^T + \mathbf{b} \mathbf{a}^T \mathrm{diag}(\mathbf{r}_{\mathit{X}}) \right) \nonumber \\
    &- j \frac{2 \pi}{\lambda_c} \cos \phi_{v} \mathbf{b} \mathbf{a}^T \mathrm{diag}(\mathbf{r}_{\mathit{Z}}),
\end{align}
In the above formulas, we drop $\phi_{h}$ and $\phi_{v}$ in $\mathbf{b}(\phi_{h}, \phi_{v})$ and $\mathbf{a}(\phi_{h}, \phi_{v})$ for notational convenience. Then, the matrix $\mathbf{J}_{\boldsymbol{\phi} \boldsymbol{\phi}}$ can be further partitioned as
\begin{equation}
    \mathbf{J}_{\boldsymbol{\phi} \boldsymbol{\phi}} = \begin{bmatrix}
        J_{\phi_{h} \phi_{h}} &J_{\phi_{h} \phi_{v}} \\
        J_{\phi_{h} \phi_{v}} & J_{\phi_{v} \phi_{v}}
    \end{bmatrix}.
\end{equation}
According to \eqref{eqn:FIM_point}, the entries $J_{\phi_l \phi_p}, \forall l, p \in \{h, v\}, $ of the matrix $\mathbf{J}_{\boldsymbol{\phi} \boldsymbol{\phi}}$
can be calculated as follows:
\begin{align} \label{eqn:FIM_element}
    J_{\phi_l \phi_p} & = \frac{2}{\sigma_s^2} \mathrm{Re} \left\{ \alpha^* \mathrm{vec} ( \dot{\mathbf{B}}_{\phi_l} \mathbf{\Theta}_r \mathbf{G} \mathbf{X})^H \alpha \mathrm{vec} ( \dot{\mathbf{B}}_{\phi_p} \mathbf{\Theta}_r \mathbf{G} \mathbf{X} ) \right\} \nonumber \\
    & =  \frac{2 |\alpha|^2 L }{\sigma_s^2}   \mathrm{Re} \left\{ \mathrm{tr}( \dot{\mathbf{B}}_{\phi_p} \mathbf{\Theta}_r \mathbf{G} \mathbf{R}_x \mathbf{G}^H \mathbf{\Theta}_r^H \dot{\mathbf{B}}_{\phi_l}^H  ) \right\}.
\end{align}
Next, the matrices $\mathbf{J}_{\boldsymbol{\phi} \tilde{\boldsymbol{\alpha}}}$ and $\mathbf{J}_{\tilde{\boldsymbol{\alpha}} \tilde{\boldsymbol{\alpha}}}$
are derived as follows:
\begin{align}
    \mathbf{J}_{\boldsymbol{\phi} \tilde{\boldsymbol{\alpha}}} &= \frac{2}{\sigma_s^2} \mathrm{Re} \left(  \begin{bmatrix}
        \alpha^* \mathrm{vec}(\dot{\mathbf{B}}_{\phi_{h}} \mathbf{\Theta}_r \mathbf{G} \mathbf{X} )^H \\
        \alpha^* \mathrm{vec}(\dot{\mathbf{B}}_{\phi_{v}} \mathbf{\Theta}_r \mathbf{G} \mathbf{X} )^H
    \end{bmatrix} \!\! \mathrm{vec}(\mathbf{B} \mathbf{\Theta}_r \mathbf{G} \mathbf{X}) [1, j] \right)  \nonumber \\
    & = \frac{2 L}{\sigma_s^2} \mathrm{Re} \left( \begin{bmatrix}
        \alpha^* \mathrm{tr}\left( \mathbf{B} \mathbf{\Theta}_r \mathbf{G} \mathbf{R}_x \mathbf{G}^H \mathbf{\Theta}_r^H \dot{\mathbf{B}}_{\phi_{h}}^H \right) \\
        \alpha^* \mathrm{tr}\left( \mathbf{B} \mathbf{\Theta}_r \mathbf{G} \mathbf{R}_x \mathbf{G}^H \mathbf{\Theta}_r^H \dot{\mathbf{B}}_{\phi_{v}}^H \right)
    \end{bmatrix} [1, j] \right), \\
    \mathbf{J}_{\tilde{\boldsymbol{\alpha}} \tilde{\boldsymbol{\alpha}}} &= \frac{2}{\sigma_s^2} \mathrm{Re} \left(  
        ( \mathrm{vec}(\mathbf{B} \mathbf{\Theta}_r \mathbf{G} \mathbf{X}) [1, j] )^H
        \mathrm{vec}(\mathbf{B} \mathbf{\Theta}_r \mathbf{G} \mathbf{X}) [1, j] \right) \nonumber \\ 
    &= \frac{2}{\sigma_s^2} \mathrm{Re} \left( [1, j]^H [1, j] \mathrm{vec}(\mathbf{B} \mathbf{\Theta}_r \mathbf{G} \mathbf{X})^H \mathrm{vec}(\mathbf{B} \mathbf{\Theta}_r \mathbf{G} \mathbf{X}) \right) \nonumber \\
    & = \frac{2 L}{\sigma_s^2} \mathbf{I}_2 \mathrm{tr} \left( \mathbf{B} \mathbf{\Theta}_r \mathbf{G} \mathbf{R}_x \mathbf{G}^H \mathbf{\Theta}_r^H \mathbf{B}^H \right).
\end{align}

\section*{Appendix~C\\Proof of Proposition \ref{proposition:optimal_phase}} \label{appendix:optimal_phase}

For any given $\tilde{\boldsymbol{\beta}}_t$ and $\tilde{\boldsymbol{\beta}}_r$, the optimization problem with respect to $\tilde{\mathbf{q}}_t$ and $\tilde{\mathbf{q}}_r$
is given by 
\begin{subequations}
    \begin{align}
        \min_{\tilde{\mathbf{q}}_t, \tilde{\mathbf{q}}_r} \quad &
        \mathrm{Re}( \tilde{\boldsymbol{\upsilon}}_t^H \tilde{\mathbf{q}}_t ) + \mathrm{Re}( \tilde{\boldsymbol{\upsilon}}_r^H \tilde{\mathbf{q}}_r )\\
        \label{constraint:appdeiex_1}
        \mathrm{s.t.} \quad & [\tilde{\mathbf{q}}_r]_n = j [\tilde{\mathbf{q}}_t]_n \text{ or } [\tilde{\mathbf{q}}_r]_n = -j [\tilde{\mathbf{q}}_t]_n, \forall n, \\
        & |[\tilde{\mathbf{q}}_t]_n| = 1, |[\tilde{\mathbf{q}}_r]_n| = 1, \forall n,
    \end{align}
\end{subequations} 
where constraint \eqref{constraint:appdeiex_1} is transformed from the coupled T\&R phase-shift constraint \eqref{constraint:tilde_theta_2}.
It can be observed this problem is a separable optimization problem. 
In other words, each pair of $(\tilde{q}_{t,n}, \tilde{q}_{r,n})$ can be optimized individually, and the related optimization problem is given by
\begin{subequations}
    \begin{align}
        \min_{\tilde{q}_{t,n}, \tilde{q}_{r,n}} \quad &
        \mathrm{Re}( \tilde{\upsilon}_{t,n}^* \tilde{q}_{t,n} ) + \mathrm{Re}( \tilde{\upsilon}_{r,n}^* \tilde{q}_{r,n} )\\
        \label{constraint:appdeiex_2}
        \mathrm{s.t.} \quad & \tilde{q}_{r,n} = j \tilde{q}_{t,n} \text{ or } \tilde{q}_{r,n} = -j \tilde{q}_{t,n}, \\
        & |\tilde{q}_{t,n}| = 1, |\tilde{q}_{r,n}| = 1.
    \end{align}
\end{subequations}
Substituting the constraint \eqref{constraint:appdeiex_2} into the objective function, the above problem can be further simplified as $\min_{|\tilde{q}_{t,n}| = 1}
\mathrm{Re}\left( (\tilde{\upsilon}_{t,n}^* \pm j \tilde{\upsilon}_{r,n}^*) \tilde{q}_{t,n} \right)$, 
where the factor $(\tilde{\upsilon}_{t,n}^* + j \tilde{\upsilon}_{r,n}^*)$ is for the case $\tilde{q}_{r,n} = j \tilde{q}_{t,n}$ 
and the factor $(\tilde{\upsilon}_{t,n}^* - j \tilde{\upsilon}_{r,n}^*)$ is for the case $\tilde{q}_{r,n} = -j \tilde{q}_{t,n}$.
It is clear that the optimal $\tilde{q}_{t,n}$ is given by
\begin{equation}
    \tilde{q}_{t,n} = e^{j \left(\pi - \angle (\tilde{\upsilon}_{t,n}^* \pm j \tilde{\upsilon}_{r,n}^*) \right)}.
\end{equation}
By matching the optimal $\tilde{q}_{t,n}$ with the cases of $\tilde{q}_{r,n} = j \tilde{q}_{t,n}$ and $\tilde{q}_{r,n} = -j \tilde{q}_{t,n}$, the solutions in \eqref{eqn:optimal_q}
can be obtained, which completes the proof.

\section*{Appendix~D\\Proof of Proposition \ref{proposition:optimal_amplitude}} \label{appendix:optimal_amplitude}
For any given $\tilde{\mathbf{q}}_t$ and $\tilde{\mathbf{q}}_r$, the optimization problem with respect to $\tilde{\boldsymbol{\beta}}_t$ and $\tilde{\boldsymbol{\beta}}_r$ is given by 
\begin{subequations}
    \begin{align}
        \min_{\tilde{\boldsymbol{\beta}}_t, \tilde{\boldsymbol{\beta}}_r} \quad &
        \mathrm{Re}( \breve{\boldsymbol{\upsilon}}_t^H \tilde{\boldsymbol{\beta}}_t  ) + \mathrm{Re}( \breve{\boldsymbol{\upsilon}}_r^H \tilde{\boldsymbol{\beta}}_r  )\\
        \mathrm{s.t.} \quad & \tilde{\beta}_{t,n}^2 + \tilde{\beta}_{r,n}^2 = 1, 0 \le \tilde{\beta}_{t,n}, \tilde{\beta}_{r,n} \le 1, \forall n,
    \end{align}
\end{subequations} 
which is also a separable problem. The separated problem related to $\tilde{\beta}_{t,n}$ and $\tilde{\beta}_{r,n}$ is given by 
\begin{subequations} \label{problem:separate_phase}
    \begin{align}
        \min_{\tilde{\beta}_{t,n}, \tilde{\beta}_{r,n}} \quad &
        g_n = \mathrm{Re}( \breve{\upsilon}_{t,n}^* \tilde{\beta}_{t,n}  ) + \mathrm{Re}( \breve{\upsilon}_{r,n}^* \tilde{\beta}_{r,n}  )\\
        \mathrm{s.t.} \quad & \tilde{\beta}_{t,n}^2 + \tilde{\beta}_{r,n}^2 = 1, 0 \le \tilde{\beta}_{t,n}, \tilde{\beta}_{r,n} \le 1,
    \end{align}
\end{subequations}
where $\tilde{\upsilon}_{i,n}, \forall i \in \{t,r\},$ denotes the $n$-th entry of $\breve{\boldsymbol{\upsilon}}_i$.  
Since $\tilde{\beta}_{i,n}, \forall i \in \{t,r\},$ is real-valued, the objective function can be further simplified as $a_n \tilde{\beta}_{t,n} + b_n \tilde{\beta}_{r,n}$,
where $a_n = |\breve{\upsilon}_{t,n}^*| \cos (\angle \breve{\upsilon}_{t,n}^*)$ and $b_n = |\breve{\upsilon}_{r,n}^*| \cos (\angle \breve{\upsilon}_{r,n}^*)$.
In this case, problem \eqref{problem:separate_phase} is essentially to find the minimum value of the real-valued function $a_n \tilde{\beta}_{t,n} + b_n \tilde{\beta}_{r,n}$ 
on the unit circle $\tilde{\beta}_{t,n}^2  + \tilde{\beta}_{r,n}^2 = 1$ in the first quadrant. To solve it, we transform it into the polar coordinate system 
by setting $\tilde{\beta}_{t,n} = \sin \omega_n$ and $\tilde{\beta}_{r,n} = \cos \omega_n$ with $\omega_n \in [0, \frac{1}{2}\pi]$. Then, the objective function can be rewritten as 
\begin{align}
    g_n &= a_n \sin \omega_n + b_n \cos \omega_n \nonumber \\
    &\overset{(a)}{=} \sqrt{a_n^2 + b_n^2} \left( \cos \psi_n \sin \omega_n + \sin \psi_n \cos \omega_n \right) \nonumber \\
    &= \sqrt{a_n^2 + b_n^2} \sin (\omega_n + \psi_n),
\end{align}
where the equality $(a)$ is achieved by defining $\cos \psi_n = \frac{a_n}{\sqrt{a_n^2 + b_n^2}}$ and $\sin \psi_n = \frac{b_n}{\sqrt{a_n^2 + b_n^2}}$. As a consequence, the problem is to find the minimum value of $\sin (\omega_n + \psi_n)$ with respect to $\omega_n$ in the interval $[0, \frac{1}{2}\pi]$. Thus, the optimal $\omega_n$ in \eqref{eqn:optimal_omega} can be readily obtained. Based on this, the optimal $\tilde{\beta}_{t,n}$ and $\tilde{\beta}_{r,n}$ can also be obtained, which completes the proof.

\bibliographystyle{IEEEtran}
\begin{spacing}{1.0}
\bibliography{reference/mybib}

\begin{thebibliography}{10}
\providecommand{\url}[1]{#1}
\csname url@samestyle\endcsname
\providecommand{\newblock}{\relax}
\providecommand{\bibinfo}[2]{#2}
\providecommand{\BIBentrySTDinterwordspacing}{\spaceskip=0pt\relax}
\providecommand{\BIBentryALTinterwordstretchfactor}{4}
\providecommand{\BIBentryALTinterwordspacing}{\spaceskip=\fontdimen2\font plus
\BIBentryALTinterwordstretchfactor\fontdimen3\font minus
  \fontdimen4\font\relax}
\providecommand{\BIBforeignlanguage}[2]{{%
\expandafter\ifx\csname l@#1\endcsname\relax
\typeout{** WARNING: IEEEtran.bst: No hyphenation pattern has been}%
\typeout{** loaded for the language `#1'. Using the pattern for}%
\typeout{** the default language instead.}%
\else
\language=\csname l@#1\endcsname
\fi
#2}}
\providecommand{\BIBdecl}{\relax}
\BIBdecl

\bibitem{wang2022star}
Z.~Wang, X.~Mu, and Y.~Liu, ``{STARS} enabled integrated sensing and
  communications: A {CRB} optimization perspective,'' in \emph{Proc. IEEE 96th
  Veh. Technol. Conf: VTC2022-Fall}, Sep. 2022, pp. 1--6.

\bibitem{liu2022integrated}
F.~Liu, Y.~Cui, C.~Masouros, J.~Xu, T.~X. Han, Y.~C. Eldar, and S.~Buzzi,
  ``Integrated sensing and communications: Towards dual-functional wireless
  networks for {6G} and beyond,'' \emph{{IEEE} J. Sel. Areas Commun.}, vol.~40,
  no.~6, pp. 1728--1767, Jun. 2012.

\bibitem{zhang2021enabling}
J.~A. Zhang, M.~L. Rahman, K.~Wu, X.~Huang, Y.~J. Guo, S.~Chen, and J.~Yuan,
  ``Enabling joint communication and radar sensing in mobile networks—a
  survey,'' \emph{{IEEE} Commun. Surv. Tut.}, vol.~24, no.~1, pp. 306--345, 1st
  Quart. 2021.

\bibitem{xu2022full}
M.~Xu, W.~C. Ng, W.~Y.~B. Lim, J.~Kang, Z.~Xiong, D.~Niyato, Q.~Yang, X.~S.
  Shen, and C.~Miao, ``A full dive into realizing the edge-enabled {Metaverse}:
  Visions, enabling technologies, and challenges,'' \emph{{IEEE} Commun. Surv.
  Tut.}, early access, Nov. 2022, doi:10.1109/COMST.2022.3221119.

\bibitem{liu2021reconfigurable}
Y.~Liu, X.~Liu, X.~Mu, T.~Hou, J.~Xu, M.~Di~Renzo, and N.~Al-Dhahir,
  ``Reconfigurable intelligent surfaces: Principles and opportunities,''
  \emph{{IEEE} Commun. Surv. Tut.}, vol.~23, no.~3, pp. 1546--1577, 3rd Quart.
  2021.

\bibitem{liu2021star}
Y.~Liu, X.~Mu, J.~Xu, R.~Schober, Y.~Hao, H.~V. Poor, and L.~Hanzo, ``{STAR}:
  Simultaneous transmission and reflection for 360° coverage by intelligent
  surfaces,'' \emph{{IEEE} Wireless Commun.}, vol.~28, no.~6, pp. 102--109,
  Dec. 2021.

\bibitem{mu2021simultaneously}
X.~Mu, Y.~Liu, L.~Guo, J.~Lin, and R.~Schober, ``Simultaneously transmitting
  and reflecting {(STAR) RIS} aided wireless communications,'' \emph{{IEEE}
  Trans. Wireless Commun.}, vol.~21, no.~5, pp. 3083--3098, May 2022.

\bibitem{liu2018mu}
F.~Liu, C.~Masouros, A.~Li, H.~Sun, and L.~Hanzo, ``{MU-MIMO} communications
  with {MIMO} radar: From co-existence to joint transmission,'' \emph{{IEEE}
  Trans. Wireless Commun.}, vol.~17, no.~4, pp. 2755--2770, Apr. 2018.

\bibitem{liu2020joint}
X.~Liu, T.~Huang, N.~Shlezinger, Y.~Liu, J.~Zhou, and Y.~C. Eldar, ``Joint
  transmit beamforming for multiuser {MIMO} communications and {MIMO} radar,''
  \emph{{IEEE} Trans. Signal Process.}, vol.~68, pp. 3929--3944, Jun. 2020.

\bibitem{pritzker2022transmit}
J.~Pritzker, J.~Ward, and Y.~C. Eldar, ``Transmit precoder design approaches
  for dual-function radar-communication systems,'' \emph{arXiv preprint
  arXiv:2203.09571}, 2022.

\bibitem{zhang2022holographic}
H.~Zhang, H.~Zhang, B.~Di, M.~Di~Renzo, Z.~Han, H.~V. Poor, and L.~Song,
  ``Holographic integrated sensing and communication,'' \emph{{IEEE} J. Sel.
  Areas Commun.}, vol.~40, no.~7, pp. 2114--2130, Jul. 2022.

\bibitem{chen2022generalized}
L.~Chen, Z.~Wang, Y.~Du, Y.~Chen, and F.~R. Yu, ``Generalized transceiver
  beamforming for {DFRC} with {MIMO} radar and {MU-MIMO} communication,''
  \emph{{IEEE} J. Sel. Areas Commun.}, vol.~40, no.~6, pp. 1795--1808, Jun.
  2022.

\bibitem{ouyang2022performance}
C.~Ouyang, Y.~Liu, and H.~Yang, ``Performance of downlink and uplink integrated
  sensing and communications ({ISAC}) systems,'' \emph{{IEEE} Wireless Commun.
  Lett.}, vol.~11, no.~9, pp. 1850--1854, Sep. 2022.

\bibitem{liu2021cramer}
F.~Liu, Y.-F. Liu, A.~Li, C.~Masouros, and Y.~C. Eldar, ``{Cram{\'e}r-Rao}
  bound optimization for joint radar-communication beamforming,'' \emph{{IEEE}
  Trans. Signal Process.}, vol.~70, pp. 240--253, Dec. 2021.

\bibitem{hua2022mimo}
H.~Hua, X.~Song, Y.~Fang, T.~X. Han, and J.~Xu, ``{MIMO} integrated sensing and
  communication with extended target: {CRB-Rate} tradeoff,'' in \emph{Proc.
  {IEEE} Global Commun. Conf. ({GLOBECOM})}, Dec. 2022, pp. 1--6.

\bibitem{huang2019reconfigurable}
C.~Huang, A.~Zappone, G.~C. Alexandropoulos, M.~Debbah, and C.~Yuen,
  ``Reconfigurable intelligent surfaces for energy efficiency in wireless
  communication,'' \emph{{IEEE} Trans. Wireless Commun.}, vol.~18, no.~8, pp.
  4157--4170, Aug. 2019.

\bibitem{wu2019intelligent}
Q.~Wu and R.~Zhang, ``Intelligent reflecting surface enhanced wireless network
  via joint active and passive beamforming,'' \emph{{IEEE} Trans. Wireless
  Commun.}, vol.~18, no.~11, pp. 5394--5409, Nov. 2019.

\bibitem{yu2021irs}
X.~Yu, D.~Xu, D.~W.~K. Ng, and R.~Schober, ``{IRS}-assisted green communication
  systems: Provable convergence and robust optimization,'' \emph{{IEEE} Trans.
  Commun.}, vol.~69, no.~9, pp. 6313--6329, Sep. 2021.

\bibitem{zhang2022metaradar}
H.~Zhang, H.~Zhang, B.~Di, K.~Bian, Z.~Han, and L.~Song, ``{MetaRadar}:
  Multi-target detection for reconfigurable intelligent surface aided radar
  systems,'' \emph{{IEEE} Trans. Wireless Commun.}, vol.~21, no.~9, pp.
  6994--7010, Sep. 2022.

\bibitem{shao2022target}
X.~Shao, C.~You, W.~Ma, X.~Chen, and R.~Zhang, ``Target sensing with
  intelligent reflecting surface: Architecture and performance,'' \emph{{IEEE}
  J. Sel. Areas Commun.}, vol.~40, no.~7, pp. 2070--2084, Mar. 2022.

\bibitem{song2022intelligent}
X.~Song, J.~Xu, F.~Liu, T.~X. Han, and Y.~C. Eldar, ``Intelligent reflecting
  surface enabled sensing: {Cram{\'e}r-Rao} lower bound optimization,'' in
  \emph{Proc. {IEEE} Globecom Workshops ({GC} {W}kshps)}, Dec. 2022, pp. 1--6.

\bibitem{esmaeilbeig2022cramer}
Z.~Esmaeilbeig, K.~V. Mishra, A.~Eamaz, and M.~Soltanalian, ``{Cram{\'e}r--rao}
  lower bound optimization for hidden moving target sensing via
  multi-{IRS}-aided radar,'' \emph{{IEEE} Signal Process. Lett.}, vol.~29, pp.
  2422--2426, Nov. 2022.

\bibitem{sankar2022beamforming}
R.~Sankar, S.~P. Chepuri, and Y.~C. Eldar, ``Beamforming in integrated sensing
  and communication systems with reconfigurable intelligent surfaces,''
  \emph{arXiv preprint arXiv:2206.07679}, 2022.

\bibitem{liu2022joint}
R.~Liu, M.~Li, Y.~Liu, Q.~Wu, and Q.~Liu, ``Joint transmit waveform and passive
  beamforming design for {RIS}-aided {DFRC} systems,'' \emph{{IEEE} J. Sel.
  Topics Signal Process.}, vol.~16, no.~5, pp. 995--1010, Aug. 2022.

\bibitem{xing2022passive}
Z.~Xing, R.~Wang, and X.~Yuan, ``Joint active and passive beamforming design
  for reconfigurable intelligent surface enabled integrated sensing and
  communication,'' \emph{arXiv preprint arXiv:2206.00525}, 2022.

\bibitem{wang2022joint}
X.~Wang, Z.~Fei, J.~Huang, and H.~Yu, ``Joint waveform and discrete phase shift
  design for {RIS}-assisted integrated sensing and communication system under
  {Cram{\'e}r-Rao} bound constraint,'' \emph{{IEEE} Trans. Veh. Technol.},
  vol.~71, no.~1, pp. 1004--1009, Jan. 2022.

\bibitem{xu2022star}
J.~Xu, Y.~Liu, X.~Mu, R.~Schober, and H.~V. Poor, ``{STAR-RIS}s: A correlated
  {T\&R} phase-shift model and practical phase-shift configuration
  strategies,'' \emph{{IEEE} J. Sel. Topics Signal Process.}, vol.~16, no.~5,
  pp. 1097--1111, Aug. 2022.

\bibitem{manikas2004differential}
A.~Manikas, \emph{Differential geometry in array processing}.\hskip 1em plus
  0.5em minus 0.4em\relax London, UK: Imperial College Press, 2004.

\bibitem{bekkerman2006target}
I.~Bekkerman and J.~Tabrikian, ``Target detection and localization using {MIMO}
  radars and sonars,'' \emph{{IEEE} Trans. Signal Process.}, vol.~54, no.~10,
  pp. 3873--3883, Oct. 2006.

\bibitem{wu2021channel}
C.~Wu, C.~You, Y.~Liu, X.~Gu, and Y.~Cai, ``Channel estimation for
  {STAR-RIS}-aided wireless communication,'' \emph{{IEEE} Commun. Lett.},
  vol.~26, no.~3, pp. 652--656, Mar. 2022.

\bibitem{kay1993fundamentals}
S.~M. Kay, \emph{Fundamentals of statistical signal processing: estimation
  theory}.\hskip 1em plus 0.5em minus 0.4em\relax Englewood Cliffs, NJ:
  Prentice-Hall, 1993.

\bibitem{boyd2004convex}
S.~Boyd, S.~P. Boyd, and L.~Vandenberghe, \emph{Convex optimization}.\hskip 1em
  plus 0.5em minus 0.4em\relax Cambridge, U.K.: Cambridge university press,
  2004.

\bibitem{zhang2005schur}
F.~Zhang, \emph{The Schur Complement and Its Applications}.\hskip 1em plus
  0.5em minus 0.4em\relax Berlin, Germany: Springer-Verlag, 2005.

\bibitem{shi2020penalty}
Q.~Shi and M.~Hong, ``Penalty dual decomposition method for nonsmooth nonconvex
  optimization—{Part I}: Algorithms and convergence analysis,'' \emph{{IEEE}
  Trans. Signal Process.}, vol.~68, pp. 4108--4122, Jun. 2020.

\bibitem{luo2010semidefinite}
Z.-Q. Luo, W.-K. Ma, A.~M.-C. So, Y.~Ye, and S.~Zhang, ``Semidefinite
  relaxation of quadratic optimization problems,'' \emph{{IEEE} Signal Process.
  Mag.}, vol.~27, no.~3, pp. 20--34, May 2010.

\bibitem{razaviyayn2013unified}
M.~Razaviyayn, M.~Hong, and Z.-Q. Luo, ``A unified convergence analysis of
  block successive minimization methods for nonsmooth optimization,''
  \emph{SIAM J. Optim.}, vol.~23, no.~2, pp. 1126--1153, 2013.

\bibitem{toh2008inexact}
K.-C. Toh, ``An inexact primal--dual path following algorithm for convex
  quadratic {SDP},'' \emph{Math. Programm.}, vol. 112, no.~1, pp. 221--254,
  2008.

\bibitem{cvx}
M.~Grant and S.~Boyd, ``{CVX}: Matlab software for disciplined convex
  programming, version 2.1,'' \url{http://cvxr.com/cvx}, Mar. 2014.

\end{thebibliography}
\end{spacing}

\end{document}